\definecolor{mycommentcolor}{HTML}{4f9739}
\algrenewcommand\algorithmiccomment[1]{\hfill\textcolor{mycommentcolor}{\(\triangleright\) #1}}
\newlength{\inlineheight}
\algnewcommand{\LineComment}[1]{\State \textcolor{mycommentcolor}{\(\triangleright\) #1}}
\title{Test-Time Compute Games}
\author{Ander Artola Velasco$^{\S}$}
\author{Dimitrios Rontogiannis$^{\S}$}
\author{Stratis Tsirtsis$^{\dagger}$}
\author{Manuel~Gomez-Rodriguez$^{\S}$}
\affil{$^{\S}$Max Planck Institute for Software Systems, Kaiserslautern, Germany \\
\{avelasco, drontogi, manuel\}@mpi-sws.org}
\affil{$^{\dagger}$Hasso Plattner Institute, Potsdam, Germany \\ stratis.tsirtsis@hpi.de}
\date{}
\begin{document}

\maketitle

\begin{abstract}
Test-time compute has emerged as a promising strategy to enhance the reasoning abilities of large language models (LLMs).
%
However, this strategy has in turn increased how much users pay cloud-based providers offering LLM-as-a-service, since providers charge users for the amount of test-time compute they use to generate an output.
%
In our work, we show that the market of LLM-as-a-service is socially ineffi\-cient---providers have a financial incentive to increase the amount of test-time compute, even if this increase contributes little to the quality of the outputs.
%
To address this inefficiency, we introduce a reverse second-price auction mechanism where providers bid their offered price and (expected) quality for the opportunity to serve a user, and users pay proportionally to the marginal value generated by the winning provider relative to the second-highest bidder.
%
To illustrate and complement our theoretical results, we conduct experiments with multiple instruct models from the \texttt{Llama} and \texttt{Qwen} families, as well as reasoning models distilled from \texttt{DeepSeek-R1}, on benchmark datasets covering mathematics, science, and open-ended question answering.
\end{abstract}

\section{Introduction}
\label{sec:intro}
The massive computational resources required to run large language models (LLMs) have led many users to rely on a growing market of cloud-based service providers that offer LLM-as-a-service. 
%
A key driver of this computational demand is the use of test-time compute (TTC)---additional computations performed by an LLM at inference time to improve its performance using techniques such as chain-of-thought~\citep{wei2022chain}, tree-of-thought~\citep{yao2023tree}, and best-of-n sampling~\citep{chow2024inference}. 
%
%
Importantly, in the 
LLM-as-a-service market, the cost of these additional computations ultimately falls on the users,\footnote{\label{footnote:apis}\url{https://ai.google.dev/gemini-api/docs/pricing}, \url{https://openai.com/api/pricing/}, \url{https://www.claude.com/pricing}.} who pay for all tokens generated by a model during inference, including intermediate tokens that are not visible in the final outputs observed by the users.
%

In this context, providers have the flexibility to (dynamically) adjust the amount of test-time compute an LLM uses to respond to a user'{}s query.\footnote{For example, GPT-5 uses real-time routing to adjust reasoning depth, and thus test-time compute. See  \url{https://openai.com/index/introducing-gpt-5/}.} However, in a competitive market, this flexibility creates a new strategic dimension beyond how providers price their services.
%
%
In particular, providers can strategically increase the amount of test-time compute allocated to a user'{}s query to maximize profit, even if the additional test-time compute contributes little to the quality of the response. 
Consider a simple illustrative example where, for a given set of queries with verifiable ground truth (\eg, diagnosing patients based on their medical records), two different providers can run their LLMs in either a low-TTC mode (\eg, standard generation) or a high-TTC mode (\eg, chain-of-thought), with average accuracies and generation costs for the providers given by:
\begin{table}[H]
\centering

\begin{minipage}{0.45\textwidth}
\centering
\textbf{Provider 1}\\[4pt]
\begin{tabular}{lcc}
\toprule
{}        & Avg.\ accuracy & Avg.\ gen. cost \\
\midrule
Low TTC   & 70\%           & \$0.25          \\
High TTC  & 90\%           & \$1             \\
\bottomrule
\end{tabular}
\end{minipage}
\hfill
\begin{minipage}{0.45\textwidth}
\centering
\textbf{Provider 2}\\[4pt]
\begin{tabular}{lcc}
\toprule
{}        & Avg.\ accuracy & Avg.\ gen. cost \\
\midrule
Low TTC   & 50\%           & \$0.5           \\
High TTC  & 95\%           & \$10            \\
\bottomrule
\end{tabular}
\end{minipage}

\end{table}
If both providers price their models with a $25\%$ profit margin over their generation costs, it is easy to see that a user 
who values each percentage point of accuracy as $\$0.02$ would always select the first provider, who offers them higher value (\ie, $\$0.02\times\text{accuracy} - \text{price}$) independently of the TTC mode chosen by the second provider.
%
However, to increase their profit, the first provider is financially incentivized to choose the high-TTC mode, even though the low-TTC mode would maximize the sum of the user value and provider profit, and would therefore be socially optimal.\footnote{We formally introduce user value and provider utilities in Section~\ref{sec:model}. For the exact derivations related to this example, refer to Appendix~\ref{app:example}.}

In our work, we formally characterize the above gap and show that, when providers act to maximize profit in a competitive market, they cannot, in general, be expected to use the socially optimal amount of test-time compute.
%
%
Further, we conceptualize and analyze a forward-looking market
where users and providers participate in a reverse second-price auction, and show that, in contrast to the current pay-for-compute market, this auction mechanism is guaranteed to be socially optimal.
%

%

\xhdr{Our contributions}
%
%
We start by modeling the current LLM-as-a-service market as a normal-form game~\citep{NISAN} between $N$ providers who choose their test-time compute allocations to maximize profit, and users pay according to the price per compute set by their selected provider. Building on this characterization, we make the following contributions:

\begin{enumerate}

    \item We show that the LLM-as-a-service market 
    admits a pure Nash equilibrium
    and that, under natural assumptions, 
    the market providers are guaranteed to reach this equilibrium in finite time.
    
    \item We characterize the test-time compute used by providers in an LLM-as-a-service market at the Nash equilibrium and show that, in general, it is not socially optimal.

    \item We introduce a reverse second-price auction mechanism where providers bid with their offered price and (expected) quality for the opportunity to serve a user, and the user's price is determined by the marginal value generated by the winning provider relative to the second-highest bidder.
    
    \item We show that the above auction mechanism is socially optimal, and it guarantees that users always obtain at least their second-best achievable value for the task and providers secure non-negative profits.

\end{enumerate}

To complement our theoretical results,  we conduct experiments with multiple instruct models from the \texttt{Llama} and \texttt{Qwen} families, as well as reasoning models distilled from \texttt{DeepSeek-R1}, on math, science, and open-ended question answering benchmark datasets.
Our results show that the existing pay-for-compute market is up to $19\%$ (socially) inefficient, as measured by the Price of Anarchy.\footnote{The code for our experiments is publicly available at \url{https://github.com/Human-Centric-Machine-Learning/strategic-ttc}.}

%
\xhdr{Further related work}
%
%
Our work connects to the rapidly growing literature on the economic and strategic aspects of machine learning systems~\citep{KleinbergMono,tsirtsisStrategic,einav2025a,saig2025evolutionary,rosenfeld2025machinelearningmaximizewelfare,gauthierstatistical}, and more specifically, LLMs-as-a-service~\citep{duetting2024mechanism,la2024language,mahmood2024pricing,laufer2024fine,cai2025are,saig2024incentivizing,bergemann2025economicslargelanguagemodels,velasco2025llmoverchargingyoutokenization, velasco2025auditingpaypertokenlargelanguage, iyer2025sellserviceuncertainoutcomes,cao2025pay,sun2025coincountinginvisiblereasoning}.
%
%
Therein, most closely related to ours is a strand of recent works that analyzes the incentives of providers of LLM-as-a-service when they act strategically. Specifically,~\citet{velasco2025llmoverchargingyoutokenization,velasco2025auditingpaypertokenlargelanguage} study how providers can overcharge users by misreporting the number of tokens used by the LLM to encode a given text, and introduce an auditing method to detect such behavior. In a similar vein,~\citet{sun2025coincountinginvisiblereasoning} and~\citet{cao2025pay} consider a setting where a provider injects additional reasoning tokens to inflate the user's payment, while~\cite{saig2024incentivizing} and~\citet{cai2025are} study a scenario in which providers have a financial incentive to be unfaithful to users by deploying cheaper-to-run models to maximize profit. 

%
%
Further, our work relates to a line of work that analyzes the 
capabilities enabled by test-time compute~\citep{wen2025parathinkernativeparallelthinking, wu2025lessunderstandingchainofthoughtlength, bi2025forestofthoughtscalingtesttimecompute, liu20251bllmsurpass405b, snell2024scalingllmtesttimecompute}. While most of the literature has focused primarily on performance aspects, there is increasing interest in analyzing the economic implications of test-time compute. In particular, recent works by~\citet{wang2024reasoning},~\citet{zellinger2025economicevaluationllms} and~\citet{erol2025cost} argue for incorporating the substantial costs of test-time compute into LLM evaluation and ranking, while~\citet{wan2025beacon},~\citet{komiyama2025best} and~\citet{kalayci2025optimal} develop methods for selecting test-time compute resources once performance gains become marginal under majority voting and best-of-n.
However, our work is the first to show that, in the LLM-as-a-service market, providers are incentivized to strategically decide about the amount of test-time compute used by the LLMs they serve, and to examine the resulting effects on the social welfare. 

%
%
Finally, our work also draws on the classic literature in auctions and mechanism design~\citep{krishna2009auction,milgrom2004putting, Milgrom1982,Che1993,Nisan2007-lx}, which has studied optimal auction design~\citep{kersten2014multiattribute, perry2002efficient, takanashi2019efficiency, myerson1981optimal,riley1981optimal, maskin1984optimal,bhawalkar2011welfare,syrgkanis2013composable, Bergemann_data}, mechanisms to incentivize truthful bidding~\citep{bar2002incentive, ledyard1987incentive,vickrey1961counterspeculation,clarke1971multipart,groves1973incentives,Akbarpour2020}, and the (in)efficiency of various markets when players act in their own interest~\citep{koutsoupias1999worst,roughgarden2002bad,roughgarden2015intrinsic,hartline2023robust}. 
%
%
Within this literature, the mechanism we propose for test-time compute markets is closely related to scoring auctions studied in the context of procurement contracts and supplier selection~\citep{Che1993,roughgarden2017price,CAI2026130083}, where bidders submit offers specifying not only a price but also additional attributes of their offer (\eg, quality), and the seller uses a scoring rule to determine the winner. In many settings, such scoring auctions have been shown to outperform first-price auctions, achieving higher efficiency~\citep{AskerScore,awaya2025quality}; yet, to the best of our knowledge, they have not been considered in the context of LLM-as-a-service.

\section{A Game-Theoretic Model of Test-Time Compute}
\label{sec:model}
We model the LLM-as-a-service market as a normal-form (test-time compute) game $\Gcal$~\citep{NISAN} between $N$ providers who simultaneously deploy their own pre-trained model and compete for user demand by strategically selecting the level of test-time compute they use.
In practice, each provider $i \in [N]$ can select different levels of test-time compute for different types of tasks (\eg, low for fact retrieval and high for mathematical reasoning or coding), and their choice of compute for one task does not restrict their choices for other tasks. 
As a result, providers compete to serve each task independently, 
and each task defines an independent game between providers. 

More formally, given a task $\Tcal$ characterized by a set of queries $\Qcal$, each provider can select a test-time compute level $\theta\in\Theta$ for the model they serve from a finite set $\Theta$ of test-time compute levels.\footnote{For ease of exposition, we consider that all providers share the same action space $\Theta$.}
For instance, in test-time compute methods such as majority voting~\citep{wang2023selfconsistency} and best-of-n sampling~\citep{chow2024inference}, each level of test-time compute $\theta$  naturally corresponds to the number of generated samples and, in chain-of-thought~\citep{wei2022chain}, it may correspond to different ``reasoning effort'' levels.
%
%
Then, given a test-time compute level $\theta$, the model served by provider $i$ produces an average output quality $q_i(\theta)\in\mathbb{R}_{+}$ for the task $\Tcal$, 
which may scale differently across providers, \ie, $q_i \neq q_j$ for $i\neq j$. 
In practice, the average output quality $q_i$ may correspond to average accuracy across queries if the task $\Tcal$ has a verifiable ground truth, or to an average user satisfaction score if the task $\Tcal$ is open-ended.
Moreover, there is extensive empirical evidence~\citep{wen2025parathinkernativeparallelthinking, wu2025lessunderstandingchainofthoughtlength, bi2025forestofthoughtscalingtesttimecompute} that the quality $q_i$ typically increases with the level of test-time compute, \ie,
\begin{equation}
    q_i(\theta) \leq  q_i(\theta')\quad \text{for}\quad \theta \prec \theta',
\end{equation}
where $\prec$ denotes a total order in the space $\Theta$ describing whether $\theta$ corresponds to a level of compute that is on average lower than $\theta'$ (\eg, as measured by the average number of tokens used by the model per query). 

Although provider $i$ can increase the average quality $q_i(\theta)$ they offer to users by selecting a higher level of test-time compute $\theta$, this comes at the expense of a higher (average) generation cost $c_i(\theta) \in \mathbb{R}_{+}$ for the provider.
%
As a result, to compensate for higher generation costs, the price $p_i(\theta)$ charged by each provider $i$ increases with the level of test-time compute $\theta$, \ie, $p_i(\theta) < p_i(\theta')$ for $\theta \prec \theta'$ (see footnote~\ref{footnote:apis}).
Here, since each provider $i$ is unlikely to set a price per unit of test-time compute lower than the cost,\footnote{In practice, a provider's generation cost and price for a model'{}s output is proportional to the number of tokens it used to generate the output.}
we assume any increase in test-time compute leads to a positive marginal profit for the provider, \ie,
\begin{equation}\label{eq:increasing_utility}
        p_i(\theta) -  c_i(\theta) < p_i(\theta') -  c_i(\theta') \quad \text{for}\quad \theta \prec \theta'.
\end{equation}

Further, given a test-time compute level $\theta$, each provider $i$ offers a certain average value $V_i(\theta)$ to users for the task $\Tcal$, which is given by the difference between the average output quality $q_i(\theta)$ produced by their model and the average price $p_i(\theta)$ charged by the provider for task $\Tcal$. 
That is,
\begin{equation}\label{eq:user-utility}
    V_i(\theta) = q_i(\theta) - p_i(\theta),
\end{equation}
where we assume no ties in the values offered by different providers, \ie, $V_i(\theta) \neq V_j(\theta')$ for all $i\neq j$ and all $\theta, \theta' \in \Theta$, since in practice both $q_i$ and $p_i$ refer to average quantities and thus take real values.\footnote{All results extend straightforwardly to settings with ties without altering the conclusions.}

Given the test-time compute levels selected by all providers, $\thetab = (\theta_1,\ldots,\theta_N) \in \Theta^N$, users allocate their demand across providers based on the value $V_i(\theta_i)$ offered by each provider, but they also have the option to abstain from using any provider if none of them offers a value higher than an abstention threshold value $V_0 \in \mathbb{R}$. 
%
More formally, we characterize the allocation of user demand across providers through a market share function $s: \mathbb{R}^{N+1} \to [0,1]^{N+1}$ that maps the values offered by all providers and the abstention value to the fraction of queries served by each provider and the fraction of queries where users abstain.
Moreover, as is typical in game theory, we use the notation $s_i(V_i(\theta_i);\Vb_{-i}(\thetab_{-i}))$ to denote the market share of provider $i$ as a function of their offered value $V_i(\theta_i)$ while keeping fixed all other values $\Vb_{-i}(\thetab_{-i})$ (\ie, the abstention value and the values offered by providers other than $i$).

In the remainder of the paper, we will focus on the two canonical market share functions shown in Table~\ref{tab:market-models}.
%
\begin{table}[t]
\centering
\caption{Market share function, $s_i(V_i(\theta_i);\Vb_{-i}(\thetab_{-i}))$} \label{tab:market-models}
\begin{NiceTabular}{cc}
\toprule
\Block[fill=gray!10]{}{\textbf{Perfect rationality}} & \Block[fill=gray!10]{}{\textbf{Bounded rationality}} \\
\midrule
$\displaystyle 
\mathds{1}\left\{  V_i(\theta_i) > \max\left\{\Vb_{-i}(\thetab_{-i}) \right\} \right\}$ & $\displaystyle 
\frac{e^{\beta V_i(\theta_i)}}{e^{\beta V_0} + \sum_{j=1}^N e^{\beta V_j(\theta_j)}}$ \\[2ex]
\bottomrule
\end{NiceTabular}
\end{table}
The first function corresponds to a Bertrand-style market~\citep{Tirole1988} where users are perfectly rational and always select the provider offering them the highest value.
%
The second function corresponds to a market where users are boundedly rational~\citep{MCKELVEY19956} and select providers probabilistically based on their offered values.\footnote{In the context of random utility models (RUMs), boundedly rational users are those who only have access to noisy estimations of the value offered by each provider~\citep{yao2023rethinking}.}
Here, the parameter $\beta > 0$ controls the degree of rationality; as $\beta$ increases, users become more rational and, in the limit $\beta \to \infty$, users become perfectly rational.


%

Given the choices of test-time compute levels made by all providers and the resulting market share they attain, each provider $i$ obtains a utility $U_i(\theta_i;\thetab_{-i})$ given by
\begin{equation}\label{eq:provider-utility}
    U_i(\theta_i;\thetab_{-i}) = \underbrace{s_i(V_i(\theta_i);\Vb_{-i}(\thetab_{-i}))}_{\text{Provider's market share}} \cdot \underbrace{\left(p_i(\theta_i) - c_i(\theta_i)\right)}_{\text{Provider's profit}}.
\end{equation}

%
%
In the next section, we will analyze the (social) welfare $\SW(\thetab)$ 
generated by an LLM-as-a-service market, 
which naturally corresponds to the sum of the total value obtained by users and the total utility obtained by all providers. More formally:
\begin{equation}\label{eq:social-welfare-definition}
\begin{aligned}
\SW(\thetab) 
  &\coloneqq   \sum_{i=1}^N U_i(\thetab_i;\thetab_{-i}) + \sum_{i=1}^{N} V_i(\thetab_i) \cdot s_i(V_i(\theta_i);\Vb_{-i}(\thetab_{-i})) +  V_0\cdot s_0(V_0;\Vb(\thetab))\\ 
  &= \sum_{i=1}^{N} s_i(V_i(\theta_i);\Vb_{-i}(\thetab_{-i}))\cdot  \underbrace{(q_i(\theta_i)-c_i(\theta_i))}_{\coloneqq \SW_i(\theta_i)}  +  V_0\cdot s_0(V_0;\Vb(\thetab)), 
\end{aligned}
\end{equation}
where prices $p_i(\theta)$ cancel out, and $\SW_i(\theta_i)$ denotes each provider’s contribution to social welfare, reflecting the quality they offer relative to the generation cost incurred by their chosen level of test-time compute.
%

%
%
%

\section{Social Welfare of Test-Time Compute Games}
\label{sec:equilibrium}
In this section, we will show that, in the LLM-as-a-service market, we can expect the test-time compute levels selected by the market providers to be suboptimal in terms of social welfare.
To this end, as is typical in game theory, we will analyze the (pure) Nash equilibria of the test-time compute game $\Gcal$ introduced in Section~\ref{sec:model}.

A choice of test-time compute levels $\thetab^\dagger\in\Theta^N$ is a pure Nash equilibrium of
$\Gcal$ if no provider can improve their utility through a unilateral change in test-time compute, \ie,
\begin{equation}\label{eq:NE-definition}
    U_i(\theta^\dagger_i;\thetab^\dagger_{-i}) \geq U_i(\theta'_i;\thetab^\dagger_{-i}) \,\,\,\text{for all}\,\,\, i\in[N], \theta'_i \in \Theta.
\end{equation}
%

To determine the existence of pure Nash equilibria, we resort to the theory of potential games~\citep{MONDERER1996124}.
Specifically, we start by showing that every test-time compute game is a \emph{generalized ordinal potential game}, that is, there exists a function $\Phi \,\colon\, \Theta^N \to \RR$, called a potential, such that any unilateral deviation of some provider $i$ that strictly increases their utility also strictly increases the potential. More specifically, let 
\begin{equation}\label{eq:potential}
        \Phi(\thetab)=
        \begin{dcases}
                \log \left(p_{\pi(1)}(\theta_{\pi(1)})-c_{\pi(1)}(\theta_{\pi(1)})\right) + C\cdot V_{\pi(2)}(\theta_{\pi(2)}) & (\beta=\infty) \\ 
                \sum_{i=1}^N \log\left(p_i(\theta_i)-c_i(\theta_i)\right) +  \beta\cdot\sum_{i=1}^N  V_i(\theta_i) - \log \left(e^{\beta V_0} + \sum_{i=1}^N e^{\beta V_i(\theta_i)} \right) & (\beta < \infty) 
        \end{dcases}
\end{equation}
where $C>0$ is a constant whose value is explicitly given in Appendix~\ref{app:proof-potential}, and $\pi$ is the ordering of providers in decreasing order of the value they offer under $\thetab$. Then, we have the following theorem:
\begin{theorem}\label{thm:potential-game}
    The function $\Phi$ defined in Eq.~\ref{eq:potential} is a potential for the test-time compute game $\Gcal$, \ie, for all $\thetab\in\Theta^N$, $i\in[N]$, and $\theta'_i\in\Theta$, it holds that $U_i\left(\theta'_i ; \thetab_{-i}\right) > U_i\left(\theta_i ; \thetab_{-i}\right)  \Rightarrow \Phi\left(\theta'_i ; \thetab_{-i}\right) > \Phi\left(\theta_i ; \thetab_{-i}\right)$.   
\end{theorem}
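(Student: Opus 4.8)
The plan is to verify the defining implication of a generalized ordinal potential, $U_i(\theta_i';\thetab_{-i})>U_i(\theta_i;\thetab_{-i})\Rightarrow\Phi(\theta_i';\thetab_{-i})>\Phi(\theta_i;\thetab_{-i})$, separately for the two market-share models, holding $\thetab_{-i}$ fixed throughout. The perfect-rationality case will hinge on finiteness of $\Theta$: because $\{V_i(\theta)\}_{i,\theta}$ and $\{p_i(\theta)-c_i(\theta)\}_{i,\theta}$ are finite sets, distinct offered values differ by at least some $\delta>0$ and all log-profits lie in an interval of some length $D$, which is exactly what pins down the constant $C$.

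For the bounded-rationality (logit) model the argument is almost purely algebraic. Writing $Z_{-i}=e^{\beta V_0}+\sum_{j\neq i}e^{\beta V_j(\theta_j)}$, the only summands of $\Phi$ that depend on $\theta_i$ are $\log\bigl(p_i(\theta_i)-c_i(\theta_i)\bigr)$, $\beta V_i(\theta_i)$, and $-\log\bigl(Z_{-i}+e^{\beta V_i(\theta_i)}\bigr)$, and their sum equals
\[
\log\Bigl((p_i(\theta_i)-c_i(\theta_i))\cdot\tfrac{e^{\beta V_i(\theta_i)}}{Z_{-i}+e^{\beta V_i(\theta_i)}}\Bigr)=\log U_i(\theta_i;\thetab_{-i}),
\]
since the fraction is precisely the logit share $s_i(V_i(\theta_i);\Vb_{-i}(\thetab_{-i}))$. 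Hence $\Phi(\theta_i;\thetab_{-i})=\log U_i(\theta_i;\thetab_{-i})+g(\thetab_{-i})$ for a term $g$ independent of $\theta_i$; as $U_i>0$ (the logit share is strictly positive and profits are positive) and $\log$ is strictly increasing, the required implication — indeed the full equivalence — follows at once.

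For the perfect-rationality (Bertrand) model I would argue by cases on the deviation. It is convenient to treat abstention as a phantom option with value $V_0$ and normalized profit $1$, so that the two slots $\pi(1),\pi(2)$ entering $\Phi$ are always well-defined. Since $U_i$ equals $0$ unless $i$ is the unique top-value option, in which case it equals $p_i(\theta_i)-c_i(\theta_i)>0$, a strict improvement can arise in only two ways. (i) Provider $i$ is the winner before and after; then $i$ occupies $\pi(1)$ in both profiles while the runner-up and its value are unchanged (the other options are fixed), so $\Phi$ moves only through $\log\bigl(p_{\pi(1)}-c_{\pi(1)}\bigr)$, which strictly increases because the improvement forces $\theta_i\prec\theta_i'$ via Eq.~\ref{eq:increasing_utility}. (ii) Provider $i$ is a loser before and the winner after; then in the deviated profile $i$ becomes $\pi(1)$ and the previously top option becomes $\pi(2)$, so the value term jumps from $C$ times the (strictly smaller) old second-best value to $C$ times the old top value, a strict increase of at least $C\delta$, while the accompanying change in the log-profit term is at most $D$ in magnitude; choosing $C>D/\delta$ (the constant of Appendix~\ref{app:proof-potential}) makes the net change strictly positive.

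I expect the main obstacle to be exactly the bookkeeping in case (ii): tracking which option occupies slots $\pi(1)$ and $\pi(2)$ before and after the deviation, and in particular handling deviations that cross the abstention threshold $V_0$ — where provider $i$ may already be the highest-value \emph{provider} yet still lose to abstention, and may improve by \emph{lowering} its test-time compute (and hence price) to clear $V_0$. One must check that in every such configuration the gain in $C\cdot V_{\pi(2)}$ is at least $C\delta$ and can never be cancelled by the bounded swing in the log-profit term, so that a single $C$ works uniformly over all profiles and all improving deviations. The logit case, by contrast, needs none of this care.
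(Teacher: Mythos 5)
Your proposal follows the paper's appendix argument in both cases. For the logit model you observe, just as the paper does, that $\Phi$ decomposes as $\log U_i(\theta_i;\thetab_{-i})$ plus a term independent of $\theta_i$, after which the implication (in fact the equivalence) is immediate from strict monotonicity of $\log$; your presentation is simply a more streamlined version of the same manipulation. For the Bertrand model you use the same two-case split (winner before and after, versus loser before and winner after) and the same choice of constant, $C = D/\delta$, which matches the paper's $C = \log U_{\max}/\delta_{\min}$.

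The one place you go beyond the paper is worth keeping: the phantom abstention option with value $V_0$ and unit profit. As written, the paper's perfect-rationality $\Phi$ lets $\pi$ order only the $N$ providers, and its first-case argument asserts that an improving deviation by the top provider $\pi(1)$ forces $s_i(\theta_i)=s_i(\theta'_i)=1$. This fails in precisely the sub-case you flag: if the top provider is below $V_0$, then $s_i(\theta_i)=0$ and $U_i(\theta_i)=0$, and the improving deviation may be a \emph{decrease} in TTC that brings $V_i$ above $V_0$. In that event $\theta'_i\prec\theta_i$, so by Eq.~\ref{eq:increasing_utility} the log-profit term $\log\!\left(p_{\pi(1)}(\theta_{\pi(1)})-c_{\pi(1)}(\theta_{\pi(1)})\right)$ strictly decreases, while $V_{\pi(2)}$ (the second-highest \emph{provider} value) is unchanged — so the paper's $\Phi$ strictly decreases despite the strict gain in utility, contradicting the claimed potential property. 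Including abstention as a phantom bidder in $\pi$ repairs this: before the deviation $\pi(1)$ is abstention and $\pi(2)$ is provider $i$, after it $\pi(1)$ is provider $i$ and $\pi(2)$ is abstention, and the jump $C\bigl(V_0 - V_i(\theta_i)\bigr)\ge C\delta$ absorbs the at-most-$D$ swing in log-profit. So your sketch is correct, and your stated bookkeeping worry is exactly the gap in the paper's Appendix~\ref{app:proof-potential}; the uniform constant $C>D/\delta$ covers all configurations once abstention is in the ordering.
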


Intuitively, Theorem~\ref{thm:potential-game} reveals that, although each provider acts selfishly to increase their own utility, the structure of the market is such that their actions jointly optimize the potential $\Phi(\thetab)$,
which captures the overall state of the market.
In this context, it is worth noting that competition between providers drives up both the value offered to the users and (some of) the providers' profits. For example, under perfect rationality ($\beta=\infty$), maximizing the potential increases the second-highest value offered to the users (and hence, also the highest) as well as the profit of the provider offering the highest value.

Following well-known results in game theory, the characterization of the test-time compute game $\Gcal$ as an ordinal potential game readily implies that the game admits a (not necessarily unique) pure Nash equilibrium~\citep{MONDERER1996124}.\footnote{Refer to Chapters $18$ and $19$ of~\cite{NISAN} for a comprehensive discussion of potential games.}
To understand why this holds, consider that the levels of test-time compute selected by the providers start at some arbitrary point $\thetab^1\in\Theta^N$ and follow \emph{better-response dynamics} over time.
Formally, this means that, at each time step $t$, the levels of test-time compute $\thetab^t$ are such that either (i) there exists a single provider $i\in[N]$ who changes their level of test-time compute to $\theta^{t+1}_i \neq \theta^t_i$ with $U_i(\theta^{t+1}_i;\thetab_{-i}^{t}) > U_i(\theta^{t}_i;\thetab^{t}_{-i})$ or (ii) $\thetab^{t+1} = \thetab^{t}$ if no such change is possible by any provider.
Then, since the domain $\Theta^N$ is finite, it is guaranteed that the sequence $\thetab^t$ eventually becomes constant---otherwise, the better-response dynamics would keep increasing the potential function $\Phi$ indefinitely---at which point no provider can further increase their utility and the market reaches a pure Nash equilibrium.
In this context, note that better-response dynamics 
are particularly natural in practice.
One can think of the better-response of a provider as the (minor) release of a model that has not undergone additional pre-training or post-training, but whose performance has changed due to optimizations in test-time compute methods.

Further, we proceed by identifying properties that hold in the equilibrium of a test-time compute game $\Gcal$, starting from an explicit characterization of the levels of test-time compute selected by providers under perfect rationality.
When users are perfectly rational,
they always opt for the provider that offers them the highest value (see Table~\ref{tab:market-models}). 
As a consequence, each provider's incentive is to select a level of test-time compute that offers higher value than what their competitors are offering. 
This observation allows us to determine the equilibrium reached by the providers using the maximum value $V_i^*$ that each provider can offer, \ie,
\begin{equation}\label{eq:max-user-value}
        V^*_i := \max_{\theta_i \in \Theta}\Big\{q_i(\theta_i) - p_i(\theta_i)\Big\},
\end{equation}
where, without loss of generality, we assume that providers are indexed such that \mbox{$V^*_{1} \geq V^*_{2} \geq \dots \geq V^*_{N}$} and we will refer to the provider who can offer the highest value $V^*_{1}$ as the \emph{dominant} provider.
In particular, we have the following theorem:
\begin{theorem}\label{thm:rational-equilibrium}
     In any pure Nash equilibrium $\thetab^\dagger$ of a test-time compute game $\Gcal$ with perfectly rational users,
     (i) the dominant provider serves all queries with $\theta^\dagger_{1}
            =\max_{\theta_1\in\Theta} \left\{ \theta_{1} \;\middle|\; V_{1}(\theta_{1}) > V^*_{2} \right\}$, 
    and (ii) there exists at least one provider $i\neq 1$ such that
    \begin{equation*}
        V_{i}(\theta_i^\dagger) > \max_{\theta_{1}\in\Theta}\left\{ V_{1}(\theta_{1}) \; \middle| \;  V_{1}(\theta_{1}) < V_{2}^* \quad\mathrm{and}\quad p_{1}(\theta_{1}) - c_{1}(\theta_{1}) > p_{1}(\theta^\dagger_{1}) - c_{1}(\theta^\dagger_{1}) \right\},
    \end{equation*}
        where the right-hand side of the inequality is $-\infty$ if the set is empty.
\end{theorem}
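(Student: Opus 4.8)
The plan is to exploit the rigid structure of the perfect-rationality market. By Table~\ref{tab:market-models}, at any profile $\thetab$ the provider with the --- unique, by the no-ties assumption --- highest value $V_i(\theta_i)$ gets market share $1$ and utility $p_i(\theta_i)-c_i(\theta_i)$ if this value exceeds $V_0$, while every other provider gets utility $0$. Since $p_i(\theta)-c_i(\theta)$ is strictly increasing along $\prec$ (Eq.~\ref{eq:increasing_utility}) and strictly positive (as needed for $\Phi$ in Eq.~\ref{eq:potential} to be well defined), the game is effectively ``compete to offer the top value, then inflate compute as far as possible without losing the top spot.'' I will work in the active-market regime $V_0<V_i(\theta)$ for all $i\in[N]$ and $\theta\in\Theta$ (users never abstain), write $\theta^*_i$ for a level attaining $V^*_i$, and use that $V^*_1>V^*_2>\dots$ is strict by the indexing together with the no-ties assumption.

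For part~(i), I first show the dominant provider is the winner at any pure Nash equilibrium $\thetab^\dagger$. If instead some $j\neq1$ won, or the market abstained, provider~$1$'s utility would be $0$; but deviating to $\theta^*_1$ gives value $V^*_1>V^*_j\geq V_j(\theta^\dagger_j)$ for every $j\neq1$ and $V^*_1>V_0$, so provider~$1$ becomes the unique top provider with positive profit, a profitable deviation. Thus provider~$1$ has market share $1$, and $\theta^\dagger_1$ maximizes $p_1(\theta_1)-c_1(\theta_1)$ over the ``winning set'' $B:=\{\theta_1: V_1(\theta_1)>W\}$, where $W:=\max\{V_0,\max_{j\neq1}V_j(\theta^\dagger_j)\}$; by strict $\prec$-monotonicity of profit, $\theta^\dagger_1$ is the $\prec$-largest element of $B$. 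Next, $V_1(\theta^\dagger_1)>V^*_2$: otherwise $V_1(\theta^\dagger_1)<V^*_2$ by no ties, and provider~$2$ deviating to $\theta^*_2$ attains $V^*_2$, which beats $V_1(\theta^\dagger_1)$, beats every $V_k(\theta^\dagger_k)\leq V^*_k<V^*_2$ for $k\geq3$, and beats $V_0$ --- a profitable deviation for provider~$2$. Finally, $W\leq V^*_2$ (since $\max_{j\neq1}V_j(\theta^\dagger_j)\leq V^*_2$ and $V_0<V^*_2$), so $\{\theta_1:V_1(\theta_1)>V^*_2\}\subseteq B$; the $\prec$-largest element $\theta^\dagger_1$ of $B$ lies in this smaller set (as $V_1(\theta^\dagger_1)>V^*_2$), hence is its $\prec$-largest element too --- exactly the claimed formula for $\theta^\dagger_1$.

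For part~(ii), take any level $\theta_1'$ with $p_1(\theta_1')-c_1(\theta_1')>p_1(\theta^\dagger_1)-c_1(\theta^\dagger_1)$; by strict $\prec$-monotonicity of profit this means $\theta^\dagger_1\prec\theta_1'$, so $\theta_1'\notin B$ and $V_1(\theta_1')\leq W\leq V^*_2$, which by no ties gives $V_1(\theta_1')<V^*_2$. Hence the condition ``$V_1(\theta_1)<V^*_2$'' in the theorem is implied by the profit condition, so the right-hand-side set is simply $\{\theta_1:\theta^\dagger_1\prec\theta_1\}$ --- if it is empty, the right-hand side is $-\infty$ and (ii) is trivial since $N\geq2$. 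Otherwise let $R$ be the maximum of $V_1(\theta_1)$ over this set. In the active-market regime $W=\max_{j\neq1}V_j(\theta^\dagger_j)=V_m(\theta^\dagger_m)$ for some $m\neq1$, and every $\theta_1'$ with $\theta^\dagger_1\prec\theta_1'$ satisfies $V_1(\theta_1')\leq W=V_m(\theta^\dagger_m)$, upgraded to $V_1(\theta_1')<V_m(\theta^\dagger_m)$ by no ties; so $R<V_m(\theta^\dagger_m)$ and $i=m$ witnesses the claim.

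The step I expect to need the most care is the treatment of the abstention value $V_0$: part~(ii) asserts that a genuine competitor disciplines the dominant provider, which can fail when $V_0$ is high enough to be the binding threshold in $W$ (abstention, rather than a rival, blocking inflation), so one must either restrict to the active-market regime above or reinterpret the statement with abstention playing the role of a ``phantom competitor.'' A secondary subtlety is that $V_1(\cdot)$ need not be monotone in $\prec$, so the feasibility set $B$ is not downward-closed and one must reason throughout with its $\prec$-largest element rather than a cutoff on $\theta_1$; and the no-ties assumption is precisely what turns each ``$\leq$'' between distinct providers' values into the strict ``$>$'' appearing in the statement.
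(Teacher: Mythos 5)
Your proof is correct and follows essentially the same best-response logic as the paper's: show provider~1 must win at any Nash equilibrium, show $V_1(\theta^\dagger_1) > V^*_2$ (else provider~2 profitably deviates to $\theta^*_2$), and conclude that $\theta^\dagger_1$ maximizes profit over the winning set, hence is the $\prec$-largest of it; for part~(ii), any strictly more profitable level $\theta_1'$ must lose, so some provider must be offering value above $V_1(\theta_1')$. Your more explicit construction of the winning set $B$ and the argument that its $\prec$-largest element coincides with the $\prec$-largest element of the subset $\{\theta_1 : V_1(\theta_1) > V^*_2\}$ is a careful version of the step the paper writes directly as Eq.~\ref{eq:app-dominant-NE-rational}. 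The caveat you flag about abstention is genuine: if $V_0$ were the binding constraint in $W$, part~(ii) as literally stated (a \emph{provider} $i\neq1$ disciplining provider~1) could fail, and the paper's proof shares the same implicit assumption of an active market without saying so. Your handling of that corner case is therefore a useful clarification rather than a departure from the paper's approach.
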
 
The above result states that, in equilibrium, the dominant provider captures the market by selecting a level of test-time compute that offers higher value to the users than what their competitors can possibly offer and, once they achieve that, they increase their test-time compute as much as possible to maximize their utility (see also Eq.~\ref{eq:increasing_utility}).
%
At the same time, at least one competitor offers a sufficiently high-value alternative while gaining zero utility---otherwise, the dominant provider could also lower the value they offer, and the situation would not be an equilibrium.

For games $\Gcal$ with boundedly rational users ($\beta < \infty$), the levels of test-time compute in equilibrium cannot be characterized explicitly,
since all providers serve a positive fraction of user queries and their chosen levels of test-time compute depend on the costs, prices, and qualities $\{c_i, p_i, q_i\}_{i=1}^N$ of all providers.
%
Nonetheless, we next show that, as long as users are sufficiently rational, any Nash equilibrium of the game $\Gcal$ with $\beta<\infty$ is also an equilibrium of the game $\Gcal$ with perfectly rational users and, hence, also 
satisfies the properties of  Theorem~\ref{thm:rational-equilibrium}.
Formally, we have the following theorem:
\begin{theorem}\label{thm:high-rationality}
    Consider a set of providers specified by $\{ c_i, p_i,q_i \}_{i=1}^N$. Then, there exists a (finite) level of user rationality $\beta_0>0$ such that, for any $\beta > \beta_0$, any pure Nash equilibrium $\thetab^\dagger$ of the game $\Gcal$ with boundedly rational users 
    is also a Nash equilibrium of the game $\Gcal$ with perfectly rational users. 
\end{theorem}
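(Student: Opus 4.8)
The plan is to show that as $\beta \to \infty$, the bounded-rationality game converges to the perfect-rationality game in a sense strong enough to transfer pure Nash equilibria, and then to use finiteness of $\Theta^N$ to extract a uniform threshold $\beta_0$. First, I would fix the finite set of providers $\{c_i,p_i,q_i\}_{i=1}^N$ and recall that, by the no-ties assumption on offered values, for every profile $\thetab \in \Theta^N$ there is a strict gap between the top value and the rest; let $\delta > 0$ be the minimum, over all profiles and all relevant pairs, of these value gaps (this is a positive number because $\Theta^N$ is finite). The key analytic fact is that the softmax market share $s_i^\beta(V_i(\theta_i);\Vb_{-i}(\thetab_{-i})) = e^{\beta V_i(\theta_i)}/(e^{\beta V_0}+\sum_j e^{\beta V_j(\theta_j)})$ converges, as $\beta\to\infty$, to the indicator $\mathds{1}\{V_i(\theta_i) > \max\{\Vb_{-i}(\thetab_{-i})\}\}$ whenever there are no ties, and moreover the convergence is exponentially fast: $|s_i^\beta - \mathds{1}\{\cdots\}| \le N e^{-\beta\delta}$ uniformly over all profiles. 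Hence the bounded-rationality utility $U_i^\beta(\theta_i;\thetab_{-i}) = s_i^\beta(\cdots)\cdot(p_i(\theta_i)-c_i(\theta_i))$ converges uniformly to the perfect-rationality utility $U_i^\infty(\theta_i;\thetab_{-i})$, since the profit factors $p_i(\theta_i)-c_i(\theta_i)$ range over a finite, hence bounded, set.

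Next I would translate this uniform convergence into preservation of strict inequalities. For the perfect-rationality game, define $\gamma > 0$ to be the minimum absolute difference $|U_i^\infty(\theta_i';\thetab_{-i}) - U_i^\infty(\theta_i;\thetab_{-i})|$ over all $i$, all $\thetab_{-i}$, and all pairs $\theta_i,\theta_i'$ for which this difference is nonzero (again positive and well-defined by finiteness). Wait --- here I need a small care: $U_i^\infty$ can have genuine ties (two deviations both giving zero market share give utility $0$), so $\gamma$ is the minimum over the nonzero differences only. Choose $\beta_0$ large enough that $2 \cdot N e^{-\beta_0\delta}\cdot M < \gamma$, where $M = \max_{i,\theta}(p_i(\theta)-c_i(\theta))$ bounds the profit factors; then for all $\beta > \beta_0$ we have $\sup |U_i^\beta - U_i^\infty| < \gamma/2$ uniformly. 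Now let $\thetab^\dagger$ be any pure Nash equilibrium of the bounded-rationality game with such a $\beta$. Suppose, for contradiction, that $\thetab^\dagger$ is not a Nash equilibrium of the perfect-rationality game: some provider $i$ has a profitable deviation $\theta_i'$ with $U_i^\infty(\theta_i';\thetab^\dagger_{-i}) > U_i^\infty(\theta_i^\dagger;\thetab^\dagger_{-i})$, hence by definition of $\gamma$ this gap is at least $\gamma$. But then $U_i^\beta(\theta_i';\thetab^\dagger_{-i}) > U_i^\infty(\theta_i';\thetab^\dagger_{-i}) - \gamma/2 \ge U_i^\infty(\theta_i^\dagger;\thetab^\dagger_{-i}) + \gamma/2 > U_i^\beta(\theta_i^\dagger;\thetab^\dagger_{-i})$, contradicting that $\thetab^\dagger$ is a Nash equilibrium of the bounded-rationality game. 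This completes the argument.

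The main obstacle I anticipate is the handling of the direction of the implication together with the possibility of ties in $U_i^\infty$. The subtlety is that we are starting from a $\beta$-equilibrium and want to conclude it is an $\infty$-equilibrium; this is the ``easy'' direction in the sense that we only need to rule out that a strict $\infty$-improvement exists, and strict $\infty$-improvements are uniformly bounded away from zero by $\gamma$, so uniform closeness of utilities does the job. The reverse direction (an $\infty$-equilibrium being a $\beta$-equilibrium) would be harder and is not needed. One should also double-check the edge case where the deviation $\theta_i'$ gives provider $i$ zero market share in the limit but positive (tiny) market share at finite $\beta$: this is fine because then $U_i^\infty(\theta_i';\thetab^\dagger_{-i}) = 0$, which cannot exceed $U_i^\infty(\theta_i^\dagger;\thetab^\dagger_{-i})$ by the strictly-less-than-$V^*$ structure unless the incumbent action also gives zero utility, in which case there is no strict $\infty$-improvement and nothing to rule out. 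Finally, I would remark that $\beta_0$ depends only on $\delta$, $\gamma$, and $M$, all determined by $\{c_i,p_i,q_i\}_{i=1}^N$, so it is indeed finite and provider-data-dependent as claimed, and once inside the perfect-rationality regime, Theorem~\ref{thm:rational-equilibrium} applies verbatim.
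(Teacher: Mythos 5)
Your proof is correct and follows essentially the same route as the paper's: establish the exponential, uniform convergence of the softmax market shares to the perfect-rationality indicator (with rate governed by the minimum value gap $\delta$), deduce uniform convergence of the utilities, define a threshold from the minimum positive utility change in the $\beta=\infty$ game, and argue by contradiction that a strict improvement under perfect rationality would yield a strict improvement under bounded rationality once $\beta$ is large enough. The one cosmetic difference is that your final contradiction chain, $U_i^\beta(\theta_i') > U_i^\infty(\theta_i') - \gamma/2 \ge U_i^\infty(\theta_i^\dagger) + \gamma/2 > U_i^\beta(\theta_i^\dagger)$, is slightly cleaner than the paper's triangle-inequality step, which compares $U_i^\beta(\theta_i^\dagger)$ to $U_i^\infty(\theta_i')$ and implicitly also uses the $\beta$-NE property to justify the bound; your version makes that use explicit and avoids any ambiguity.
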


Taken together, the above results provide some intuition regarding the provider utilities and user value in equilibrium. 
However, they do not elucidate to what extent we can expect the actions of providers in equilibrium 
to be optimal in terms of social welfare.
To shed light on this question, in what follows, we will analyze the \emph{price of anarchy} (PoA), a standard game-theoretic measure~\citep{koutsoupias1999worst,Nisan2007-lx} that compares the social welfare achieved by the test-time compute levels $\thetab^\dagger$ selected by the providers in equilibrium against the 
test-time compute levels that maximize the social welfare of the market.
More formally, the price of anarchy of a test-time compute game $\Gcal$ is given by
\begin{equation}\label{eq:PoA}
    \text{PoA}\left(\Gcal\right) \coloneqq
    \frac{ \max_{\theta \in \Theta^{N}} \SW(\thetab) }{\SW(\thetab^\dagger)}. 
\end{equation}
%
By definition, the price of anarchy is always at least $1$, and it equals $1$ only when the compute levels $\thetab^\dagger$ selected by providers at equilibrium also optimize the social welfare. However, this is unlikely to happen, because rational providers act to maximize their individual utilities rather than coordinating to maximize the social welfare of the market.
To better understand what conditions can lead to $\text{PoA}(\Gcal) > 1$, and since explicitly characterizing the 
test-time compute level $\argmax_{\theta \in \Theta^{N}} \SW(\thetab)$ in an arbitrary game $\Gcal$ is generally intractable, we compare the test-time compute $\thetab^\dagger$ selected by the providers at equilibrium against the test-time compute $\thetab^{*}$ selected by the providers in an idealized situation in which they maximize their individual contribution $\SW_i(\theta_i^*)$ to the welfare.
By doing so, we obtain the following lower-bound on the price of anarchy:\footnote{For the case of perfect rationality ($\beta = \infty$), we adopt the standard convention $e^{-\infty}\coloneqq 0$.}

\begin{theorem}\label{thm:poa}
    The price of anarchy of the test-time compute game $\Gcal$ is lower-bounded, up to higher order terms, as
\begin{equation*}\label{eq:poa-bound}
\textnormal{PoA}\left(\Gcal\right)
    \geq 1 + \frac{ \textnormal{W}_{\pi^*(1)}(\theta^*_{\pi^*(1)}) - \textnormal{W}_{\pi(1)}(\theta_{\pi(1)}^\dagger)}{\textnormal{W}_{\pi^*(1)}(\theta^*_{\pi^*(1)})} + f(\beta)
    ,
\end{equation*}
where $f(\beta) = \Ocal\left(e^{-\beta \Delta V} \right)$, $\Delta V>0$ is a constant depending on the game instance, and $\pi$, $\pi^*$ are permutations of providers in decreasing order of the value they offer at $\thetab^\dagger$ and $\thetab^*$, respectively. Refer to the proof in Appendix~\ref{app:proof-poa} for the exact expression of $f$, $\Delta V$, and the higher order corrections to the above lower bound.
\end{theorem}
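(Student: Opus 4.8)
The plan is to lower-bound the price of anarchy by the ratio of the social welfare attained at two \emph{explicit} profiles and then rewrite this ratio in the additive form of the statement. Let $\thetab^{*}$ be the idealized profile in which every provider plays a level $\theta_i^{*}\in\argmax_{\theta\in\Theta}\SW_i(\theta)$ maximizing its own contribution to welfare, and recall $\thetab^\dagger$ is a pure Nash equilibrium. Since $\thetab^{*}$ is feasible, $\max_{\thetab\in\Theta^N}\SW(\thetab)\ge\SW(\thetab^{*})$, so by Eq.~\ref{eq:PoA},
\[
\textnormal{PoA}(\Gcal)=\frac{\max_{\thetab\in\Theta^N}\SW(\thetab)}{\SW(\thetab^\dagger)}\;\ge\;\frac{\SW(\thetab^{*})}{\SW(\thetab^\dagger)}\;\ge\;1+\frac{\SW(\thetab^{*})-\SW(\thetab^\dagger)}{\SW(\thetab^{*})},
\]
where the last step is the elementary inequality $X/Y\ge 1+(X-Y)/X$, valid for $X,Y>0$ and equivalent to $(X-Y)^{2}\ge0$. (Here $\SW(\thetab^\dagger),\SW(\thetab^{*})>0$ because the market-winning provider contributes positive welfare in the active-market regime considered below.) It then suffices to replace $\SW(\thetab^{*})$ and $\SW(\thetab^\dagger)$ by the leading terms $\SW_{\pi^{*}(1)}(\theta^{*}_{\pi^{*}(1)})$ and $\SW_{\pi(1)}(\theta^\dagger_{\pi(1)})$ of the statement, collecting the discrepancy into $f(\beta)$.

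The second step is a softmax-concentration estimate. Fix any profile $\thetab$ at which a single provider $k$ offers a value strictly above $V_0$ and above every competitor's value, and set $\gamma(\thetab)\coloneqq\min\{\,V_k(\theta_k)-V_0,\ \min_{j\neq k}\big(V_k(\theta_k)-V_j(\theta_j)\big)\,\}>0$. Substituting $s_k=1-s_0-\sum_{j\neq k}s_j$ into Eq.~\ref{eq:social-welfare-definition},
\[
\SW(\thetab)=\SW_k(\theta_k)+\sum_{j\neq k}s_j\big(\SW_j(\theta_j)-\SW_k(\theta_k)\big)+s_0\big(V_0-\SW_k(\theta_k)\big),
\]
and with the bounded-rationality share of Table~\ref{tab:market-models} one has $s_j=\Ocal\big(e^{-\beta(V_k(\theta_k)-V_j(\theta_j))}\big)$ and $s_0=\Ocal\big(e^{-\beta(V_k(\theta_k)-V_0)}\big)$, so the last two sums form a remainder of size $\Ocal(e^{-\beta\gamma(\thetab)})$ with a constant controlled by $\max_{i,\theta}|\SW_i(\theta)|$ and $|V_0|$; for $\beta=\infty$ the remainder vanishes exactly. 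I apply this at both profiles. At $\thetab^{*}$ the market winner is $\pi^{*}(1)$ by definition of $\pi^{*}$, and the no-ties assumption makes the gap $\delta^{*}\coloneqq\gamma(\thetab^{*})>0$. At $\thetab^\dagger$, Theorem~\ref{thm:high-rationality} lets us (for $\beta>\beta_0$) treat $\thetab^\dagger$ as a pure Nash equilibrium of the perfectly rational game, so Theorem~\ref{thm:rational-equilibrium} applies: the dominant provider $\pi(1)=1$ serves all queries, hence its value strictly exceeds $V_0$ and every competitor's value, and $\delta^\dagger\coloneqq\gamma(\thetab^\dagger)>0$. Consequently $\SW(\thetab^{*})=A+R^{*}$ and $\SW(\thetab^\dagger)=B+R^\dagger$ with $A\coloneqq\SW_{\pi^{*}(1)}(\theta^{*}_{\pi^{*}(1)})>0$, $B\coloneqq\SW_{\pi(1)}(\theta^\dagger_{\pi(1)})>0$, $R^{*}=\Ocal(e^{-\beta\delta^{*}})$, and $R^\dagger=\Ocal(e^{-\beta\delta^\dagger})$.

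The third step substitutes these into the chain above and expands the ratio:
\[
\textnormal{PoA}(\Gcal)\;\ge\;1+\frac{(A+R^{*})-(B+R^\dagger)}{A+R^{*}}\;=\;1+\frac{A-B}{A}+\frac{B\,R^{*}-A\,R^\dagger}{A^{2}}+(\textnormal{higher-order terms}).
\]
Setting $f(\beta)\coloneqq(B R^{*}-A R^\dagger)/A^{2}=\Ocal(e^{-\beta\Delta V})$ and $\Delta V\coloneqq\min\{\delta^{*},\delta^\dagger\}>0$, and observing that the remaining corrections are $\Ocal(e^{-2\beta\Delta V})$, gives precisely the inequality in the statement. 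For $\beta=\infty$ one has $R^{*}=R^\dagger=0$, so $f\equiv0$ and the bound reads $\textnormal{PoA}(\Gcal)\ge A/B\ge 1+(A-B)/A$ without correction; the appendix proof records the exact $f$, $\Delta V$, and the higher-order terms.

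I expect the softmax bookkeeping to be the main obstacle---in particular, ensuring the value gaps $\delta^{*},\delta^\dagger$ are bounded away from zero using only the no-ties assumption together with the equilibrium characterizations of Theorems~\ref{thm:rational-equilibrium} and~\ref{thm:high-rationality}, and verifying positivity of $A$ and $B$ (which follows from quality exceeding generation cost for the winning providers, e.g.\ whenever the winner's value exceeds a non-negative $V_0$). One must also dispose of the degenerate configurations in which the top value at $\thetab^{*}$ or at $\thetab^\dagger$ does not strictly exceed $V_0$: there the share concentrates on abstention, $\SW$ equals $V_0$ up to exponentially small terms, and the claimed bound either holds trivially ($\textnormal{PoA}(\Gcal)\ge1$ while its right-hand side is then at most $1$) or follows after reindexing the winner, so these cases do not affect the conclusion. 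The remaining manipulations---the inequality $(X-Y)^{2}\ge0$ and the first-order expansion of the ratio---are routine.
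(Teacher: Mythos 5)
Your proposal follows essentially the same strategy as the paper: lower-bound $\textnormal{PoA}(\Gcal)$ by $\SW(\thetab^*)/\SW(\thetab^\dagger)$, use softmax concentration to peel off the winner's contribution $\SW_{\pi^*(1)}(\theta^*_{\pi^*(1)})$ and $\SW_{\pi(1)}(\theta^\dagger_{\pi(1)})$ from $\SW(\thetab^*)$ and $\SW(\thetab^\dagger)$ with $\Ocal(e^{-\beta\Delta V})$ remainders, then expand to first order. The difference is cosmetic: where you use the elementary inequality $X/Y \ge 1 + (X-Y)/X$ to linearize, the paper normalizes by $\SW^*\coloneqq\SW_{\pi^*(1)}(\theta^*_{\pi^*(1)})$ and Taylor-expands $t\mapsto 1/(1-t)$; and where you re-derive the concentration estimate inline, the paper isolates it as Lemma~\ref{lemma:mean-softmax} (which gives matched two-sided bounds, useful because the numerator needs a lower bound and the denominator an upper bound). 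Your route is slightly cleaner and yields the same leading term, so this part is fine.

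There is, however, one genuine flaw. To show that the top-valued provider at $\thetab^\dagger$ has a strictly positive value gap to every competitor and to abstention, you invoke Theorem~\ref{thm:high-rationality} to "treat $\thetab^\dagger$ as a pure Nash equilibrium of the perfectly rational game" and then Theorem~\ref{thm:rational-equilibrium}. That step is both unnecessary and harmful: it restricts your entire argument to $\beta>\beta_0$, whereas the theorem is stated (and the paper's proof establishes it) for all $0<\beta\le\infty$. The positive gaps $\delta^*,\delta^\dagger>0$ do not require any equilibrium structure at all—they follow directly from the no-ties assumption on offered values, together with the (WLOG) convention that every provider can offer value above $V_0$, applied pointwise at $\thetab^*$ and at $\thetab^\dagger$. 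Replace the invocation of Theorems~\ref{thm:high-rationality} and~\ref{thm:rational-equilibrium} with this purely structural observation and your argument closes cleanly for all $\beta$, matching the paper's scope.
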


The first term in the lower-bound of the price of anarchy given by Theorem~\ref{thm:poa} quantifies the difference between the contributions to the social welfare by the providers who offer the highest value in an idealized situation in which they select a test-time compute level that maximizes their individual contributions to social welfare (\ie, $\thetab^*$) and in equilibrium (\ie, $\thetab^\dagger$), while the term $f(\beta)$ quantifies the effect of users' bounded rationality, which vanishes exponentially fast as users become fully rational ($\beta \to \infty$).

%
Although Theorem~\ref{thm:poa} illustrates the fact that competition between providers can lead to levels of test-time compute that are inefficient in terms of social welfare, it is worth highlighting that, even in the absence of such inefficiencies, there is 
still 
room for the social welfare to improve further. 
Specifically, as can be seen in Eq.~\ref{eq:social-welfare-definition}, the social welfare depends not only on the providers' test-time compute choices, but also on the way market shares are allocated across providers, which is determined by the providers' prices and the way users select which provider serves their queries (see Table~\ref{tab:market-models}).
In an ideal scenario, each provider would select the level of test-time compute $\theta^*_i$ that maximizes their individual contribution $W_i(\theta^*_i)$ to social welfare, and users would select the provider delivering the highest such contribution.
However, even under perfect rationality, this ideal scenario may not be achievable because the provider delivering the highest contribution $W_i(\theta^*_i)$ may not be the one delivering the highest value to the users.
In the next section, we conceptualize and analyze an alternative forward-looking market that realizes such an ideal scenario. 

\section{An Auction Mechanism for Test-Time Compute}
\label{sec:auction}
To design a market in which social welfare is maximized, we draw inspiration from mechanism design~\citep{NISAN2001166,krishna2009auction} and propose a reverse second-price auction that, by design, incentivizes each provider to choose the level of test-time compute $\theta_i^*$ that maximizes their individual contribution $W_i(\theta_{i}^{*})$ to social welfare.
Within this auction, a user takes the role of a seller who provides a task they would like to be served and providers take the role of bidders who declare a quality and price they can offer for serving the user's task based on the representative queries.
Moreover, the auction is conducted by a third-party platform that allocates the user's task to the provider who offers the best quality versus price trade-off and determines the payment of the provider based on the stated qualities and prices of all providers.

Formally, we denote the game induced by the aforementioned auction as $\widetilde{\Gcal}$ and, for simplicity of notation, we overload the notation used in Sections~\ref{sec:model} and~\ref{sec:equilibrium}, with the understanding that the corresponding definitions may actually differ.
The process begins with a user submitting a (small) set of queries $\Qcal$ representative of a task $\Tcal$ they wish to solve to the third-party platform conducting the auction.
All $N$ providers then bid simultaneously for the opportunity to serve the task $\Tcal$.
Specifically, each provider $i$ selects a level of test-time compute $\theta_i$ and uses the set $\Qcal$ of queries to estimate the average quality $q_i(\theta_i)\in\RR_+$ of their model on the task $\Tcal$.\footnote{In many domains, such as coding or information retrieval, estimating the quality $q_i(\theta_i)$ can be done automatically with limited overhead. We further discuss this in Section~\ref{sec:discussion}.} Then, each provider selects a price $p_i\in\RR_+$ at which they are willing to serve the task, and submits a bid $(q_i(\theta_i), p_i)$ to the platform, as is common in scoring auctions~\citep{Che1993,AskerScore}.\footnote{Given a selected test-time compute level $\theta_i$, providers have no incentive to misreport their model's quality $q_i(\theta_i)$, since the platform can verify the quality of their responses, as we formally show in Appendix~\ref{app:auction-verification}.}
%
Finally, the platform assigns the task $\Tcal$ to the provider who offers the highest value $V_i(\theta_i, p_i) \coloneqq q_i(\theta_i) - p_i$ where, similarly to Section~\ref{sec:equilibrium}, we assume there are no ties between providers.

To align the incentives of the providers with social welfare, the third-party platform sets the payment made by the user to the winning provider using a second-price payment rule---the user pays an amount that does not depend on the price $p_{\pi(1)}$ bid by the winning provider, but instead equals the difference between the quality delivered by the winning provider and the value offered by the second-best provider.
Formally, the payment is given by
\begin{equation}\label{eq:second-price-payment}
    P(\thetab,\mathbf{p}) = q_{\pi(1)}(\theta_{\pi(1)}) - V_{\pi(2)}(\theta_{\pi(2)}, p_{\pi(2)}),
\end{equation}
where $\pi$ is the ordering of providers
in decreasing order of the value they offer, and $\thetab$ and $\mathbf{p}$ denote the vectors of test-time compute levels and prices selected by all providers, respectively.
Based on the payment in Eq.~\ref{eq:second-price-payment}, the utility obtained by each provider is then given by
\begin{equation}\label{eq:utility-auction}
    U_i(\theta_i, p_i; \thetab_{-i}, \mathbf{p}_{-i}) = \left( P(\thetab,\mathbf{p}) - c_i(\theta_i) \right) \cdot \mathds{1}\left\{ V_i (\theta_i,p_i) > \max_{j\neq i} V_j(\theta_j,p_j) \right\},
\end{equation}
where $c_i(\theta) \in \mathbb{R}_{+}$ is the generation cost, and the utility of all providers except for the winning one equals zero.

Similar to the competitive market setting in Section~\ref{sec:model}, we define social welfare as the sum of the total utility obtained by all providers and the net 
value obtained by the user, which is the difference between the quality offered by the winning provider and the payment set by the third-party platform.
More concretely, since the only provider with non-zero utility is the winning one, social welfare is given by
\begin{multline*}\label{eq:SW-auction}
    \SW(\thetab,\mathbf{p})
    = \underbrace{U_{\pi(1)}(\theta_{{\pi(1)}}, p_{{\pi(1)}}; \thetab_{-{\pi(1)}}, \mathbf{p}_{-{\pi(1)}})}_{\text{Winning provider's utility}}
    + \underbrace{q_{\pi(1)}(\theta_{\pi(1)}) - P(\thetab, \mathbf{p})}_{\text{Net user value}} \\
    = q_{\pi(1)}(\theta_{\pi(1)}) - c_{\pi(1)}(\theta_{\pi(1)})
    = W_{\pi(1)}\left(\theta_{\pi(1)}\right),
\end{multline*}
where the second equality follows from Eqs.~\ref{eq:second-price-payment} and ~\ref{eq:utility-auction} and highlights that, due to the structure of the payment rule, the social welfare depends entirely on the quality, generation cost, and level of test-time compute of the winning provider $\pi(1)$.

Next, we proceed to analyze the equilibrium properties of the game $\widetilde{G}$ induced by the auction.
Our starting point is the observation that, under the payment rule given by Eq.~\ref{eq:second-price-payment}, each provider $i$ has a dominant strategy, that is, a combination of test-time compute $\theta^\dagger_i$ and price $p^\dagger_i$ 
that maximizes their utility regardless of how the other providers act.
Formally, a dominant strategy for provider $i$ satisfies
\begin{equation}U_i(\theta_i^\dagger, p_i^\dagger; \thetab_{-i}, \mathbf{p}_{-i}) \geq U_i(\theta_i, p_i; \thetab_{-i}, \mathbf{p}_{-i}),\quad \text{for all}\quad (\theta_i,\thetab_{-i}) \in \Theta^{N}\quad \text{and}\quad  (p_i,\mathbf{p}_{-i}) \in \RR_+^{N}
\end{equation}
and, consequently, the choice of test-time compute levels $\thetab^\dagger$ and prices $\mathbf{p}^\dagger$ is a pure Nash equilibrium of $\widetilde{\Gcal}$.
The following theorem shows that such strategies (and hence, equilibria) always exist and characterizes their associated level of test-time compute and price:
\begin{theorem}\label{thm:second-price-dominant}
    In the game $\widetilde{\Gcal}$, the choice of test-time compute level $\theta^\dagger_i = \theta^*_i = \arg\max_{\theta\in\Theta} W_i(\theta)$ and price $p^\dagger_i = c_i\left(\theta^*_i\right)$ is a dominant strategy for provider $i$.
\end{theorem}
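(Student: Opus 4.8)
The plan is to fix an arbitrary provider $i$ together with an arbitrary profile $(\thetab_{-i}, \mathbf{p}_{-i})$ of the competitors' strategies and reduce provider $i$'s decision problem to the maximization of a single scalar quantity. Write $M_{-i} \coloneqq \max_{j\neq i} V_j(\theta_j, p_j)$ for the best value currently offered by the competitors. The first step is to observe that whenever provider $i$ wins---that is, $V_i(\theta_i, p_i) > M_{-i}$---we have $\pi(1) = i$, and hence the second-ranked provider $\pi(2)$ is precisely the competitor achieving $M_{-i}$, so $V_{\pi(2)}(\theta_{\pi(2)}, p_{\pi(2)}) = M_{-i}$. Substituting this into the payment rule of Eq.~\ref{eq:second-price-payment} and then into Eq.~\ref{eq:utility-auction} yields $U_i(\theta_i, p_i; \thetab_{-i}, \mathbf{p}_{-i}) = q_i(\theta_i) - M_{-i} - c_i(\theta_i) = W_i(\theta_i) - M_{-i}$ when $i$ wins, and $U_i = 0$ otherwise. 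The structural point that drives everything is that this winning utility $W_i(\theta_i) - M_{-i}$ depends on the bid price $p_i$ only through the binary event of winning, not through its magnitude.

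The second step is to extract from this a uniform upper bound: for every strategy $(\theta_i, p_i)$ available to provider $i$,
\[
U_i(\theta_i, p_i; \thetab_{-i}, \mathbf{p}_{-i}) \;\le\; \max\{\, W_i(\theta_i^*) - M_{-i},\; 0\,\},
\]
since $U_i$ equals either $0$ or $W_i(\theta_i) - M_{-i} \le W_i(\theta_i^*) - M_{-i}$ by the definition $\theta_i^* = \arg\max_{\theta\in\Theta} W_i(\theta)$. The third step is to check that the proposed strategy $(\theta_i^\dagger, p_i^\dagger) = (\theta_i^*, c_i(\theta_i^*))$ attains this bound, regardless of $(\thetab_{-i}, \mathbf{p}_{-i})$. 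Here $V_i(\theta_i^*, c_i(\theta_i^*)) = q_i(\theta_i^*) - c_i(\theta_i^*) = W_i(\theta_i^*)$, so provider $i$ wins exactly when $W_i(\theta_i^*) > M_{-i}$, earning $W_i(\theta_i^*) - M_{-i}$, and otherwise loses and earns $0$; in both cases the utility equals $\max\{W_i(\theta_i^*) - M_{-i},\, 0\}$. Hence the proposed strategy is a best response to every competitor profile, i.e.\ a dominant strategy, and therefore $(\thetab^\dagger, \mathbf{p}^\dagger)$ is a pure Nash equilibrium of $\widetilde{\Gcal}$.

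This argument has no real obstacle; the only points that need care are two boundary/degeneracy issues. First, the case $W_i(\theta_i^*) = M_{-i}$ corresponds to an exact tie in offered values, which is excluded by the no-ties genericity assumption of Section~\ref{sec:auction} (and, as noted there, is otherwise absorbed into any fixed tie-breaking rule without affecting the conclusion). Second, it is worth making explicit the observation underpinning the first step---that by bidding a sufficiently low price a provider can always force a win whenever $q_i(\theta_i) > M_{-i}$, but doing so never beats the utility obtainable by pricing at its true cost $c_i(\theta_i^*)$---since this is exactly why truthful cost-pricing remains (weakly) optimal here despite the extra degree of freedom afforded by the choice of $p_i$.
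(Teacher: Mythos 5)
Your proof is correct, but it organizes the argument differently from the paper. The paper proceeds in two sequential stages: first it fixes $\theta_i$ and shows by case analysis (win vs.\ lose, deviate up vs.\ down) that bidding $p_i = c_i(\theta_i)$ is optimal for that fixed compute level; then, having reduced to strategies of the form $(\theta_i, c_i(\theta_i))$, it argues that $\theta_i^*$ maximizes the conditional-on-winning payoff $q_i(\theta_i) - c_i(\theta_i)$. Your proof instead uses the classical Vickrey-style one-shot bound: you show that \emph{every} strategy $(\theta_i, p_i)$ yields utility at most $\max\{W_i(\theta_i^*) - M_{-i}, 0\}$, and that the proposed strategy attains this ceiling. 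What your route buys is compactness and conceptual clarity---a single inequality replaces the paper's nested case analysis, and it makes visible that the argument is exactly the second-price truthfulness argument transported to the scoring-rule setting. What the paper's route buys is modularity: it isolates the two independent sources of truthfulness (price and compute), which matches the way the proposition is stated and makes it easier to see that cost-bidding is optimal for \emph{any} choice of $\theta_i$, not just $\theta_i^*$. Both are sound; your boundary remarks (no-ties genericity, and the observation that low-balling the price can force a win but cannot improve utility) are accurate and are the same considerations the paper handles implicitly inside its case analysis.
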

The above result follows from the fact that, conditioned on winning the auction, the payment defined in Eq.~\ref{eq:second-price-payment} does not depend on the price $p_{\pi(1)}$. Thus, a provider can simultaneously increase their utility and the value they offer to the user by selecting the compute level $\theta_i^*$ to maximize $V_i(\theta_i, c_i(\theta_i)) = \SW_i(\theta_i)$, and bidding the corresponding quality $q_i(\theta_i^*)$ and the lowest possible price---their generation cost $c_i$.

As an immediate consequence, in the Nash equilibrium $(\thetab^\dagger, \mathbf{p}^\dagger)$ of the game $\widetilde{\Gcal}$, the value $V_i(\theta_i, p_i)$ offered by each provider $i$ is equal to the maximum individual contribution to social welfare $\SW_i(\theta_i^*)$ they can make.
Since the platform selects the provider who offers the highest value to serve the task $\Tcal$, in equilibrium, the winning provider is the one who can make the highest individual contribution to social welfare, \ie, $\pi_{(1)} = \arg\max_{i\in[N]} \SW_i(\theta_i^*)$.
Then, the price of anarchy of the game $\widetilde{\Gcal}$ satisfies
\begin{equation}\label{eq:auction-efficient}
\textnormal{PoA}\left(\widetilde{\Gcal}\right) \coloneqq \frac{\max_{\thetab, \mathbf{p}} \textnormal{W}(\thetab, \mathbf{p})}{\textnormal{W}(\thetab^\dagger, \mathbf{p}^\dagger)}
\stackrel{(*)}{=} \frac{\max_i \max_{\theta_i \in \Theta} \SW_i(\theta_i)}{\SW_{\pi(1)}(\theta_{\pi(1)}^*)}
= 1,
\end{equation}
where $(*)$ holds because it is possible to 
set the prices $\mathbf{p}$ such that any provider wins the auction.
Together, the above results show that, at equilibrium, the game $\widetilde{\Gcal}$ achieves the maximum possible value of social welfare.

In this context, it is worth noting that, due to the choice of payment rule in Eq.~\ref{eq:second-price-payment}, the game $\widetilde{\Gcal}$ enjoys several additional desirable properties.
First, one can easily verify that the winning provider receives a payment strictly higher than the price they bid (\ie, $P(\thetab,\mathbf{p})>p_{\pi(1)}$), and therefore always obtains positive utility.
Second, the value received by the user matches the value offered by the second-best provider, \ie, $q_{\pi(1)}(\theta_{\pi(1)}) - P(\thetab,\mathbf{p}) =  V_{\pi(2)}(\theta_{\pi(2)}, p_{\pi(2)})$.
We further explore these properties, as well as their implications for users and providers, in our experiments in Section~\ref{sec:experiments}.

\section{Experiments}
\label{sec:experiments}
In this section, we conduct experiments to analyze the outcome of test-time compute games. Here, our goal is twofold: first, to validate our theoretical results in Section~\ref{sec:equilibrium} characterizing the dynamics and equilibria of test-time compute games; and second, to empirically demonstrate that, under realistic conditions, current test-time compute markets do not, in general, maximize social welfare. We begin by describing our experimental setup.\footnote{Refer to Appendix~\ref{app:experimental-details} for further details about our experimental setup.}

\begin{figure}[ht]
    \centering
    \begin{subfigure}{0.45\textwidth}
        \centering
        \includegraphics[width=0.95\linewidth]{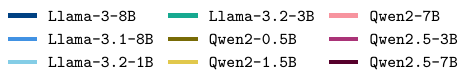} \\
        \vspace{0.1cm}
        \includegraphics[width=\linewidth]{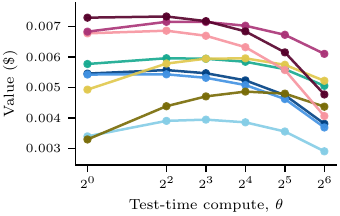}
        \caption{Non-reasoning models}
        \label{fig:main-values:a}
    \end{subfigure}
    \qquad
    \begin{subfigure}{0.45\textwidth}
        \centering
        \includegraphics[width=0.95\linewidth]{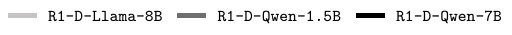}
        \vspace{0.1cm}
        \includegraphics[width=\linewidth]{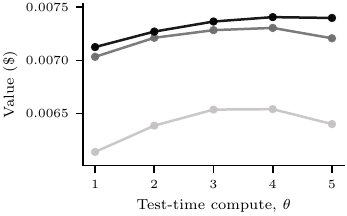}
        \caption{Reasoning models}
        \label{fig:main-values:b}
    \end{subfigure}
    \caption{{\bf User values offered by providers in a test-time compute game.}
     The figure shows the user values $V_i(\theta)$ offered by providers in a test-time compute game $\Gcal$, as a function of their test-time compute $\theta$.
     Panel~(a) corresponds to a test-time compute game with $N=9$ providers serving non-reasoning models from the \texttt{Llama} and \texttt{Qwen} families, where providers use best-of-n across $\theta$ samples as their test-time compute method. Panel (b) corresponds to a game with $N=3$ providers serving reasoning models distilled from \texttt{DeepSeek-R1}, where $\theta$ represents reasoning effort, defined by binning the model outputs into quantiles based on the number of reasoning tokens (see Appendix~\ref{app:experimental-details}).
     In both games, providers serve queries $Q$ from the \texttt{GSM8K} dataset, and we consider that each (average) percentage point of accuracy offers a value of $\$0.008$ to the users.
    }
    \label{fig:main-values}
\end{figure}

\xhdr{Experimental setup}
We consider test-time compute games, as defined in Section~\ref{sec:model}, in which each provider serves a different LLM, which we use to identify providers. More concretely, we study games $\Gcal$ with either: (i) $N=9$ providers, each serving a non-reasoning model selected from one of two families---four models from the \texttt{Llama} family, \texttt{\seqsplit{Llama-\{3-8B,3.1-8B,3.2-1B,3.2-3B\}-Instruct}}, and five models from the \texttt{Qwen} family, \texttt{\seqsplit{Qwen-\{2-0.5B,2-1.5B,2-7B,2.5-3B,2.5-7B\}-Instruct}}---or (ii) $N=3$ providers serving reasoning models distilled from \texttt{DeepSeek-R1}, namely \texttt{\seqsplit{DeepSeek-R1-Distill-\{Llama-8B, Qwen-1.5B, Qwen-7B\}}}.

In each game, the task $\Tcal$ that providers compete to serve consists of queries drawn from one of four widely used benchmarking datasets, with each dataset defining a distinct game: \texttt{GSM8K}~\citep{cobbe2021training}, \texttt{GPQA}~\citep{rein2023gpqagraduatelevelgoogleproofqa} and \texttt{AIME}~\citep{aime25}, which consist of STEM questions with verifiable ground-truth answers, and \texttt{TruthfulQA}~\citep{lin2022truthfulqameasuringmodelsmimic}, which evaluates models on their ability to generate truthful answers to questions reflecting common misconceptions.

To serve these tasks, all providers in a game use the same test-time compute method. Specifically, we consider either best-of-n~\citep{chow2024inference} or majority voting in games where providers serve non-reasoning models, and chain-of-thought~\citep{wei2022chain} in games where providers serve reasoning models. For best-of-n, providers can generate $\theta\in\{2^0, \ldots, 2^6\}$ independent model outputs for a given query, and they select as the final response the highest-scored output according to a fixed reward model, namely \texttt{ArmoRM-Llama3-8B-v0.1}.
Similarly, for majority voting, providers can generate $\theta\in\{2^0, \ldots, 2^6\}$ model outputs for a given query, and take as the final response the most frequent response among the generated outputs. In contrast, for chain-of-thought, since the reasoning models we consider simply generate a sequence of reasoning tokens but do not allow explicit control over how much test-time compute is used, we define discrete compute levels by generating $32$ outputs per question, and binning (for each question) the outputs into $5$ quantiles based on the number of reasoning tokens in the output. Then, we consider the outputs of each of the $5$ bins as being generated using a different test-time compute level $\theta$. Since \texttt{TruthfulQA} consists of open-ended questions where majority voting is not directly applicable, and involves short responses that do not benefit from extended chain-of-thought reasoning, we restrict the experiments on this dataset to non-reasoning models using best-of-n.

Lastly, to completely specify the game $\Gcal$, we need to define how the providers' compute choices determine both the value $V_i(\theta)$ they offer to users, and their own utilities $U_i(\thetab)$. To this end, as we report in Appendix~\ref{app:model-evaluation}, we first measure the average output quality and the average number of generated tokens of all twelve models across all datasets and test-time compute levels. More precisely, the output quality simply corresponds to the accuracy for \texttt{GSM8K}, \texttt{GPQA}, and \texttt{AIME}, whereas for the open-ended benchmark \texttt{TruthfulQA}, we use a \texttt{BLEURT}-based score~\citep{sellam2020bleurtlearningrobustmetrics} defined as the difference between the similarity to a true reference answer and the similarity to a false reference answer. Based on these measurements, we compute, for each provider and compute level $\theta$, the prices $p_i(\theta)$ by multiplying the average number of generated tokens per query by the average per-token price. These token prices are obtained from the Hugging Face list of providers (see Appendix~\ref{app:experimental-details}), where typical prices are on the order of $\sim$$\$0.1$ per million tokens. Then, we infer the provider costs $c_i(\theta)$ by assuming a fixed profit margin of $25\%$, defined as the percentage increase of the per-token price over the per-token cost, and for simplicity, we set this margin to be the same across providers. Finally, we linearly map model qualities into user values by considering that each (average) percentage point of accuracy is worth $\{\$0.008, \$0.02, \$0.05, \$0.008 \}$, respectively, for \texttt{GSM8K}, \texttt{GPQA}, \texttt{AIME}, and \texttt{TruthfulQA}. This choice is motivated by assigning a higher monetary value to quality, the harder the dataset is; as seen in Appendix~\ref{app:model-evaluation}, models typically perform best on \texttt{GSM8K} and \texttt{TruthfulQA} and worst on \texttt{GPQA} and \texttt{AIME}.

\begin{figure}[t]
    \centering
    
    \begin{subfigure}{0.45\textwidth}
        \centering
        \includegraphics[width=0.95\linewidth]{figures/var/legend_unreasoning.pdf} \\
        \vspace{0.1cm}
        \includegraphics[width=\linewidth]{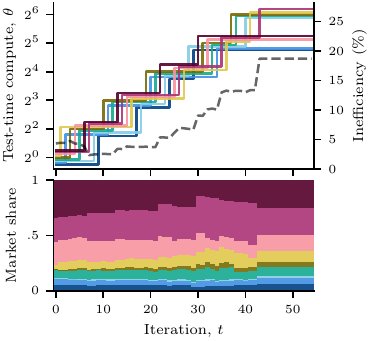}
        \caption{Non-reasoning models}
        \label{fig:main-dynamics:a}
    \end{subfigure}
    \qquad
    \begin{subfigure}{0.45\textwidth}
        \centering
        \includegraphics[width=0.95\linewidth]{figures/var/legend_reasoning.pdf}
        \vspace{0.1cm}
        \includegraphics[width=\linewidth]{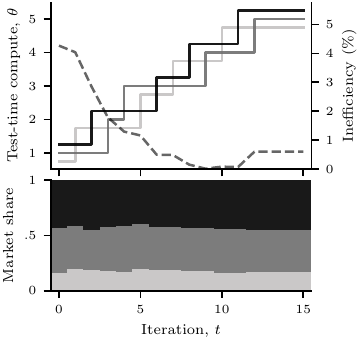}
        \caption{Reasoning models}
        \label{fig:main-dynamics:b}
    \end{subfigure}
    \caption{{\bf Dynamics of a test-time compute game.}
     The figure illustrates the better-response dynamics of a test-time compute game $\Gcal$ when providers sequentially select a test-time compute level that increases their utility. The upper panels show the compute levels $\theta^t$ selected by each provider and the resulting market inefficiency (dashed black curve), defined as $\max_\thetab \SW(\thetab)/\SW(\thetab^t)$, and the lower panels show the market share of each provider.
     Panel~(a) and Panel~(b) correspond to the games described in Figure~\ref{fig:main-values}, where providers serve, respectively, non-reasoning from the \texttt{Llama} and \texttt{Qwen} families, and reasoning models distilled from \texttt{DeepSeek-R1}. Here, we set $\beta=1000$.
}
    \label{fig:main-dynamics}
\end{figure}

\xhdr{Results}
For concreteness, here we focus on games where providers serve queries from \texttt{GSM8K} using best-of-n for non-reasoning models and chain-of-thought for reasoning models. We find qualitatively similar results across different levels of user rationality $\beta$, other combinations of test-time compute methods and datasets, and a range of provider profit margins and accuracy-to-value conversion parameter values (see Appendix~\ref{app:dynamics-results}).

Figure~\ref{fig:main-values} shows that the values offered by providers are approximately concave with respect to the compute. Specifically, users initially benefit from a slight increase in compute; however, as shown in Figure~\ref{fig:gsm8k_reasoning_models} in Appendix~\ref{app:model-eval-gsm8k}, the number of tokens---and thus the price paid by the user---increases rapidly with compute, eventually diminishing their value. In games with non-reasoning models (Figure~\ref{fig:main-values:a}), most providers offer comparable values to users, and no provider offers a value higher than their competitors simultaneously for all computes. In contrast, for reasoning models (Figure~\ref{fig:main-values:b}), the provider serving \texttt{R1-D-Qwen-7B} offers a higher value than their competitors across all computes.

To visualize how these differences in offered value translate into specific market outcomes, we show in Figure~\ref{fig:main-dynamics} the dynamics of a test-time compute game with boundedly rational users ($\beta=1000$), where providers sequentially adjust their compute levels to increase their own utility. Consistent with the value versus compute curves, when providers serve non-reasoning models, \texttt{Qwen2.5-7B} and \texttt{Qwen2.5-3B} have the majority of the market share, while \texttt{R1-D-Qwen-7B} and \texttt{R1-D-Qwen-1.5B} have the majority of the market share when providers serve reasoning models.
%
Furthermore, in accordance with Theorem~\ref{thm:potential-game}, since $\Gcal$ is a potential game, the better-response dynamics converge to a pure Nash equilibrium. This occurs after $t=44$ and $t=12$ iterations when serving non-reasoning models (Figure~\ref{fig:main-dynamics:a}) and when serving reasoning models (Figure~\ref{fig:main-dynamics:b}), respectively. 
Importantly, we find that, once providers have reached an equilibrium, the market is socially inefficient. In particular, when providers serve non-reasoning models, the price of anarchy is \mbox{$ 1.19$}, representing a $16\%$ loss in social welfare at the equilibrium. For reasoning models, the inefficiency is approximately $0.6\%$. However, it is noteworthy that even in this latter case---which features a more constrained strategy space with only $N=3$ providers---social welfare is not maximized. Altogether, our result suggests that competition in test-time compute markets does not naturally align with the social optimum, regardless of the number of competing providers.

Lastly, in Table~\ref{tab:auction-main}, we compare the outcome of the above test-time compute games $\Gcal$ with the outcomes if the providers and users had instead run the second-price auction described in Section~\ref{sec:auction}. We find that, compared to the test-time compute game $\Gcal$, the auction mechanism $\widetilde{\Gcal}$ increases the value of users by $29\%$ and $2.8\%$, and the social welfare by $25\%$ and $4\%$, respectively, for non-reasoning models and reasoning models. Note that as shown in Figure~\ref{fig:main-dynamics}, we have $\text{PoA}(\Gcal)>1$, while it is guaranteed that $\text{PoA}(\widetilde{\Gcal})=1$ (Eq~\ref{eq:auction-efficient}). However, we also find that, under the auction mechanism, providers can either decrease the utility ($-28\%$ for non-reasoning models) or increase it ($+200\%$ for reasoning models). This effect is mainly driven by the fact that the second-price payment in Eq.~\ref{eq:second-price-payment} depends on how competitive the market is, which varies across game instances (see Figure~\ref{fig:main-values}).

\begin{table}[H]
\centering
\caption{\textbf{Comparison of the equilibrium between $\Gcal$ and the auction $\widetilde{\Gcal}$.}
The table compares the equilibrium outcomes of the test-time compute games $\Gcal$ ($\beta=1000$) in Figure~\ref{fig:main-values} with the outcomes of the auction $\widetilde{\Gcal}$ for the same set of providers using best-of-n with non-reasoning models, and chain-of-thought with reasoning models.
For $\Gcal$, the table shows the providers' average utility, the users' average value and price, all weighted by their equilibrium market shares, together with the social welfare. 
For $\widetilde{\Gcal}$, where the auction winner is \texttt{Qwen2.5-7B}, followed by \texttt{Qwen2.5-3B} and
\texttt{R1-D-Qwen-7B,} followed by \texttt{R1-D-Qwen-1.5B}, respectively for non-reasoning and reasoning models, the table shows the value received by the user (once the auction is conducted), the price they pay (according to Eq.~\ref{eq:second-price-payment}), the provider's utility, and the social welfare.
All quantities (except the compute level) have units $(\$ \times 10^{-3})$. 
}
\begin{tabular}{lcccc}
\toprule
{} & \multicolumn{2}{c}{Non-reasoning} & \multicolumn{2}{c}{Reasoning} \\
\cmidrule(lr){2-3} \cmidrule(lr){4-5}
{} & Game $\Gcal$  & Auction $\widetilde{\Gcal}$  & Game $\Gcal$ & Auction $\widetilde{\Gcal}$ \\
\midrule
Compute level ($\theta$)   &  16    &   2     &   4    &   4  \\
User value           & 5.5  & 7.1   & 7.1  & 7.3 \\
Price                & 1.2  & 0.32   & 0.18 & 0.30    \\
Provider(s') utility & 0.25 & 0.18 & 0.04 & 0.12    \\
Social welfare       & 5.8  & 7.3   & 7.1  & 7.4    \\
\bottomrule
\end{tabular}
\label{tab:auction-main}
\end{table}


\section{Discussion and Limitations}
\label{sec:discussion}
In this section, we highlight several limitations of our work and discuss avenues for future research.

\xhdr{Model}
Our definition of test-time compute games considers only factors related to the quality of LLM outputs and financial aspects, such as the pricing of LLMs by providers and their generation costs.
In practice, however, other factors may influence users' choices among providers, leading to more complex behavior than described by the canonical market share functions in Table~\ref{tab:market-models}. These could include, for example, a provider's popularity or the support and reliability of their API service.
Consequently, it would be interesting to extend our definition of test-time compute games to account for these additional factors and develop auction mechanisms for such (complex) games.
Moreover, in our work, we assume that all users derive the same (average) value from a given LLM and test-time compute level; however, in practice, different users may value output quality differently, even for the same query and output. 
Extending test-time compute games to heterogeneous users is an interesting direction for future work.

\xhdr{Evaluation}
We have studied the equilibria of test-time compute games on simulated markets using several state-of-the-art LLMs, including both reasoning and non-reasoning models, and test-time compute methods.
In each game, a set of simulated providers competes to serve queries drawn from a benchmarking dataset, either with verifiable ground truth (as in \texttt{GSM8K}, \texttt{GPQA}, and \texttt{AIME}) or with reference correct and incorrect answers, as in the open-ended \texttt{TruthfulQA}.
However, it would be interesting to study the equilibria of test-time compute games in settings where providers compete to serve open-ended queries and quality is measured automatically without relying on predefined correct outputs.
Further, it would be important to analyze the dynamics of the LLM-as-a-service market using real-world data, and deploy and evaluate our proposed auction mechanism in a real-world deployment.

\xhdr{Practical considerations}
Our second-price auction requires that providers accurately estimate the (average) quality that the model they serve can deliver for a given task.
In many domains where LLMs are currently used, such as coding or information retrieval, automated evaluation procedures make this feasible with limited overhead, especially when user queries overlap with standardized benchmarks on which providers have already evaluated their models. At the same time, the platform implementing the auction can leverage these procedures to verify and evaluate the quality of a provider’s responses. As we show in Appendix~\ref{app:auction-verification}, our auction can be naturally extended with such evaluations to ensure that providers maximize their utility by truthfully reporting their quality estimates.
Looking ahead, the feasibility of an auction in settings with less well-established quality metrics and benchmarks will depend on continued progress in the LLM evaluation literature.
%
%
%
%
Lastly, we note that, while second-price auctions possess many desirable properties, they also make a provider’s payment dependent on the bids of other providers, leaving the mechanism potentially vulnerable to manipulation or collusion~\citep{Hendricks1989, MAILATH1991467, CHE2018398}.

\xhdr{Adoption incentives}
Our empirical results in Appendix~\ref{app:auction-tables} suggest that, while the auction mechanism introduced in Section~\ref{sec:auction} is guaranteed to increase social welfare, in some settings this may come at the expense of provider utility. However, the same results also show that users typically obtain higher value when participating in the auction, especially compared to settings in which they have limited information about the value offered by providers (\ie, when the parameter $\beta$ is low). As a result, users may be more likely to participate in a platform implementing the auction, which could, in turn, incentivize provider participation, as it offers access to a larger share of the user base.
%


\section{Conclusions}
\label{sec:conclusions}
In this work, we introduced \emph{test-time compute games}, a game-theoretic model in which LLM providers in a market of LLMs-as-a-service strategically select the level of test-time compute used by their model to compete for user queries and maximize their profit.
Based on this model, we have demonstrated, first theoretically and then empirically, that current pricing approaches in markets of LLMs-as-a-service incentivize providers to set their level of test-time compute in a way that is not aligned with social welfare, as it does not optimally balance the tradeoff between generation cost and output quality.
%
%
To address this, we proposed a forward-looking market based on a reverse second-price auction that provably aligns provider incentives with social welfare.
We hope that our work inspires users, LLM providers, and online platforms to explore alternative market structures for generative AI that place social welfare at the center of their design choices.

\vspace{2mm}
\xhdr{Acknowledgements} 
Gomez-Rodriguez acknowledges support from the European Research Council (ERC) under the European Union'{}s Horizon 2020 research and innovation programme (grant agreement No. 101169607).
Tsirtsis acknowledges support from the Alexander von Humboldt Foundation in the framework of the Alexander von Humboldt Professorship (Humboldt Professor of Technology and Regulation awarded to Sandra Wachter) endowed by the Federal Ministry of Education and Research via the Hasso Plattner Institute.

{ 
\small
\bibliographystyle{unsrtnat}
\bibliography{llm-inference-incentives}
}

\clearpage
\newpage

\appendix

\section{Additional Details for the Introductory Example}\label{app:example}

Here, we provide additional details regarding the introductory example in Section~\ref{sec:intro}. Therein, two LLM providers compete to serve a user on a fixed task and may each choose between a low- and a high-TTC mode for their respective models. The resulting average output accuracy and generation cost associated with each mode are given by:

\begin{table}[H]
\centering

\begin{minipage}{0.49\textwidth}
\centering
\textbf{Provider 1}\\[4pt]
\begin{tabular}{lcc}
\toprule
{}        & Avg.\ accuracy & Avg.\ gen. cost \\
\midrule
Low TTC   & 70\%           & \$0.25          \\
High TTC  & 90\%           & \$1             \\
\bottomrule
\end{tabular}
\end{minipage}
\hfill
\begin{minipage}{0.49\textwidth}
\centering
\textbf{Provider 2}\\[4pt]
\begin{tabular}{lcc}
\toprule
{}        & Avg.\ accuracy & Avg.\ gen. cost \\
\midrule
Low TTC   & 50\%           & \$0.5           \\
High TTC  & 95\%           & \$10            \\
\bottomrule
\end{tabular}
\end{minipage}

\end{table}

Suppose that providers have fixed their prices to obtain a margin of $25\%$ over their generation costs for both low and high TTC modes, and that the user values each percentage of accuracy as $\$ 0.02$. Then, using the notation in Section~\ref{sec:model}, for each compute mode $\theta\in\{\textnormal{Low}, \textnormal{High} \}$ used by each provider $i=1,2$, the value they offer to the user is
\begin{equation}
    V_i(\theta) = \underbrace{\$0.02 \times a_i(\theta)}_{q_i(\theta)} - \underbrace{1.25 \times c_i(\theta)}_{p_i(\theta)},
\end{equation}
where $a_i(\theta)\in[0,100]$ are the average accuracy when selecting compute $\theta$, and $c_i$ are the generation costs. Using the specific values in the above table, we obtain:

\begin{equation}
    \begin{dcases}
        V_1(\textnormal{Low}) = \$1.0875\\
        V_1(\textnormal{High}) = \$0.55
    \end{dcases}
    \quad \textnormal{and} \quad
        \begin{dcases}
        V_2(\textnormal{Low}) = \$0.375\\
        V_2(\textnormal{High}) = -\$10.6.
    \end{dcases} 
\end{equation}
Consequently, a perfectly rational user would select the first provider to serve the queries, no matter which compute level they use, and the market (for this particular task) is monopolistically dominated by the first provider. The total user value and provider utility of this market (\ie, the social welfare $\SW$ in Section~\ref{sec:equilibrium}) is, depending on the compute level selected by the first provider:
\begin{equation}
    \begin{dcases}
        \SW(\theta_1 = \textnormal{Low},\theta_2) = V_1(\textnormal{Low}) + p_1(\textnormal{Low}) - c_1(\textnormal{Low}) = \$ 1.15\\
        \SW(\theta_1 = \textnormal{High},\theta_2) = V_1(\textnormal{High}) + p_1(\textnormal{High}) - c_1(\textnormal{High}) = \$ 0.8,
    \end{dcases}
\end{equation}
irrespectively of the compute level $\theta_2$ of the second provider. That is, the low compute mode is socially optimal. However, the utility obtained by the first provider is
\begin{equation}
   \begin{dcases}
        U_1(\textnormal{Low};\theta_2) = 0.25\times \$0.25=\$0.0625\\
        U_1(\textnormal{High};\theta_2) = 0.25\times \$1=\$0.25,
   \end{dcases}
\end{equation}
which strictly incentivizes the provider to select the high-TTC mode despite its lower social welfare. In this stylized example, the price of anarchy is:
\begin{equation}
    \text{PoA} = \frac{\$1.15}{\$0.8} = 1.4375.
\end{equation}

\clearpage
\newpage

\section{Auction with Quality Verification}\label{app:auction-verification}

In this section, we show that the auction mechanism introduced in Section~\ref{sec:auction} can be naturally extended to allow the platform to verify the average quality of the models deployed by providers. As we will now demonstrate, this simple extension incentivizes providers to truthfully report their quality estimates.

Consider an extension of the auction $\widetilde{\Gcal}$ that proceeds in multiple stages as follows (we overload notation for simplicity and also refer to it as $\widetilde{\Gcal}$). A total of $N$ providers bid simultaneously for the opportunity to serve a task $\Tcal$ for a user.  The auction begins with the user submitting to the platform a small set $\Qcal$ of queries representative of the task $\Tcal$, which are then shared with all providers. Providers can use this information to estimate the quality of their models and determine the price they wish to offer (\eg, by directly evaluating their model on $\Qcal$, using private historical data, \etc).\footnote{As in Section~\ref{sec:auction}, quality refers to any agreed-upon metric between the user, the platform, and the providers, such as pass@k for coding tasks.}

In the first stage of the auction, each provider $i$ submits a bid consisting of a pair $(\tilde{q}_i, p_i)$ to the platform, where $\tilde{q}_i$ is a reported estimate of their model quality on $\Tcal$ and $p_i$ is their bid price. The platform assigns the task $\Tcal$ to the provider offering the highest value $V_i = \tilde{q}_i - p_i$. 
In the second stage, the winning provider, denoted by $\pi(1)$, selects a test-time compute level $\theta_{\pi(1)}$ (which need not be known to the platform or the user) and generates responses to the queries in $\Tcal$. These responses are shared with both the user and the platform. Based on the observed responses, the platform computes an unbiased estimator $\hat{q}_{\pi(1)}$ of the true model quality on task $\Tcal$ at compute level $\theta_{\pi(1)}$, \ie,
\begin{equation}
    \EE[\hat{q}_{\pi(1)} \given \theta_{\pi(1)}] = q_{\pi(1)}(\theta_{\pi(1)}).
\end{equation}
For example, the platform may randomly sample a subset of the responses and evaluate their quality.\footnote{There already exist online platforms that continuously monitor the quality of proprietary models on selected tasks, such as \url{https://marginlab.ai/trackers/claude-code-historical-performance/}.} The platform then sets the payment to be made by the user to the provider for serving task $\Tcal$ as
\begin{equation}
    P(\mathbf{\tilde{q}}, \mathbf{p}) = \hat{q}_{\pi(1)} - V_{\pi(2)},
\end{equation}
where $\mathbf{\tilde{q}}$ and $\mathbf{p}$ denote the vectors of reported qualities and prices, respectively. Consequently, the (average) utility obtained by each provider is simply:
\begin{equation}
    U_i(\tilde{q}_i, \theta_i, p_i; \mathbf{\tilde{q}}_{-i}, \thetab_{-i}, \mathbf{p}_{-i}) 
    = \EE\left[ \left( P(\mathbf{\tilde{q}}, \mathbf{p}) - c_i(\theta_i) \right) 
    \cdot \mathbf{1}\left\{ V_i > \max_{j \neq i} V_j \right\} \Bigg| \theta_{\pi(1)} \right],
\end{equation}
where the expectation is taken over the randomness in the platform's estimator $\hat{q}_{\pi(1)}$, and note that only the provider serving the task obtains non-zero utility. Thus, the auction $\widetilde{\Gcal}$ can be viewed as an extensive-form game in which each provider’s action is a triplet $(\tilde{q}_i, \theta_i, p_i)$ and, analogously to Section~\ref{sec:auction}, the social welfare is
\begin{equation}
    \SW(\mathbf{\tilde{q}}, \thetab, \mathbf{p}) 
    = q_{\pi(1)}(\theta_{\pi(1)}) - c_{\pi(1)}(\theta_{\pi(1)})
    = W_{\pi(1)}(\theta_{\pi(1)}).
\end{equation}

The next result formalizes the fact that, in this extended auction $\widetilde{\Gcal}$, truthfully reporting quality constitutes a (weakly) dominant strategy:

\begin{theorem}\label{thm:second-price-extended-dominant}
In the extended auction $\widetilde{\Gcal}$, the strategy
\begin{equation}
    (\tilde{q}_i, \theta_i, p_i) = (q_i(\theta_i^*), \theta_i^*, c_i(\theta_i^*))
\end{equation}
is weakly dominant for each provider $i$, where $\theta_i^* \in \arg\max_{\theta\in\Theta} \{ q_i(\theta) - c_i(\theta) \}$.
\end{theorem}

\begin{proof}
Fix an extended game $\widetilde{\mathcal{G}}$, a provider $i$, and any compute levels ${\thetab}_{-i}$, bid prices $\mathbf{p}_{-i}$, and bid qualities $\tilde{\mathbf{q}}_{-i}$. We will prove that, for provider $i$, selecting $\theta_i = \theta^*_i$, $p_i = c_i(\theta^*_i)$, and $\tilde{q}_i = q_i(\theta^*_i)$ is a weakly dominant strategy, that is, any deviation does not strictly increase their utility. Let $V^{(1)}_{-i} \geq V^{(2)}_{-i} \geq \cdots$ denote the values offered by the other providers in decreasing order, which are fixed since $(\tilde{\mathbf{q}}_{-i}, \thetab_{-i}, \mathbf{p}_{-i})$ are fixed. Consider any possible deviation $(\tilde{q}_i, \theta_i, p_i)$ for provider $i$, and distinguish the following two cases.

\textbf{Case 1: $V^{(1)}_{-i} > q_i(\theta^*_i) - c_i(\theta^*_i)$.} Under the candidate strategy $(q_i(\theta^*_i), \theta^*_i, c_i(\theta^*_i))$, provider $i$ loses the auction and receives utility zero. If the deviation also loses, the utility is likewise zero. If the deviation wins, the second-best value is $V^{(1)}_{-i}$ and the utility is
\begin{align*}
    U_i(\tilde{q}_i, \theta_i, p_i;\, \tilde{\mathbf{q}}_{-i}, {\thetab}_{-i},
    \mathbf{p}_{-i}) &= \mathbb{E}[\hat{q}_i - c_i(\theta_i) - V^{(1)}_{-i} \mid \theta_i]
    \\
    &= q_i(\theta_i) - c_i(\theta_i) - V^{(1)}_{-i} \\
    &= W_i(\theta_i) - V^{(1)}_{-i} \\
    &\leq W_i(\theta^*_i) - V^{(1)}_{-i} \leq 0,
\end{align*}
where the first inequality uses $W_i(\theta_i) \leq W_i(\theta^*_i)$ by definition of $\theta^*_i$, and the second uses $V^{(1)}_{-i} \geq q_i(\theta^*_i) - c_i(\theta^*_i) = W_i(\theta^*_i)$. Hence, no deviation strictly increases the utility of provider $i$ above zero.

\textbf{Case 2: $V^{(1)}_{-i} < q_i(\theta^*_i) - c_i(\theta^*_i)$.} Under the candidate
strategy, provider $i$ wins and the second-best value is $V^{(1)}_{-i}$, yielding utility
$U_i(q_i(\theta^*_i), \theta^*_i, c_i(\theta^*_i);\, \tilde{\mathbf{q}}_{-i},
\boldsymbol{\theta}_{-i}, \mathbf{p}_{-i}) = W_i(\theta^*_i) - V^{(1)}_{-i} > 0$. If the
deviation loses, the resulting utility is $0 < W_i(\theta^*_i) - V^{(1)}_{-i}$, which is not profitable.
If the deviation wins, the second-best value is again $V^{(1)}_{-i}$ (since the other
providers' strategies are unchanged), and the utility is
\begin{align*}
    U_i(\tilde{q}_i, \theta_i, p_i;\, \tilde{\mathbf{q}}_{-i}, \boldsymbol{\theta}_{-i},
    \mathbf{p}_{-i}) &= \mathbb{E}[\hat{q}_i - c_i(\theta_i) - V^{(1)}_{-i} \mid \theta_i]
    \\
    &= q_i(\theta_i) - c_i(\theta_i) - V^{(1)}_{-i} \\
    &= W_i(\theta_i) - V^{(1)}_{-i} \\
    &\leq W_i(\theta^*_i) - V^{(1)}_{-i} \\
    &= U_i(q_i(\theta^*_i), \theta^*_i, c_i(\theta^*_i);\, \tilde{\mathbf{q}}_{-i},
    \boldsymbol{\theta}_{-i}, \mathbf{p}_{-i}).
\end{align*}
Hence, no deviation strictly increases the utility of provider $i$ above $W_i(\theta^*_i) -
V^{(1)}_{-i}$.

In both cases, no deviation from $(q_i(\theta^*_i), \theta^*_i, c_i(\theta^*_i))$ strictly
increases provider $i$'s utility, so this strategy is weakly dominant.
\end{proof}

\clearpage
\newpage

\section{Deferred Proofs}\label{app:proofs}

\subsection{Proof of Theorem~\ref{thm:potential-game}}\label{app:proof-potential}

We prove Theorem~\ref{thm:potential-game} for the case of perfectly rational ($\beta=\infty$) and boundedly rational ($\beta<\infty$) separately.

\subsubsection{Potential game (perfect rationality)}

We will show that, in the case where users are perfectly rational, the game $\Gcal$ is a generalized ordinal potential game, \ie, there exists a function $\Phi: \Theta^N \to \RR$ such that for any provider $i$:
\begin{equation*}
    U_i\left(\theta'_i ; \thetab_{-i}\right) - U_i\left(\theta_i ; \thetab_{-i}\right) >0 \Rightarrow \Phi\left(\theta'_i ; \thetab_{-i}\right) - \Phi\left(\theta_i ; \thetab_{-i}\right) > 0 \quad \text{for all }\, \thetab\in\Theta^{N}, \theta_i'\in \Theta.
\end{equation*}

Given the test-time compute levels of all providers $\thetab$, let $\pi$ be the permutation ordering providers by the value they offer to users, \ie,

\begin{equation*}
    V_{\pi(1)}(\theta_{\pi(1)}) > V_{\pi(2)}(\theta_{\pi(2)}) > \dots > V_{\pi(N)}(\theta_{\pi(N)}).
\end{equation*}
and denote, for conciseness, $V_{\pi(1)} = V_{\pi(1)}(\theta_{\pi(1)})$ and $V_{\pi(2)} = V_{\pi(2)}(\theta_{\pi(2)})$.
We define the potential function as a weighted sum of the second largest value and the utility gained by the provider $\pi(1)$, \ie,
\begin{align}\label{eq:app-potential-argmax}
    \Phi(\thetab) & = C\cdot V_{\pi(2)} + \log   U_{\pi(1)}(\theta_{\pi(1)} ; \thetab_{-\pi(1)}) \\
    & = C\cdot V_{\pi(2)} + \log \left( p_{\pi(1)}(\theta_{\pi(1)}) - c_{\pi(1)}(\theta_{\pi(1)}) \right),\nonumber
\end{align}
where $C>0$ is a constant whose value we will specify later on.

To prove that the function given by Eq.~\ref{eq:app-potential-argmax} is a generalized ordinal potential function for the game $\Gcal$, we distinguish two cases, depending on whether the provider $i$ who unilaterally changes their test-time compute level to increase their utility (i) is the winning provider under $\thetab$, that is, $i=\pi(1)$, or (ii) is not the winning provider under $\thetab$.

In the first case, based on Eqs.~\ref{eq:increasing_utility},~\ref{eq:provider-utility} and the fact that the provider $i$ strictly increased their utility, it has to hold that $s_i(V_i(\theta_i);\Vb_{-i}(\thetab_{-i})) = s_i(V_i(\theta'_i);\Vb_{-i}(\thetab_{-i})) = 1$ and $\theta_i \prec \theta'_i$.
This implies that the ordering of providers in terms of the value they offer to the users remains the same before and after the change of test-time compute level by provider $i$, and the second highest value under $\thetab$ and $\thetab'$ coincide. Thus,
\begin{align*}
    \Phi\left(\theta'_i ; \thetab_{-i}\right) - \Phi\left(\theta_i ; \thetab_{-i}\right) &= 
    \log U_{\pi(1)}(\theta'_{\pi(1)} ; \thetab_{-\pi(1)}) - \log U_{\pi(1)}(\theta_{\pi(1)} ; \thetab_{-\pi(1)})\\
    &= \log U_{i}(\theta'_{i} ; \thetab_{-i}) - \log U_{i}(\theta_{i} ; \thetab_{-i}) > 0.
\end{align*}

In the second case, it holds that $s_i(V_i(\theta_i);\Vb_{-i}(\thetab_{-i})) = 0$, therefore, the only way for provider $i$ to strictly increase their utility is by offering a higher value to the users than the previous winning provider and becoming the winning provider themselves.
Therefore, under $\thetab'$, the winning provider is $i$ and the provider who offers the second highest value to the users is the previous winning provider, which implies that $V'_{\pi'(2)} = V_{\pi(1)}$, where $\pi'$ orders providers according to the values they offer under $\thetab'$.
%
%
As a result, the difference in the function $\Phi$ is
\begin{equation}\label{eq:potential_diff_helper}
    \Phi\left(\theta'_i ; \thetab_{-i}\right) - \Phi\left(\theta_i ; \thetab_{-i}\right) = C\cdot (V_{\pi(1)} - V_{\pi(2)}) + \log U_i (\theta'_i ; \thetab_{-i}) - \log U_{\pi(1)} (\theta_{\pi(1)} ; \thetab_{-\pi(1)}),
\end{equation}
where $V_{\pi(1)} - V_{\pi(2)} > 0$ because we have assumed no ties in the values providers can offer to the users.
Moreover, since the action space $\Theta$ is finite, there has to exist a value $U_{max} > 0$ such that, for all $j$ and $\thetab$, it holds that $U_j(\theta_j ; \thetab_j) \leq U_{max}$. Similarly, there exists a value $U_{min}>0$ such that for all $j$ and $\thetab$, it holds that if $U_j(\theta_j ; \thetab_j) > 0$, then $U_j(\theta_j ; \thetab_j) > U_{min}$.
Combining this with the fact that provider utilities are non-negative, Eq.~\ref{eq:potential_diff_helper} implies that
\begin{equation*}
    \Phi\left(\theta'_i ; \thetab_{-i}\right) - \Phi\left(\theta_i ; \thetab_{-i}\right) \geq C\cdot (V_{\pi(1)} - V_{\pi(2)}) + \log U_{min} - \log U_{max},
\end{equation*}
and it suffices to find a value for the constant $C$ such that $C>\frac{\log U_{max} - \log U_{min}}{V_{\pi(1)} - V_{\pi(2)}}$.
Since we have assumed no ties in the values that providers can offer to the users, there has to exist a $\delta_{min}$ such that $V_i(\theta_i) - V_{j}(\theta_j) > \delta_{min} $, uniformly across providers $i,j$ and compute levels $\theta_i,\theta_j$.
Setting $C=\frac{\log U_{max} - \log U_{min}}{\delta_{min}}$ concludes the proof.

\subsubsection{Potential game (bounded rationality)}

We will show that, in the case of boundedly rational users, the game $\Gcal$ with $\beta<\infty$ is also a generalized ordinal potential game.
To this end, we extend the potential in Eq.~\ref{eq:app-potential-argmax} under perfect rationality to a potential $\Phi$ under bounded rationality by observing that: (i) for $\beta<\infty$, all providers can have a non-zero market share (and hence non-zero utilities), (ii) each provider partially aims to improve their profit $p_i(\theta_i) - c_i(\theta_i)$ while still offering a sufficiently high value $V_i(\theta_i)$ relative to others, and (iii) in the log domain, utilities decompose as:
\begin{align}
    \log U_i(\theta_i;\thetab_{-i}) &= \log\left( s_i(V_i(\theta_i); \Vb_{-i}(\thetab_{-i})) \cdot (p_i(\theta_i)-c_i(\theta_i)) \right)\\
    &=\log\left(p_i(\theta_i)-c_i(\theta_i)\right) + \beta V_i(\theta_i) - \log \left( \sum_{j=0}^N \exp(\beta\cdot V_j(\theta_j)) \right),
\end{align}
for $\thetab\in\Theta^N$ and any provider $i$, and we denote $V_0(\theta_0) \coloneqq V_0$ for simplicity. Based on the three observations above, let us define:
\begin{equation}
    \Phi(\thetab) = \sum_{j=1}^N \log(p_j(\theta_j)-c_j(\theta_j)) +  \beta\cdot\sum_{j=1}^N  V_j(\theta_j) - \log \left( \sum_{j=0}^N \exp(\beta\cdot V_j(\theta_j)) \right),\quad\forall \thetab\in\Theta^N,
\end{equation}
where note that the last term is not summed over all providers, since it is a shared normalization constant related to the softmax allocation $s$ under bounded rationality.
To verify that $\Phi$ is indeed an ordinal potential function, consider a deviation $\theta_i'$ for provider $i$.
Then:
\begin{align*}
    &U_i(\theta'_i; \boldsymbol{\theta}_{-i}) > U_i(\theta_i; \boldsymbol{\theta}_{-i})\\[10pt]
    \iff & \log U_i(\theta'_i; \boldsymbol{\theta}_{-i}) > \log U_i(\theta_i; \boldsymbol{\theta}_{-i}) \\[10pt]
    \iff & \log(p_i(\theta'_i)-c_i(\theta'_i)) + \beta V_i(\theta'_i) - \log \left( \sum_{j \neq i} e^{\beta V_j(\theta_j)} + e^{\beta V_i(\theta'_i)} \right) \\
    &\quad > \log(p_i(\theta_i)-c_i(\theta_i)) + \beta V_i(\theta_i) - \log \left( \sum_{j=0}^N e^{\beta V_j(\theta_j)} \right) \\[10pt]
    \iff & \sum_{j \neq i,0} \left[ \log(p_j(\theta_j)-c_j(\theta_j)) + \beta V_j(\theta_j) \right] + \log(p_i(\theta'_i)-c_i(\theta'_i)) + \beta V_i(\theta'_i) \\
    &\quad {}- \log \left( \sum_{j \neq i} e^{\beta V_j(\theta_j)} + e^{\beta V_i(\theta'_i)} \right) \\
    &\quad > \sum_{j \neq i,0} \left[ \log(p_j(\theta_j)-c_j(\theta_j)) + \beta V_j(\theta_j) \right] + \log(p_i(\theta_i)-c_i(\theta_i)) + \beta V_i(\theta_i) \\
    &\quad {}- \log \left( \sum_{j=0}^N e^{\beta V_j(\theta_j)} \right).
\end{align*}
and we can conclude that:
\begin{align*}
    &U_i(\theta'_i; {\thetab}_{-i}) > U_i(\theta_i; {\thetab}_{-i})\\
    \iff & \textcolor{blue}{\sum_{j \neq i} \left[ \log(p_j(\theta_j)-c_j(\theta_j)) + \beta V_j(\theta_j) \right] + \log(p_i(\theta'_i)-c_i(\theta'_i)) + \beta V_i(\theta'_i)} \\
    &\quad \textcolor{blue}{{}- \log \left( \sum_{j \neq i} e^{\beta V_j(\theta_j)} + e^{\beta V_i(\theta'_i)} \right)} \\
    &\quad > \textcolor{orange}{\sum_{j=1}^N \left[\log(p_j(\theta_j)-c_j(\theta_j)) + \beta V_j(\theta_j)\right] - \log \left( \sum_{j=0}^N e^{\beta V_j(\theta_j)} \right)} \\
    \iff & \textcolor{blue}{\Phi(\theta'_i; {\thetab}_{-i})} > \textcolor{orange}{\Phi({\thetab})}.
\end{align*}

\clearpage
\newpage

\subsection{Proof of Theorem~\ref{thm:rational-equilibrium}}

We fix a test-time compute game $\Gcal$ under perfect rationality and any pure Nash equilibrium $\thetab^\dagger\in\Theta^N$. Firstly, observe that it must be
\begin{equation}
    s_{1}\left(V_{1}(\theta^\dagger_{1});\Vb_{-1}(\thetab^\dagger_{-1})\right) = 1.
\end{equation}
Indeed, if the above were false, by definition of $s$ under perfect rationality, there would exist a provider $i\neq 1$ such that $V_{i}(\theta^\dagger_{i})>V_{1}(\theta^\dagger_{1})$. Then, given that providers are indexed in decreasing order of their maximum possible offered values, we have that $V^*_{1} > V^*_{i}$ and hence provider $1$ could best respond and strictly increase their utility, meaning that $\thetab^\dagger$ would not be a Nash equilibrium. Hence, in $\thetab^\dagger$, provider $1$ offers a strictly higher value than any other provider.
Furthermore, if $V_{1}(\thetab^\dagger_{1})< V_{2}^*$, then, at least provider $2$ could best respond by increasing the value they offer to $V_2^*$, serving all queries and obtaining strictly positive utility. Thus, in $\thetab^\dagger$ it holds that $V_{1}(\theta_{1}) > V^*_{2}$, and since provider $1$ is playing a best response conditional on serving all queries, it must be that:

\begin{equation}\label{eq:app-dominant-NE-rational}
     \theta^\dagger_{1}=\argmax_{\thetab_{1} \in \Theta} 
        \left\{ p_{1}(\theta_{1}) - c_{1}(\theta_{1}) \;\middle|\; V_{1}(\theta_{1}) > V^*_{2} \right\}=\max\{\theta_1 | V_{1}(\theta_{1}) > V^*_{2}  \},
\end{equation}
where the second equality holds because providers have profits that are increasing with the compute level. Next, we derive conditions on the compute of providers $i\neq 1$ such that $\thetab^\dagger$ is a Nash equilibrium. To this end, observe that since $V_1(\theta^\dagger_1)> V_2^*$, any provider $i\neq 1$ is unable to serve any queries, and hence has utility $0$ independently of their deviations in $\thetab^\dagger$.  We then distinguish two cases:
\begin{itemize}
    \item If for any $\theta_{1} \in \Theta$ we have that $V_{1}(\theta_{1})>V_{2}^*$, then any provider $i\neq 1$ cannot increase their utilities, and provider 1 selecting their compute as in Eq.~\ref{eq:app-dominant-NE-rational} is not able to improve their utility. We conclude that in this case, providers $i\neq 1$ selecting their compute arbitrarily leads to a pure Nash equilibrium (although the equilibrium is not strict).
    \item Suppose there exists $\theta_{1} \in \Theta$ such that $V_{1}(\theta_{1}) < V_{2}^*$ and that $p_{1}(\theta_{1}) - c_{1}(\theta_{1}) \geq p_{1}(\theta^\dagger_{1}) - c_{1}(\theta^\dagger_{1})$ (note that we recover the previous case if $p_{1}(\theta_{1}) - c_{1}(\theta_{1}) < p_{1}(\theta^\dagger_{1}) - c_{1}(\theta^\dagger_{1})$), with $\theta^\dagger_{1}$ as in Eq.~\ref{eq:app-dominant-NE-rational}. Then, provider $1$ can strictly increase their utility by choosing $\theta_{1}$ unless there exists another provider $i$ with compute $\theta_i^\dagger$ such that $V_i(\theta_i^\dagger) > V_{1}(\theta_{1})$.
\end{itemize}
We can summarize the above two cases by stating that at least one provider must offer a value higher than:
\begin{equation}
    \max_{\theta_{1}\in\Theta}\left\{ V_{1}(\theta_{1}) \middle|  V_{1}(\theta_{1}) < V_{2}^* \quad\mathrm{and}\quad p_{1}(\theta_{1}) - c_{1}(\theta_{1}) > p_{1}(\theta^\dagger_{1}) - c_{1}(\theta^\dagger_{1}) \right\}
\end{equation}
with the constraint being lifted if the above set is empty. This concludes the proof.

\newpage
\clearpage

\subsection{Proof of Theorem~\ref{thm:high-rationality}}

The intuition for the proof is that the softmax converges to an indicator function for the maximum offered value as $\beta \to \infty$, and hence, the equilibrium computes should also converge. We now make this intuition precise, and assume $V_0=0$ in what follows for simplicity. We will leverage the following well-known fact about the softmax function, which we prove here for completeness:

\begin{lemma}\label{lemma:softmax-indicator}
    Let $\Xb=(x_1,\dots,x_N) \in \mathbb{R}^N$ such that $x_1>x_2>\dots>x_N$ and $\sigma^\beta(\Xb)$ denote the softmax function with inverse temperature $\beta$. Then:
    \begin{equation}
        |\sigma^\beta(\Xb)_i - \mathds{1}\{i=1\}| \leq (N-1) \cdot \exp(-\beta\cdot (x_1-x_2)),\quad \forall i=1,\dots,N.
    \end{equation}
\end{lemma}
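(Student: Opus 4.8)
The plan is to bound the two quantities $|\sigma^\beta(\Xb)_1 - 1|$ and $|\sigma^\beta(\Xb)_i - 0|$ for $i \geq 2$ separately, in both cases by rewriting the softmax ratio so that all exponentials are measured relative to the largest coordinate $x_1$. Concretely, I would write
\begin{equation*}
\sigma^\beta(\Xb)_i = \frac{e^{\beta x_i}}{\sum_{j=1}^N e^{\beta x_j}} = \frac{e^{-\beta(x_1 - x_i)}}{1 + \sum_{j=2}^N e^{-\beta(x_1 - x_j)}},
\end{equation*}
where I divided numerator and denominator by $e^{\beta x_1}$. For $i \geq 2$, the numerator is at most $e^{-\beta(x_1 - x_2)}$ since $x_1 - x_i \geq x_1 - x_2 > 0$ by the strict ordering, and the denominator is at least $1$, so $\sigma^\beta(\Xb)_i \leq e^{-\beta(x_1 - x_2)} \leq (N-1)e^{-\beta(x_1-x_2)}$; combined with $\sigma^\beta(\Xb)_i \geq 0$ this gives the claim for those indices.

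For $i = 1$, I would write $1 - \sigma^\beta(\Xb)_1 = \sum_{j=2}^N \sigma^\beta(\Xb)_j$ since the softmax weights sum to one, and then apply the bound just derived for each of the $N-1$ terms in that sum, each of which is at most $e^{-\beta(x_1-x_2)}$. Since $\sigma^\beta(\Xb)_1 \leq 1$, we have $|\sigma^\beta(\Xb)_1 - 1| = 1 - \sigma^\beta(\Xb)_1 \leq (N-1)e^{-\beta(x_1 - x_2)}$, which is exactly the asserted inequality for $i=1$. Putting the two cases together yields the uniform bound over all $i \in \{1,\dots,N\}$.

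I do not anticipate a genuine obstacle here — this is a short, routine estimate — but the one place to be slightly careful is the normalization step: making sure the rescaling by $e^{\beta x_1}$ is done consistently in numerator and denominator, and that the lower bound on the denominator ($\geq 1$, coming from the $j=1$ term) is invoked correctly so that the per-index bound $\sigma^\beta(\Xb)_i \leq e^{-\beta(x_1-x_2)}$ for $i\geq 2$ is clean. A minor alternative would be to bound $1-\sigma^\beta(\Xb)_1$ directly as $\frac{\sum_{j=2}^N e^{-\beta(x_1-x_j)}}{1+\sum_{j=2}^N e^{-\beta(x_1-x_j)}} \leq \sum_{j=2}^N e^{-\beta(x_1-x_j)} \leq (N-1)e^{-\beta(x_1-x_2)}$, which avoids even invoking the sum-to-one identity; either route is fine.
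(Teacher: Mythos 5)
Your proof is correct and follows essentially the same route as the paper's: rewrite the softmax weights relative to $x_1$, bound $\sigma^\beta(\Xb)_i \leq e^{-\beta(x_1-x_i)} \leq e^{-\beta(x_1-x_2)}$ for $i\geq 2$, and bound $1-\sigma^\beta(\Xb)_1$ by the sum of the remaining weights. The ``minor alternative'' you mention at the end is in fact verbatim the paper's argument for the $i=1$ case, so the two proofs coincide.
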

\begin{proof}
    From the softmax definition, we have:
    \begin{equation}
        \begin{dcases}
            \sigma^\beta(\Xb)_1 = \frac{1}{1+\sum_{j=2}^N \exp(-\beta(x_1 - x_j))}\\
            \sigma^\beta(\Xb)_i = \frac{\exp(-\beta(x_1 - x_i))}{1+\sum_{j=2}^N \exp(-\beta(x_1 - x_j))},\quad i\geq 2.
        \end{dcases}
    \end{equation}
    Then, for $i\neq 1$ we have:
    \begin{equation}
        |\sigma^\beta(\Xb)_i - \mathds{1}\{i=1\}| = \sigma^\beta(\Xb)_i \leq \exp(-\beta(x_1 - x_i)) \leq (N-1)\cdot \exp(-\beta\cdot (x_1-x_2)),
    \end{equation}
    while for $i = 1$ we have:
    \begin{equation}
        |\sigma^\beta(\Xb)_i - \mathds{1}\{i=1\}| = 1 - \sigma^\beta(\Xb)_1 \leq \sum_{j =  2}^N \exp(-\beta\cdot(x_1-x_j)) \leq (N-1)\cdot \exp(-\beta\cdot (x_1-x_2)).
    \end{equation}
\end{proof}

To prove Theorem~\ref{thm:high-rationality}, we will use Lemma~\ref{lemma:softmax-indicator} with $\Xb$ corresponding to the vector of offered values $(\Vb(\thetab),0)$ (which has $N+1$ components, including $V_0 = 0$). More precisely, let us denote by $s^\beta$ and $s^\infty$ the market share functions under bounded rationality and perfect rationality, respectively (see Table~\ref{tab:market-models}). Then, for any $\thetab\in\Theta^N$, and any provider $i$, we have:
\begin{equation}
    \left| s^\beta_i(V_i(\theta_i);\Vb_{-i}(\thetab_{-i})) - s_i^\infty(V_i(\theta_i);\Vb_{-i}(\thetab_{-i})) \right| \leq N\cdot \exp\left(-\beta \cdot(V_{\pi(1)}(\theta_{\pi(1)}) - V_{\pi(2)}(\theta_{\pi(2)}))\right),
\end{equation}
where $\pi$ is the permutation ordering providers by their offered value at $\thetab$.
Taking the maximum among providers $i$ and the (finitely many) compute levels they can select in the above expression, and letting $\delta_{\min} > 0 $ be the minimum gap in value offered by any two different providers (recall that we are assuming that providers always offer different values):
\begin{equation}
    \max_{\thetab} \max_i \left| s^\beta_i(V_i(\theta_i);\Vb_{-i}(\thetab_{-i})) - s_i^\infty(V_i(\theta_i);\Vb_{-i}(\thetab_{-i})) \right| \leq N\cdot \exp\left(-\beta \cdot \delta_{\min}\right).
\end{equation}
Thus, we have that $s^\beta \to s^\infty$ as $\beta \to \infty$, where the convergence is uniform over $\Theta^N$ and the provider index $i$. Then, since for each provider $i$, the profit $p_i(\theta_i) - c_i(\theta_i)$ is bounded on the finite domain $\Theta$, we have that $\Ub^\beta \to \Ub^\infty$ as $\beta \to \infty$ uniformly over $\Theta^N$ and $i$, where $\Ub^\beta\colon \Theta^N \to \mathbb{R}^N$ denotes the function that maps a joint compute choice $\thetab$ to the provider's utilities as defined in Eq.~\ref{eq:provider-utility} using the market allocation function $s^\beta$, and $\Ub^\infty$ is defined analogously.
To use the above convergence result, we let 
\begin{equation}\label{app:epsilon-con}
    \varepsilon<\frac{1}{2} \min_{\thetab, \thetab'} \min_i \left\{ |U^\infty_i(\theta_i;\thetab_{-i}) - U^\infty_i(\theta'_i;\thetab'_{-i})| \,\colon\, |U^\infty_i(\theta_i;\thetab_{-i}) - U^\infty_i(\theta'_i;\thetab'_{-i})| > 0 \right\}
\end{equation}
be at most half the minimum positive change in utility a provider can achieve in the game $\Gcal$ under perfect rationality. Then, let $\beta_0$ be such that for any $\beta > \beta_0$, it holds that
\begin{equation}\label{app:uniform-con}
    \max_{\thetab} \max_i \left| U^\beta_i(\theta_i;\thetab_{-i}) - U_i^\infty(\theta_i;\thetab_{-i}) \right| \leq \varepsilon
\end{equation}
and fix any pure Nash equilibrium  $\thetab^\dagger$ of $\Gcal$ for $\beta>\beta_0$. To show that $\thetab^\dagger$ is also an equilibrium of $\Gcal$ under perfect rationality, we argue by contradiction, and suppose this premise is false. Then, there exists a provider $i$ that can change their compute in the game $\Gcal$ under perfect rationality and strictly increase their utility, that is, there exists $\theta_i'\in\Theta$ such that $U_i^\infty(\theta'_i;\thetab^\dagger_{-i}) > U_i^\infty(\theta^\dagger_i;\thetab^\dagger_{-i})$. Then:
\begin{align}
    2\varepsilon &\overset{(*)}{<}  U_i^\infty(\theta'_i;\thetab^\dagger_{-i}) - U_i^\infty(\theta^\dagger_i;\thetab^\dagger_{-i}) \\
    &= \left[ U_i^\infty(\theta'_i;\thetab^\dagger_{-i}) - U_i^\beta(\theta'_i;\thetab^\dagger_{-i}) \right] + \left[ U_i^\beta(\theta'_i;\thetab^\dagger_{-i}) - U_i^\beta(\theta^\dagger_i;\thetab^\dagger_{-i}) \right] \nonumber \\
    & \quad + \left[ U_i^\beta(\theta^\dagger_i;\thetab^\dagger_{-i}) - U_i^\infty(\theta^\dagger_i;\thetab^\dagger_{-i}) \right]\\
    & \overset{(**)}{\leq}  \varepsilon + 0 + \varepsilon,
\end{align}
where $(*)$ follows from Eq.~\ref{app:epsilon-con} together with the contradiction hypothesis $U_i^\infty(\theta'_i;\thetab^\dagger_{-i}) - U_i^\infty(\theta^\dagger_i;\thetab^\dagger_{-i}) > 0$, and $(**)$ follows from bounding the first and third terms by $\varepsilon$ via Eq.~\ref{app:uniform-con}, and the second term by $0$ since $\thetab^\dagger$ is a Nash equilibrium of $\Gcal$ under bounded rationality, hence $U_i^\beta(\theta^\dagger_i;\thetab^\dagger_{-i}) \geq U_i^\beta(\theta'_i;\thetab^\dagger_{-i})$. This is a contradiction, which proves that $\thetab^\dagger$ is also an equilibrium of $\Gcal$ under perfect rationality.

\clearpage
\newpage

\subsection{Proof of Theorem~\ref{thm:poa}}\label{app:proof-poa}
The following is an auxiliary result that will be used to prove Theorem~\ref{thm:poa}:


\begin{lemma}\label{lemma:mean-softmax}
    Let $x_1 > x_2 > \dots > x_N$ and $y_1, \dots, y_N$ be real positive 
    numbers, and denote both vectors by $\Xb$ and $\Yb$, respectively. Denote by 
    $\sigma^\beta(\Xb)$ the softmax with inverse temperature $\beta$. Then,
    \begin{equation}\label{eq:app-softmax-average-bound}
        \begin{cases}
            \sum_{i=1}^N y_i \cdot \sigma^\beta(\Xb)_i \leq y_1 + 
                e^{-\beta(x_1-x_2)} \cdot \sum_{i=2}^N (y_i + y_1) \\[4pt]
            \sum_{i=1}^N y_i \cdot \sigma^\beta(\Xb)_i \geq y_1 - 
                e^{-\beta(x_1-x_2)} \cdot \sum_{i=2}^N (y_i + y_1)
        \end{cases}
    \end{equation}
\end{lemma}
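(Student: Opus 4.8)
The plan is to prove Lemma~\ref{lemma:mean-softmax} by a direct expansion of the softmax weights in terms of the gaps $x_1 - x_i$, isolating the dominant contribution from the index $i=1$ and bounding the remainder. First I would write, exactly as in the proof of the previous lemma, the normalization $Z_\beta \coloneqq \sum_{j=1}^N e^{-\beta(x_1-x_j)} = 1 + \sum_{j=2}^N e^{-\beta(x_1-x_j)}$, so that $\sigma^\beta(\Xb)_1 = 1/Z_\beta$ and $\sigma^\beta(\Xb)_i = e^{-\beta(x_1-x_i)}/Z_\beta$ for $i\geq 2$. Then I would write the weighted average as
\begin{equation*}
\sum_{i=1}^N y_i \sigma^\beta(\Xb)_i = \frac{y_1 + \sum_{i=2}^N y_i e^{-\beta(x_1-x_i)}}{1 + \sum_{j=2}^N e^{-\beta(x_1-x_j)}} = y_1 + \frac{\sum_{i=2}^N (y_i - y_1) e^{-\beta(x_1-x_i)}}{1 + \sum_{j=2}^N e^{-\beta(x_1-x_j)}},
\end{equation*}
which is the clean identity the whole argument hinges on. Let me denote $R \coloneqq \sum_{i=2}^N (y_i - y_1) e^{-\beta(x_1-x_i)}$ and $S \coloneqq \sum_{j=2}^N e^{-\beta(x_1-x_j)} \geq 0$, so the quantity of interest is $y_1 + R/(1+S)$.

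For the upper bound, I would use $1/(1+S) = 1 - S + \Ocal(S^2)$ together with the crude bound $|R| \leq (\sum_{i\geq2}|y_i-y_1|) e^{-\beta(x_1-x_2)}$ and $S \leq (N-1) e^{-\beta(x_1-x_2)}$, both of which follow because $x_1 - x_i \geq x_1 - x_2$ for $i\geq 2$. Expanding, $R/(1+S) = R - RS + \Ocal(R S^2)$; the leading term $R$ is dominated by its $i=2$ summand up to the claimed $e^{-\beta(x_1-x_2)}$ factor, giving $R \leq e^{-\beta(x_1-x_2)}\sum_{i=2}^N (y_i - y_1)$ after factoring (here one must be a little careful with the sign of $y_i - y_1$, but since each term $e^{-\beta(x_1-x_i)} \leq e^{-\beta(x_1-x_2)}$ one can bound $\sum (y_i-y_1) e^{-\beta(x_1-x_i)} \leq e^{-\beta(x_1-x_2)}\sum(y_i-y_1)^+ + \sum (y_i-y_1)^- e^{-\beta(x_1-x_i)} \leq e^{-\beta(x_1-x_2)}\sum(y_i-y_1)$, using that the negative terms only help), while $RS$ and all higher terms are $\Ocal(e^{-2\beta(x_1-x_2)})$ and get absorbed into the error term. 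For the lower bound, I would avoid the expansion and argue more robustly: $R/(1+S) \geq R$ when $R \geq 0$ (since $S\geq 0$), and when $R < 0$ we have $R/(1+S) \geq R$ as well because dividing a negative number by something $\geq 1$ only increases it; combining with the bound $R \geq -e^{-\beta(x_1-x_2)}\sum_{i=2}^N (y_i + y_1)$ — which holds since $y_i - y_1 \geq -(y_i+y_1)$ and $e^{-\beta(x_1-x_i)} \leq e^{-\beta(x_1-x_2)}$ for the terms where $y_i - y_1 < 0$, while the nonnegative terms are dropped — gives the stated inequality directly with no higher-order corrections needed.

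The main obstacle I anticipate is purely bookkeeping: keeping the signs straight when I replace each $e^{-\beta(x_1-x_i)}$ by $e^{-\beta(x_1-x_2)}$, since $y_i - y_1$ can have either sign, so the replacement is an over-estimate on the positive terms but an under-estimate on the negative terms (and vice versa for the lower bound). The trick is to handle the positive and negative parts of $y_i - y_1$ separately and note that in each of the two bounds the ``wrong-direction'' terms can simply be dropped or crudely bounded, because they push in the favorable direction. The other mild subtlety is making the $\Ocal\left(e^{-2\beta(x_1-x_2)}\right)$ in the upper bound genuinely uniform — this is fine because $N$, the $y_i$, and the gaps are fixed, so all implied constants depend only on the game instance, exactly as needed for the higher-order-term caveats in Theorem~\ref{thm:poa}.
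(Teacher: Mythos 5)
Your decomposition $\sum_i y_i\,\sigma^\beta(\Xb)_i = y_1 + R/(1+S)$, with $R = \sum_{i\geq 2}(y_i-y_1)e^{-\beta(x_1-x_i)}$ and $S = \sum_{j\geq 2}e^{-\beta(x_1-x_j)}$, is exactly the paper's starting identity, and expanding $1/(1+S)$ so that the cross terms land in the $\Ocal\bigl(e^{-2\beta(x_1-x_2)}\bigr)$ remainder is the paper's strategy too, so the route is the same. However, your handling of the terms with $y_i < y_1$ in the upper bound is wrong, and it is a genuine gap. You assert that replacing $e^{-\beta(x_1-x_i)}$ by $e^{-\beta(x_1-x_2)}$ in those terms is favorable (``the negative terms only help''), but it is exactly the opposite: since $y_i - y_1 < 0$ and $e^{-\beta(x_1-x_i)} \leq e^{-\beta(x_1-x_2)}$ for $i\geq 2$, multiplying by the negative number reverses the inequality, giving $(y_i-y_1)e^{-\beta(x_1-x_i)} \geq (y_i-y_1)e^{-\beta(x_1-x_2)}$. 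The slack you would have to absorb is $\sum_i |y_i-y_1|\bigl(e^{-\beta(x_1-x_2)} - e^{-\beta(x_1-x_i)}\bigr)$ over the negative indices, which is generically of order $e^{-\beta(x_1-x_2)}$, not $e^{-2\beta(x_1-x_2)}$, so it cannot hide in the remainder.

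A concrete counterexample to your intermediate claim $R \leq e^{-\beta(x_1-x_2)}\sum_{i\geq 2}(y_i-y_1)$: take $N=3$, $y_1=2$, $y_2=3$, $y_3=1$, and $x_3 \ll x_2 < x_1$; then $\sum_{i\geq 2}(y_i-y_1) = 0$, while $R = e^{-\beta(x_1-x_2)} - e^{-\beta(x_1-x_3)} \approx e^{-\beta(x_1-x_2)} > 0$. You should also know that the paper's own proof makes the identical slip in the last displayed inequality of the $y_i - y_1 < 0$ case, so you have in effect faithfully reproduced a latent bug: to repair either proof one must either widen the remainder to include a term of order $e^{-\beta(x_1-x_2)}$, or restrict the sum in the stated upper bound to the indices with $y_i \geq y_1$ (dropping the negative terms). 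Downstream this is benign, because $f(\beta)$ in Theorem~\ref{thm:poa} is only claimed to be $\Ocal\bigl(e^{-\beta\Delta V}\bigr)$. Your lower-bound argument is correct in substance and matches the paper's, though the parenthetical ``$R/(1+S)\geq R$ when $R\geq 0$'' is also reversed (it should be $\leq$); what saves you there is that when $R\geq 0$ the stated lower bound is trivially nonpositive, so that branch needs no estimate at all.
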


\begin{proof}
    From the softmax definition, we have:
    \begin{equation}
        \begin{dcases}
            \sigma^\beta(\Xb)_1 = \frac{1}{1+\sum_{j=2}^N e^{-\beta(x_1 - x_j)}}\\
            \sigma^\beta(\Xb)_i = \frac{e^{-\beta(x_1 - x_i)}}{1+\sum_{j=2}^N e^{-\beta(x_1 - x_j)}} \leq e^{-\beta(x_1 - x_i)},\quad i\geq 2,
        \end{dcases}
    \end{equation}

    and,
    \begin{equation}\label{eq:app-softmax-average-aux}
        \sum_{i=1}^N y_i\cdot \sigma^\beta(\Xb)_i = y_1 +  \sum_{i=2}^N (y_i - y_1) \cdot\sigma^\beta(\Xb)_i. 
    \end{equation}

    We begin by showing that the first inequality in the lemma holds. Since 
    $y_i, y_1 > 0$, we have $y_i - y_1 \leq |y_i - y_1| \leq y_i + y_1$, and 
    since $\sigma^\beta(\Xb)_i \geq 0$:
    \begin{equation}
        \sum_{i=2}^N (y_i - y_1)\,\sigma^\beta(\Xb)_i 
        \;\leq\; \sum_{i=2}^N (y_i + y_1)\,\sigma^\beta(\Xb)_i.
    \end{equation}
    Then, using that $\sigma^\beta(\Xb)_i \leq e^{-\beta(x_1 - x_i)} \leq 
    e^{-\beta(x_1 - x_2)}$ for $i \geq 2$, and that $y_i + y_1 > 0$:
    \begin{equation}
        \sum_{i=2}^N (y_i + y_1)\,\sigma^\beta(\Xb)_i 
        \;\leq\; e^{-\beta(x_1 - x_2)} \sum_{i=2}^N (y_i + y_1),
    \end{equation}
    which combined with Eq.~\eqref{eq:app-softmax-average-aux} yields the first 
    inequality.
    
    Similarly, we can show that the second inequality in~\ref{eq:app-softmax-average-bound} also holds. Indeed, from Eq.~\ref{eq:app-softmax-average-aux}, we obtain:
    \begin{align}
        \sum_{i=1}^N y_i \sigma^\beta(\Xb)_i &= y_1 -  \sum_{i=2}^N (y_1 - y_i) \cdot\sigma^\beta(\Xb)_i\\
        &\geq y_1 -  \sum_{i=2}^N (y_1 + y_i) \cdot\sigma^\beta(\Xb)_i\\
        &\geq y_1 -  \sum_{i=2}^N (y_1 + y_i) \cdot e^{-\beta(x_1-x_i)}\\
        &\geq y_1 -  \sum_{i=2}^N (y_1 + y_i) \cdot e^{-\beta(x_1-x_2)}
    \end{align}
\end{proof}

We now prove Theorem~\ref{thm:poa}. To this end, fix a test-time compute game $\Gcal$ with $0<\beta \leq \infty$ and a Nash equilibrium $\thetab^\dagger$. Further, for each provider $i$, denote by $\theta^*_i = \argmax_\theta \{ q_i(\theta) - c_i(\theta) \}$ their socially optimal compute, and let $\thetab^* = (\theta_i^*)_i$. Recall the definition of the price of anarchy in Eq.~\ref{eq:PoA}:\footnote{For ease of exposition, we take $V_0=0$, with the case $V_0 > 0$ following using an analogous argument.}
\begin{align}\label{app-eq-poa-expansion}
    \text{PoA}(\Gcal) =
    \frac{ \max_{\theta \in \Theta^{N}} \SW(\thetab) }{\SW(\thetab^\dagger)}
    \geq \frac{  \SW(\thetab^*)}{\SW(\thetab^\dagger)}
    = \frac{ \overbrace{\sum_{i=1}^{N} s_i(V_i(\theta^*_i);\Vb_{-i}(\thetab^*_{-i}))\cdot \SW_i(\theta^*_i) }^{(\diamond)} }{ \underbrace{\sum_{i=1}^{N} s_i(V_i(\theta^\dagger_i);\Vb_{-i}(\thetab^\dagger_{-i}))\cdot \SW_i(\theta_i^\dagger)}_{(\bullet)} }.
\end{align}
Denote by $\pi$ and $\pi^*$ the permutations ordering providers by their value at the equilibrium $\thetab^\dagger$ and at $\thetab^*$, respectively. Then, we can use Lemma~\ref{lemma:mean-softmax} to lower bound the numerator and upper bound the denominator in the above.

Starting with the numerator $(\diamond)$, we take in Lemma~\ref{lemma:mean-softmax} $\Xb$ to be the vector $(\Vb(\thetab^*), 0)$ with $N+1$ components ordered by $\pi^*$ (assuming without loss of generality that all providers can offer a value higher than the abstention value) and $\Yb$ to be the vector $(\SW_i(\theta^*_i),0)$ with $N+1$ components ordered by $\pi^*$. Then, Lemma~\ref{lemma:mean-softmax} implies:

\begin{align}
    (\diamond) \geq \SW_{\pi^*(1)}(\theta^*_{\pi^*(1)}) \cdot \left(1 -e^{-\beta\Delta V^*} \right)- e^{-\beta\Delta V^*} \cdot \sum_{i \geq 2}^{N} \left( \SW_{\pi^*(i)}(\theta_{\pi^*(i)}^*) +  \SW_{\pi^*(1)}(\theta_{\pi^*(1)}^*)\right),
\end{align}
where $\Delta V^* = V_{\pi^*(1)}(\theta^*_{\pi^*(1)})- V_{\pi^*(2)}(\theta^*_{\pi^*(2)})$ is the difference in values at $\thetab^*$ between the first and second providers.

Similarly, the term $(\bullet)$ can be upper-bounded by taking in Lemma~\ref{lemma:mean-softmax} $\Xb$ to be the vector $(\Vb(\thetab^\dagger), 0)$ ordered by $\pi$ (assuming without loss of generality that all providers can offer a value higher than the abstention value) and $\Yb$ to be the vector $(\SW_i(\theta^\dagger_i),0)$ ordered by $\pi$. Then, we obtain:

\begin{align}
    (\bullet) \leq \SW_{\pi(1)}(\theta_{\pi(1)}^\dagger) \cdot \left(1 + e^{-\beta\Delta V^\dagger} \right) + e^{-\beta\Delta V^\dagger} \cdot \sum_{i\geq 2}^{N} \left( \SW_{\pi(i)}(\theta_{\pi(i)}^\dagger) +  \SW_{\pi(1)}(\theta_{\pi(1)}^\dagger)\right),
\end{align}
where $\Delta V^\dagger = V_{\pi(1)}(\theta^\dagger_{\pi(1)})- V_{\pi(2)}(\theta^\dagger_{\pi(2)})$. 

Using such bounds on $(\diamond)$ and on $(\bullet)$, the price of anarchy satisfies:
\begin{align}\label{eq:app-poa-bound-aux}
    \text{PoA}(\Gcal) &\geq \frac{ \SW_{\pi^*(1)}(\theta^*_{\pi^*(1)}) \cdot \left(1 -e^{-\beta\Delta V^*} \right) - e^{-\beta\Delta V^*} \cdot \sum_{i \geq 2}^{N} \left( \SW_{\pi^*(i)}(\theta_{\pi^*(i)}^*) +  \SW_{\pi^*(1)}(\theta_{\pi^*(1)}^*)\right) }{ \SW_{\pi(1)}(\theta_{\pi(1)}^\dagger) \cdot \left(1 + e^{-\beta\Delta V^\dagger} \right) + e^{-\beta\Delta V^\dagger} \cdot \sum_{i\geq 2}^{N} \left( \SW_{\pi(i)}(\theta_{\pi(i)}^\dagger) +  \SW_{\pi(1)}(\theta_{\pi(1)}^\dagger)\right) }\nonumber\\
    &= \frac{  \left(1 -e^{-\beta\Delta V^*} \right) - e^{-\beta\Delta V^*} \cdot \sum_{i \geq 2}^{N} \left( 1 + \frac{\SW_{\pi^*(i)}(\theta_{\pi^*(i)}^*)}{ \SW^*}  \right) }
    { \frac{\SW_{\pi(1)}(\theta_{\pi(1)}^\dagger)}{\SW^*} \cdot \left(1 + e^{-\beta\Delta V^\dagger} \right) + e^{-\beta\Delta V^\dagger} \cdot \sum_{i\geq 2}^{N} \left( \frac{\SW_{\pi(1)}(\theta_{\pi(1)}^\dagger) +  \SW_{\pi(i)}(\theta_{\pi(i)}^\dagger)}{\SW^*}\right) },
\end{align}
where we have defined $\SW^* \coloneqq \SW_{\pi^*(1)}(\theta^*_{\pi^*(1)})$.
To simplify, we define:
\begin{align*}
    \Delta \SW \coloneqq \SW_{\pi^*(1)}(\theta^*_{\pi^*(1)}) - \SW_{\pi(1)}(\theta_{\pi(1)}^\dagger),
\end{align*}
which represents the difference in the contribution to the social welfare by the provider offering the highest value at equilibrium, and at compute $\thetab^*$.
Using this in the bound in Eq.~\ref{eq:app-poa-bound-aux} and Taylor-expanding up to second order the function $t \mapsto 1/(1-t)$ around $t=0$, we obtain:
\begin{align*}
    \text{PoA}(\Gcal) &\geq \frac{  \left(1 -e^{-\beta\Delta V^*} \right) - e^{-\beta\Delta V^*} \cdot \sum_{i \geq 2}^{N} \left( 1 + \frac{\SW_{\pi^*(i)}(\theta_{\pi^*(i)}^*)}{ \SW^*}  \right) }
    { 1-\Delta\SW / \SW^* + e^{-\beta\Delta V^\dagger} \left( 1+ \sum_{i\geq 2}^{N} \frac{\SW_{\pi(1)}(\theta_{\pi(1)}^\dagger) +  \SW_{\pi(i)}(\theta_{\pi(i)}^\dagger)}{\SW^*} \right)  + \Ocal\left( \lVert e^{-\beta\Delta V^\dagger}, \Delta\SW \rVert^2 \right) }\\
    &= \left( 
    \left(1 -e^{-\beta\Delta V^*} \right) - e^{-\beta\Delta V^*} \cdot \sum_{i \geq 2}^{N} \left( 1 + \frac{\SW_{\pi^*(i)}(\theta_{\pi^*(i)}^*)}{\SW^*} \right)
    \right)\\
    &\quad \cdot \left( 1 + \Delta \SW / \SW^* - e^{-\beta \Delta V^\dagger}  - e^{-\beta \Delta V^\dagger} \sum_{i\geq 2}^{N} \frac{\SW_{\pi(1)}(\theta_{\pi(1)}^\dagger) + \SW_{\pi(i)}(\theta_{\pi(i)}^\dagger)}{\SW^*} \right)\\
    &\quad+ \Ocal\left( \lVert e^{-\beta\Delta V^\dagger}, e^{-\beta\Delta V^*}, \Delta\SW \rVert^2 \right) \\
    &= \left( 
    1 - e^{-\beta\Delta V^*} \left( 1 + \sum_{i \geq 2}^{N} \left( 1 + \frac{\SW_{\pi^*(i)}(\theta_{\pi^*(i)}^*)}{\SW^*} \right) \right) \right)\\
    &\quad \cdot \left( 1 + \Delta \SW/\SW^* - e^{-\beta \Delta V^\dagger} \left( 1 + \sum_{i\geq 2}^{N} \frac{\SW_{\pi(1)}(\theta_{\pi(1)}^\dagger) + \SW_{\pi(i)}(\theta_{\pi(i)}^\dagger)}{\SW^*} \right)\right)\\
    &\quad + \Ocal\left( \lVert e^{-\beta\Delta V^\dagger}, e^{-\beta\Delta V^*}, \Delta\SW \rVert^2  \right) \\
    &= 1 + \Delta\SW/\SW^* - e^{-\beta \Delta V^\dagger}\cdot \left( 1+ \sum_{i\geq 2}^{N} \frac{\SW_{\pi(1)}(\theta_{\pi(1)}^\dagger) + \SW_{\pi(i)}(\theta_{\pi(i)}^\dagger)}{\SW^*}  \right) \\
    &\quad - e^{-\beta\Delta V^*} \cdot \left( 1+ \sum_{i \geq 2}^{N} \left( 1 + \frac{\SW_{\pi^*(i)}(\theta_{\pi^*(i)}^*)}{\SW^*} \right) \right)\\
    &\quad +  \Ocal\left( \lVert e^{-\beta\Delta V^\dagger}, e^{-\beta\Delta V^*}, \Delta\SW \rVert^2 \right)
\end{align*}
To obtain the same form as in the statement of Theorem~\ref{thm:poa}, we can let $\Delta V = \min(\Delta V^*, \Delta V^\dagger)$ and define $f(\beta)$ such that:
\begin{multline*}
     f(\beta) = - \frac{e^{-\beta \Delta V^\dagger}}{\SW^*}\cdot \left( \SW^* + \sum_{i\geq 2}^{N} \left( W_{\pi(1)}(\theta_{\pi(1)}^\dagger) + \SW_{\pi(i)}(\theta_{\pi(i)}^\dagger) \right) \right) \\- \frac{e^{-\beta\Delta V^*}}{\SW^*} \cdot \left( \SW^*+ \sum_{i \geq 2}^{N} \left( \SW^* + \SW_{\pi^*(i)}(\theta_{\pi^*(i)}^*) \right) \right) = \Ocal\left( e^{-\beta\Delta V}\right).
\end{multline*}

\clearpage
\newpage
\subsection{Proof of Theorem~\ref{thm:second-price-dominant}}\label{app:proof-auction-dominant}

Fix a game $\widetilde{\Gcal}$, a provider $i$, and any compute levels $\thetab_{-i}$ and bid prices ${\mathbf{p}}_{-i}$.
We will prove that, for provider $i$, selecting $\theta_i=\theta_i^*$ and $ p_i = c_i(\theta_i^*)$ is a dominant strategy. To this end, we first show that given a fixed $\theta_i$, the provider maximizes their utility by bidding ${p}_i = c_i(\theta_i)$, and then show that $\theta_i=\theta_i^*$ is their best choice of compute. Recall that the utility of provider $i$ is (Eq.~\ref{eq:utility-auction}):
\begin{align*}
        {U}_i(\theta_i, {p}_i; \thetab_{-i}, {\mathbf{p}}_{-i}) &= \left( {P}(\thetab,{\mathbf{p}}) - c_i(\theta_i) \right) \cdot \mathds{1}\left\{ {V}_i (\theta_i, p_i) > \max_{j\neq i} {V}_j(\theta_j, p_j) \right\}\\
        & = \left( q_{\pi(1)}(\theta_{\pi(1)}) - c_i(\theta_i) - (q_{\pi(2)}(\theta_{\pi(2)})-{p}_{\pi(2)})  \right) \cdot \mathds{1}\left\{ i = \pi(1) \right\}.
\end{align*}

\xhdr{Bidding generation cost is optimal} Let us first prove that, for a fixed $\theta_i$, bidding ${p}_i = c_i(\theta_i)$ is always optimal. To this end, we assume that the provider bids ${p}_i = c_i(\theta_i)$ and show that no deviation to a different price bid can increase their utility. Consider first that provider $i$ wins the bid when bidding $(q_i(\theta_i), c_i(\theta_i))$. Thus, it must be that $i = \pi(1)$ and $q_{i}(\theta_i) - c_i(\theta_i) > q_{\pi(2)}(\theta_{\pi(2)})-{p}_{\pi(2)}$. Then, consider a deviation where provider $i$ bids a different price ${p}_{i} \neq c_i(\theta_i)$. With their new bid $(q_{i}(\theta_i), {p}_{i})$, they either lose the auction, in which case they strictly decrease their utility to $0$, or they are still winning the auction. However, in the latter case, since conditional on winning, their utility is independent of ${p}_{i}$, we conclude that this deviation maintains their utility, and hence the provider cannot profit from it.

Now, suppose that provider $i$ initially loses the auction by bidding $(q_i(\theta_i), c_i(\theta_i))$, which implies that:
\begin{equation}
    q_i(\theta_i) - c_i(\theta_i) < q_{\pi(1)}(\theta_{\pi(1)})-{p}_{\pi(1)}.
\end{equation}
Then, any deviation where they bid a higher ${p}_{i} > c_i(\theta_i)$ will further decrease their offered value, and hence they will still lose the auction and maintain their null utility. In case the provider deviates to a lower price ${p}_{i} < c_i(\theta_i)$ and wins the auction by doing so, denote by $\pi'$ the permutation ordering providers after the deviation. Then, $\pi'(1) = i$ and $\pi'(2) = \pi(1)$. Consequently, the new utility of provider $i$ after the deviation will be:
\begin{align}
    {U}_i(\theta_i, {p}_i; \thetab_{-i}, {\mathbf{p}}_{-i}) = q_i(\theta_i) - c_i(\theta_i) - \left( q_{\pi(1)}(\theta_{\pi(1)})-{p}_{\pi(1)} \right) < 0,
\end{align}
meaning that the deviation is not profitable. Hence, we conclude that bidding ${p}_i = c_i(\theta_i)$ is always optimal.

\xhdr{Bidding the socially optimal compute is dominant} 
We now prove that bidding with $\theta_i=\theta_i^*$ is dominant for provider $i$. To this end, we assume that for any compute $\theta_i$, provider $i$ selects their optimal price ${p}_i = c_i(\theta_i)$. Then, the utility of the provider can be written as:
\begin{align*}
        {U}_i(\theta_i, c_i(\theta_i); \thetab_{-i}, {\mathbf{p}}_{-i}) =
         &\left( q_{\pi(1)}(\theta_{\pi(1)}) -c_i(\theta_i) \right) \cdot \mathds{1}\left\{ i = \pi(1) \right\}\\
         &-(q_{\pi(2)}(\theta_{\pi(2)})-{p}_{\pi(2)}) \cdot \mathds{1}\left\{ i = \pi(1) \right\}
\end{align*}
Since, conditional on provider $i$ winning the bid, the second term in the above expression does not depend on the action of provider $i$, their optimal compute level is to select:
\begin{align*}
    &\argmax_{\theta_i \in \Theta} \left\{ q_{i}(\theta_{i}) -c_i(\theta_i) \middle| i = \pi(1)\right\}\\
     =&  \argmax_{\theta_i \in \Theta}  \left\{ q_{i}(\theta_{i}) -c_i(\theta_i) \middle| q_{i}(\theta_{i}) -c_i(\theta_i)  > q_{\pi(2)}(\theta_{\pi(2)})-{p}_{\pi(2)}\right\}\\
    =& \argmax_{\theta_i \in \Theta}  \left\{ q_{i}(\theta_{i}) -c_i(\theta_i)\right\}\\
    =& \theta_i^*,
\end{align*}
which proves the claim.

\clearpage
\newpage

\section{Additional Experimental Details}\label{app:experimental-details}

Here, we provide additional experimental details for our empirical evaluation in Section~\ref{sec:experiments}. The complete code and implementation is available at \url{https://github.com/Human-Centric-Machine-Learning/strategic-ttc}.

\xhdr{Hardware setup} Our experiments are executed on a compute server equipped with 2 $\times$ Intel Xeon Gold 5317 CPU, $1{,}024$ GB main memory, and $2$ $\times$ A100 Nvidia Tesla GPU ($80$ GB, Ampere Architecture). In each experiment, a single Nvidia A100 GPU is used.

\xhdr{Models} 
We use the following LLMs in our experiments.
From the \texttt{Llama} family, we use \texttt{\seqsplit{Llama-3-8B-Instruct}}, \texttt{\seqsplit{Llama-3.1-8B-Instruct}}, \texttt{\seqsplit{Llama-3.2-1B-Instruct}}, and \texttt{Llama-3.2-3B-Instruct}.
From the \texttt{Qwen} family, we use \texttt{Qwen-2-0.5B-Instruct}, \texttt{Qwen-2-1.5B-Instruct}, \texttt{Qwen-2-7B-Instruct}, \texttt{Qwen-2.5-3B-Instruct}, and \texttt{Qwen-2.5-7B-Instruct}. Finally, we include three reasoning models distilled from \texttt{DeepSeek-R1}: \texttt{DeepSeek-R1-Distill-Llama-8B}, \texttt{DeepSeek-R1-Distill-Qwen-1.5B}, and \texttt{DeepSeek-R1-Distill-Qwen-7B}.
%
Additionally, when using best-of-n sampling, we use the \texttt{ArmoRM-Llama3-8B-v0.1} reward model to score the generated outputs by the above models.
We obtain all the models from the publicly available Hugging Face repository.\footnote{\url{https://huggingface.co}}

\xhdr{Datasets}
We use four datasets to evaluate the performance of all LLMs and simulate test-time compute games: \texttt{GPQA}~\cite{rein2023gpqagraduatelevelgoogleproofqa}, a multiple-choice STEM question-answering benchmark; \texttt{GSM8K}~\cite{cobbe2021training}, a mathematical benchmark containing grade-school level problems; \texttt{AIME}~\cite{aime25}, a mathematical reasoning benchmark with problems from the American Invitational Mathematics Examination; and \texttt{TruthfulQA}~\cite{lin2022truthfulqameasuringmodelsmimic}, which evaluates models on their ability to generate truthful answers to questions reflecting
common misconceptions. The first three datasets have, for each query, a
verifiable ground-truth answer that we use to evaluate accuracy. To measure quality for \texttt{TruthfulQA}, we consider the set of true and false reference answers that are provided for each question, and we use a \texttt{BLEURT}-based score ~\cite{sellam2020bleurtlearningrobustmetrics} defined as the difference between the maximum similarity to the true reference
answers and the maximum similarity to the false reference answers. Furthermore, given that
\texttt{TruthfulQA} consists of open-ended questions where majority voting is
not directly applicable, and involves short responses that do not benefit from
extended chain-of-thought reasoning, we restrict experiments on this dataset to
non-reasoning models using best-of-n. All datasets are obtained from the publicly available Hugging Face repository.\footnote{\url{https://huggingface.co/datasets/Idavidrein/gpqa}}\footnote{\url{https://huggingface.co/datasets/openai/gsm8k}}\footnote{\url{https://huggingface.co/datasets/Maxwell-Jia/AIME_2024}}\footnote{\url{https://huggingface.co/datasets/domenicrosati/TruthfulQA}}

\xhdr{Generation details}
When generating model outputs and evaluating them on the above datasets, we adhere to the temperature settings recommended in the official Hugging Face model cards. That is, temperature $0.6$ for the \texttt{Llama} family and the corresponding distilled reasoning model, and temperature $0.7$ for the \texttt{Qwen} family and the corresponding distilled reasoning models. We disable top-$p$ and top-$k$ sampling when generating model outputs. To ensure robust evaluation and enable bootstrap resampling to estimate uncertainties, we generated $128$ candidate responses per query for non-reasoning models and $32$ for reasoning models. For the simulations, we used subsets of these pools (up to $\theta=64$ for non-reasoning and the full distribution for reasoning quantiles). We report  $95\%$ bootstrapped confidence intervals for the qualities of all models.
Our prompts and answer verification pipelines are adapted from established evaluation frameworks, specifically OpenCompass\footnote{\url{https://github.com/open-compass/opencompass}}, EvalScope\footnote{\url{https://github.com/modelscope/evalscope}}, and the LM Evaluation Harness\footnote{\url{https://github.com/EleutherAI/lm-evaluation-harness}}.

\xhdr{Test-time compute methods} In our experiments, we consider the following three test-time compute methods, each used to simulate a different test-time compute game $\Gcal$ in Section~\ref{sec:experiments}:
\begin{itemize}
    \item \xhdr{Majority voting} For each query, each model generates $\theta=\{2^0, \ldots, 2^6\}$ independent outputs. The final response of the model is taken as the most frequent response across the $\theta$ generations. The prices $p(\theta)$ and costs $c(\theta)$ are computed based on the total number of generated tokens across all $\theta$ outputs, averaged across queries.
    \item \xhdr{Best-of-n~\citep{chow2024inference}} For each query, the model generates $\theta=\{2^0, \ldots, 2^6\}$ independent outputs. The final response of the model is taken as the highest-scored response according to the scoring model \texttt{ArmoRM-Llama3-8B-v0.1}. Specifically, for \texttt{GSM8K}, \texttt{GPQA} and \texttt{AIME}, we use the default score provided by the reward model, while for \texttt{TruthfulQA}, we follow ~\citet{cui2024ultrafeedbackboostinglanguagemodels} and use the truthfulness attribute of the reward model.
    The prices $p(\theta)$ and costs $c(\theta)$ are computed based on the total number of generated tokens across all $\theta$ outputs, averaged across queries. We do not account for the cost of the reward model scoring in this computation. Empirically, we find that the additional overhead from scoring is small, contributing between approximately 0.5\% and 1.2\% of total inference time across models (\eg, $\sim0.8\%$ for Qwen2-0.5B, $\sim0.5\%$ for Qwen2-7B, $\sim0.2\%$ for Llama-3.2-1B, and $\sim0.5\%$ for Llama-3.1-8B), and thus omit it from the cost model.
    \item  \xhdr{Chain-of-thought~\citep{wei2022chain}} Since the Hugging Face API does not allow explicit control over the level of test-time compute or reasoning used for chain-of-thought for the models we consider, we adopt the following heuristic approach. For each query, we generate $32$ independent model outputs and then group these outputs into five quantile-based bins according to the number of reasoning tokens generated. More concretely, the first bin contains outputs whose reasoning-token counts fall within the lowest $20$-th percentile, and analogously for the remaining bins. In this process, we discard any output in which the model generates fewer than $5$ reasoning tokens or exceeds the maximum generation length of $2048$ tokens (with 58.5\%, 3.1\%, and 11.1\% of outputs retained on \texttt{GSM8K}, \texttt{AIME}, and \texttt{GPQA}, respectively).\footnote{As a result, the reported accuracies (\eg, in Figure~\ref{fig:aime_reasoning_models}) reflect only the subset of outputs that pass this filtering step, and may therefore overestimate true model accuracy.} Then, we compute the accuracy for each bin and average it over queries. While we emphasize that this approach is not equivalent to deterministically controlling for the reasoning level during chain-of-thought, it provides a reasonable first approximation of the relationship between reasoning effort---as measured by the number of reasoning tokens---and accuracy.
    \end{itemize}

\xhdr{Test-time compute prices and costs}
In our experiments in Section~\ref{sec:experiments}, we instantiate the test-time compute games $\Gcal$ with each provider serving a different LLM. To determine the per-output token price for each provider/model, we refer to the Hugging Face list of inference providers\footnote{\url{https://huggingface.co/docs/inference-providers/index}, consulted on December 30, 2025} and, for each LLM, compute the average token price across the listed providers that offer access to that model. The resulting prices per million output tokens are as follows: \$0.1455, \$0.1245, \$0.10, and \$0.08 for $\texttt{Llama-3-8B-Instruct}$, $\texttt{Llama-3.1-8B-Instruct}$, $\texttt{Llama-3.2-1B-Instruct}$, and $\texttt{Llama-3.2-3B-Instruct}$, respectively; \$0.10, \$0.10, \$0.20, \$0.065, and \$0.1465 for $\texttt{Qwen-2-0.5B}$, $\texttt{Qwen-2-1.5B}$, $\texttt{Qwen-2-7B}$, \texttt{\seqsplit{Qwen-2.5-3B}}, and $\texttt{Qwen-2.5-7B}$, respectively; and \$0.125, \$0.10, and \$0.175 for the models $\texttt{R1-D-Llama-8B}$, $\texttt{R1-D-Qwen-1.5B}$, and $\texttt{R1-D-Qwen-7B}$, respectively. Then, we determine the per-output token generation cost by assuming that the per-token prices are $25\%$ higher than the per-token costs, which simply corresponds to the per-token margin of providers.
Lastly, for each player (provider) $i$ in the game $\Gcal$, each test-time method with compute level $\theta$, and each dataset, we separately compute the quantities $p_i(\theta)$ (and correspondingly $c_i(\theta_i)$ using the margin $25\%$) by considering the average number of output tokens across all model outputs in the dataset and multiplying it by the per-token price. We additionally report inefficiency results for margins of $5\%$ and $15\%$ in Appendix~\ref{app:auction-tables} to assess the sensitivity of our findings to this choice.

\xhdr{Dynamics of test-time compute games}
To simulate the dynamics of a test-time compute game, we proceed as follows. First, we consider that all providers start with their lowest test-time compute level, that is, $\theta_i^1 = \min_{\theta \in \Theta} \theta, \forall i \in [N]$, which corresponds to using a single sample for best-of-n and majority voting, and generating responses in the first $20$-th percentile in terms of reasoning tokens for chain-of-thought, see Appendix~\ref{app:experimental-details}. Then, at each iteration $t$, we update the providers' compute levels from $\thetab^t$ to $\thetab^{t+1}$ by randomly selecting a provider $i_t\in[N]$ who does not have maximum utility when the others select $\thetab^t_{-i_t}$. We keep $\thetab^{t+1}_{-i_t}=\thetab^{t}_{-i_t}$ fixed, and take $\theta_{i_t}^{t+1}$ to be the smallest compute that is larger than $\theta_{i_t}^{t}$ if $\theta_{i_t}^{t}\prec\argmax_\theta U_{i_t}(\theta;\thetab_{-i_t}^{t})$, or to be the largest compute that is smaller than $\theta_{i_t}^{t}$ if $\theta_{i_t}^{t}\succ \argmax_\theta U_{i_t}(\theta;\thetab_{-i_t}^{t})$. If all providers are already maximizing their utilities, the better-response dynamics have reached a Nash equilibrium. At each time step, we use the values offered by providers to determine their market share and compute the price of anarchy according to Eq.~\ref{eq:PoA}, see Appendix~\ref{app:dynamics-results}.

\xhdr{Licenses}
The \texttt{Llama-3} models and \texttt{ArmoRM-Llama3-8B-v0.1} are licensed under the LLAMA 3.2 COMMUNITY LICENSE AGREEMENT.\footnote{\url{https://www.llama.com/llama3/license/}}
The \texttt{Llama-3.1} models are licensed under the LLAMA 3.2 COMMUNITY LICENSE AGREEMENT.\footnote{\url{https://www.llama.com/llama3_1/license/}}
The \texttt{Llama-3.2} models are licensed under the LLAMA 3.2 COMMUNITY LICENSE AGREEMENT.\footnote{\url{https://www.llama.com/llama3_2/license/}}
The \texttt{Qwen} models are licensed under the Tongyi Qianwen LICENSE AGREEMENT.\footnote{\url{https://github.com/QwenLM/Qwen/blob/main/Tongyi\%20Qianwen\%20LICENSE\%20AGREEMENT/}}
The \texttt{DeepSeek-R1-Distill-Llama-8B}, \texttt{DeepSeek-R1-Distill-Qwen-1.5B} and \texttt{DeepSeek-R1-Distill-Qwen-7B} models are licensed under the  MIT License.
The \texttt{GPQA} and \texttt{AIME} datasets are licensed under Creative Commons Attribution 4.0, \texttt{GSM8K} is licensed under MIT License, and \texttt{TruthfulQA} is licensed under Apache License 2.0.

\clearpage
\newpage

\section{Additional Experimental Results}\label{app:additional-experimental-results}
This section contains additional experimental results that complement those discussed in Section~\ref{sec:experiments}. In Appendix~\ref{app:model-evaluation}, we summarize the results (quality and number of generated tokens) of evaluating the LLMs served by each provider in the test-time compute games on \texttt{GSM8K}, \texttt{AIME}, \texttt{GPQA}, and \texttt{TruthfulQA}. Appendix~\ref{app:dynamics-results} shows additional outcomes of test-time compute games across all datasets and test-time compute methods. Further, Appendix~\ref{app:auction-tables} compares the equilibrium outcomes of the
test-time compute game $\Gcal$ with the outcomes of the auction mechanism
$\widetilde{\Gcal}$ introduced in Section~\ref{sec:auction}, across all
datasets, test-time compute methods, profit margins ($5\%$, $15\%$, and
$25\%$), and accuracy-to-value conversion parameter values $\alpha$.

\subsection{Model evaluation}\label{app:model-evaluation}

Here, we report, for each dataset and each test-time compute method, the average quality of the models as a function of the test-time compute $\theta$. We also report the average number of generated tokens to obtain the response to each query. See Appendix~\ref{app:experimental-details} for more details regarding the generation of model outputs.

\subsubsection{Model Evaluation on GSM8K}\label{app:model-eval-gsm8k}

\begin{figure}[h!]
    \centering
    \subfloat{
        \centering
        \includegraphics[width=0.9\linewidth]{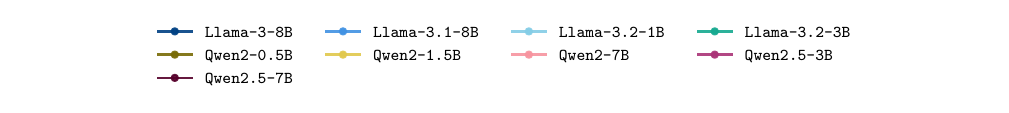}}\\
    \vspace{-0.3cm}
    \subfloat{
        \includegraphics[width=0.31\linewidth, keepaspectratio, trim={0cm 1cm 0cm 0cm}]{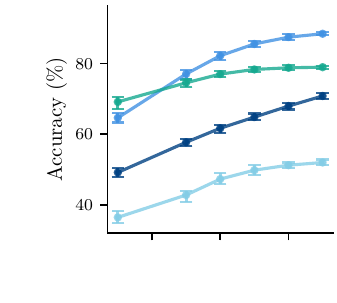}
    }
    \subfloat{
        \includegraphics[width=0.31\linewidth,keepaspectratio, trim={0cm 1cm 0cm 0cm}, clip]{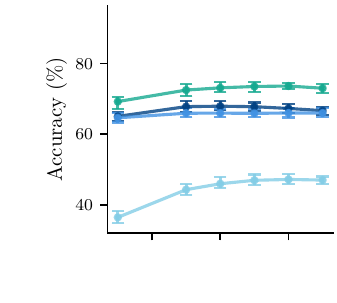}
    }
    \subfloat{
        \includegraphics[width=0.31\linewidth,keepaspectratio, trim={0cm 1cm 0cm 0cm}, clip]{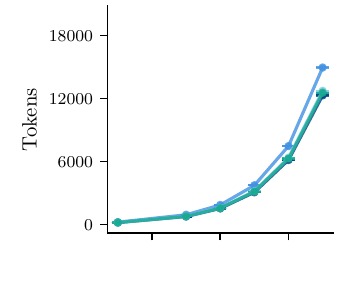}
    }\\
    \vspace{0.5cm}
    \setcounter{subfigure}{0}
    \subfloat[Majority Vote]{
        \includegraphics[width=0.31\linewidth,keepaspectratio, trim={0cm 0cm 0cm 0cm}, clip]{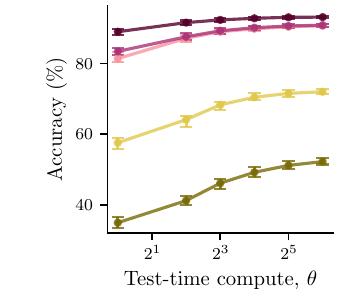}  }
    \subfloat[Best-of-n]{
        \includegraphics[width=0.31\linewidth,keepaspectratio, trim={0cm 0cm 0cm 0cm}, clip]{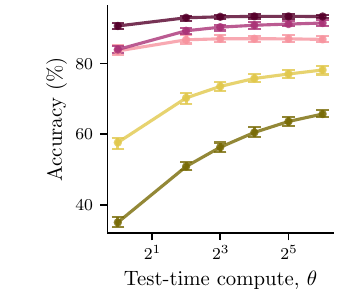}     }
    \subfloat[Number of Tokens]{
        \includegraphics[width=0.31\linewidth,keepaspectratio, trim={0cm 0cm 0cm 0cm}, clip]{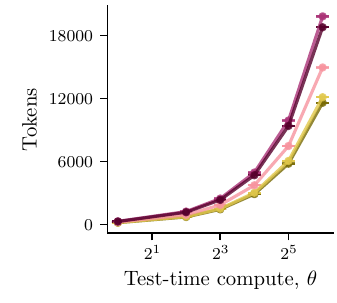}    }
    \caption{\xhdr{Accuracy of \texttt{Llama} and \texttt{Qwen} models on GSM8K using majority voting and best-of-n}
        Panels (a) and (b) show the average accuracy of various LLMs from the \texttt{Llama} and \texttt{Qwen} families over questions from the GSM8K dataset, where the responses of the models are obtained using majority voting or best-of-n, respectively. Panel (c) shows, as a function of the number of samples used to generate the response, the total number of tokens that the models generate to obtain the response to each question, averaged across questions.
        We show $95 \%$ confidence intervals obtained by bootstrapping $50$ times.
    }
    \label{fig:gsm8k_all_unreasoning}
\end{figure}

\begin{figure}[h!]
    \centering
    \subfloat{
        \centering
        \includegraphics[width=0.9\linewidth]{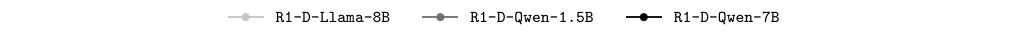}
    }\\
    \setcounter{subfigure}{0}
    \subfloat[Accuracy vs Reasoning Effort]{
        \includegraphics[width=0.48\linewidth, keepaspectratio, trim={0cm 0.5cm 0cm 0cm}, clip]{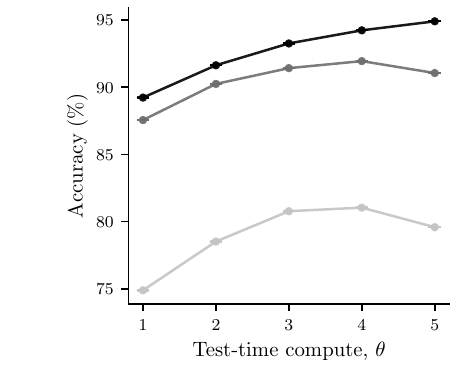}
    }
    \hfill 
    \subfloat[Tokens vs Reasoning Effort]{
        \includegraphics[width=0.48\linewidth, keepaspectratio, trim={0cm 0.5cm 0cm 0cm}, clip]{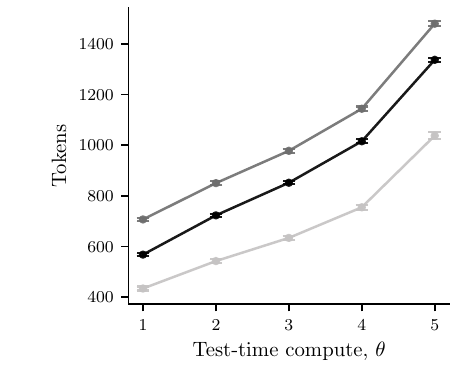}
    }
    \caption{\xhdr{Accuracy of reasoning models distilled from \texttt{DeepSeek-R1} on GSM8K using chain-of-thought}
        Panel (a) shows the average accuracy of various reasoning models from the \texttt{Llama} and \texttt{Qwen} families distilled from \texttt{DeepSeek-R1} over questions from the GSM8K dataset. Here, the reasoning effort is defined by binning the model outputs into quantiles based on the number of reasoning tokens (see Appendix~\ref{app:experimental-details}).
        Panel (c) shows, as a function of the reasoning effort, the total number of tokens (including reasoning and non-reasoning tokens) that the models generate as a response to each question, averaged across questions.
        We show $95 \%$ confidence intervals obtained by bootstrapping $50$ times. Refer to Appendix~\ref{app:experimental-details} for further details regarding the evaluation of the models.
        The reported accuracies are computed only over outputs in which the model generates at least $5$ reasoning tokens and does not exceed the maximum generation length of $2048$ tokens (see Appendix~\ref{app:experimental-details} for more details).
    }
    \label{fig:gsm8k_reasoning_models}
\end{figure}

\clearpage
\newpage

\subsubsection{Model Evaluation on AIME}

\begin{figure}[h!]
    \centering
    \subfloat{
        \centering
        \includegraphics[width=0.9\linewidth]{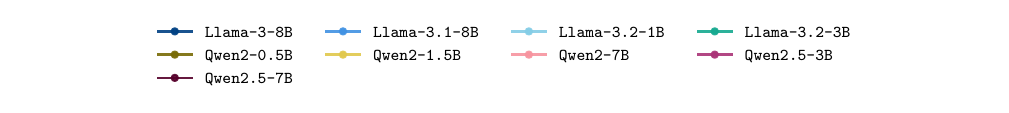}}\\
    \vspace{-0.3cm}
    \subfloat{
        \includegraphics[width=0.31\linewidth, keepaspectratio, trim={0cm 1cm 0cm 0cm}]{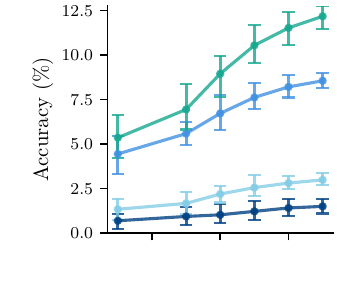}
    }
    \subfloat{
        \includegraphics[width=0.31\linewidth,keepaspectratio, trim={0cm 1cm 0cm 0cm}, clip]{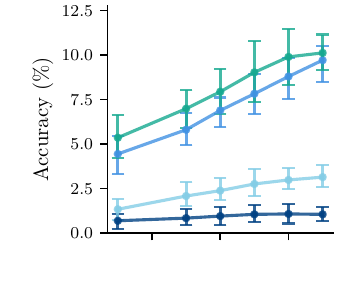}
    }
    \subfloat{
        \includegraphics[width=0.31\linewidth,keepaspectratio, trim={0cm 1cm 0cm 0cm}, clip]{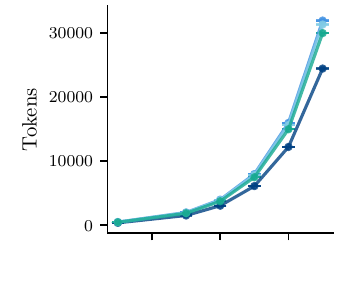}
    }\\
    \vspace{0.5cm}
    \setcounter{subfigure}{0}
    \subfloat[Majority Vote]{
        \includegraphics[width=0.31\linewidth,keepaspectratio, trim={0cm 0cm 0cm 0cm}, clip]{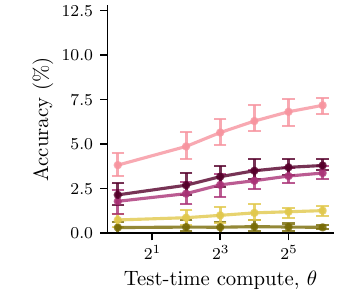}  }
    \subfloat[Best-of-n]{
        \includegraphics[width=0.31\linewidth,keepaspectratio, trim={0cm 0cm 0cm 0cm}, clip]{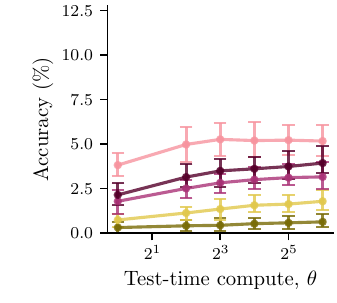}     }
    \subfloat[Number of Tokens]{
        \includegraphics[width=0.31\linewidth,keepaspectratio, trim={0cm 0cm 0cm 0cm}, clip]{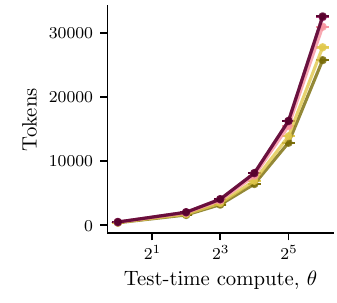}    }
    \caption{\xhdr{Accuracy of \texttt{Llama} and \texttt{Qwen} models on AIME using majority voting and best-of-n}
        Panels (a) and (b) show the average accuracy of various LLMs from the \texttt{Llama} and \texttt{Qwen} families over questions from the AIME dataset, where the responses of the models are obtained using majority voting or best-of-n, respectively. Panel (c) shows, as a function of the number of samples used to generate the response, the total number of tokens that the models generate to obtain the response to each question, averaged across questions. Here, we compute the accuracies for majority voting, and best-of-n are computed across the same outputs, and hence both majority voting and best-of-n generate the exact same number of average tokens.
        We show $95 \%$ confidence intervals obtained by bootstrapping $50$ times. Refer to Appendix~\ref{app:experimental-details} for further details regarding the evaluation of the models.
    }
    \label{fig:aime_all_unreasoning}
\end{figure}

\begin{figure}[h!]
    \centering
    \subfloat{
        \centering
        \includegraphics[width=0.9\linewidth]{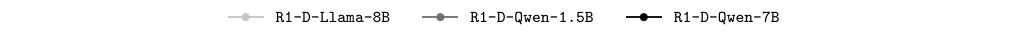}
    }\\
    \setcounter{subfigure}{0}
    \subfloat[Accuracy vs Reasoning Effort]{
        \includegraphics[width=0.48\linewidth, keepaspectratio, trim={0cm 0.5cm 0cm 0cm}, clip]{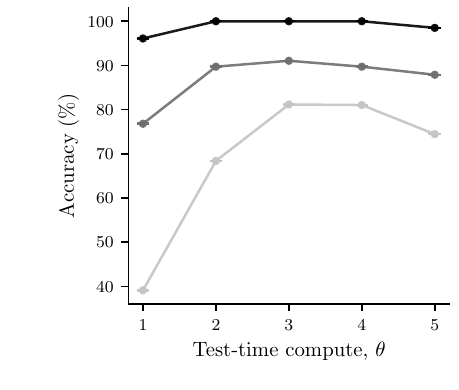}
    }
    \hfill 
    \subfloat[Tokens vs Reasoning Effort]{
        \includegraphics[width=0.48\linewidth, keepaspectratio, trim={0cm 0.5cm 0cm 0cm}, clip]{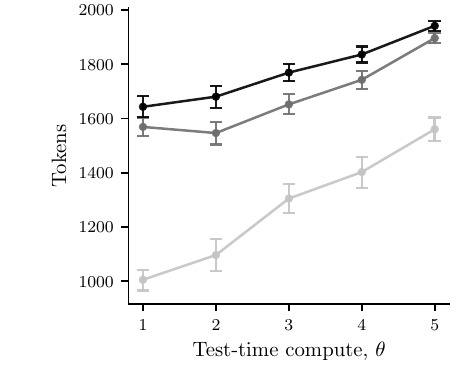}
    }
    
    \caption{\xhdr{Accuracy of reasoning models distilled from \texttt{DeepSeek-R1} on AIME using chain-of-thought}
        Panel (a) shows the average accuracy of various reasoning models from the \texttt{Llama} and \texttt{Qwen} families distilled from \texttt{DeepSeek-R1} over questions from the AIME dataset. Here, the reasoning effort is defined by binning the model outputs into quantiles based on the number of reasoning tokens (see Appendix~\ref{app:experimental-details}).
        Panel (c) shows, as a function of the reasoning effort, the total number of tokens (including reasoning and non-reasoning tokens) that the models generate as a response to each question, averaged across questions.
        We show $95 \%$ confidence intervals obtained by bootstrapping $50$ times. Refer to Appendix~\ref{app:experimental-details} for further details regarding the evaluation of the models.
        The reported accuracies are computed only over outputs in which the model generates at least $5$ reasoning tokens and does not exceed the maximum generation length of $2048$ tokens (see Appendix~\ref{app:experimental-details} for more details).
    }
    \label{fig:aime_reasoning_models}
\end{figure}

\clearpage
\newpage

\subsubsection{Model Evaluation on GPQA}

\begin{figure}[h!]
    \centering
    \subfloat{
        \centering
        \includegraphics[width=0.9\linewidth]{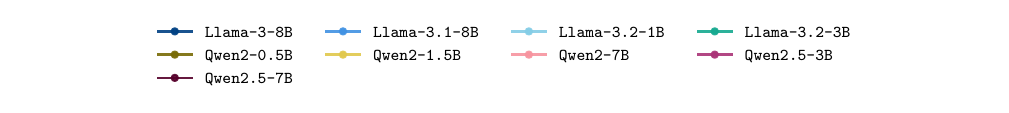}}\\
    \vspace{-0.3cm}
    \subfloat{
        \includegraphics[width=0.31\linewidth, keepaspectratio, trim={0cm 1cm 0cm 0cm}]{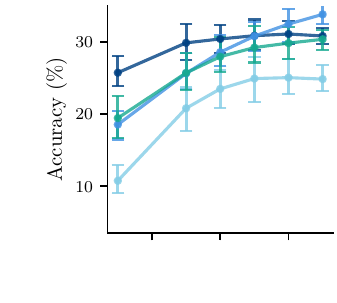}
    }
    \subfloat{
        \includegraphics[width=0.31\linewidth,keepaspectratio, trim={0cm 1cm 0cm 0cm}, clip]{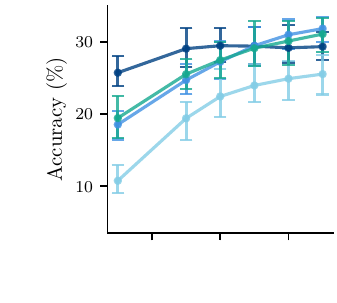}
    }
    \subfloat{
        \includegraphics[width=0.31\linewidth,keepaspectratio, trim={0cm 1cm 0cm 0cm}, clip]{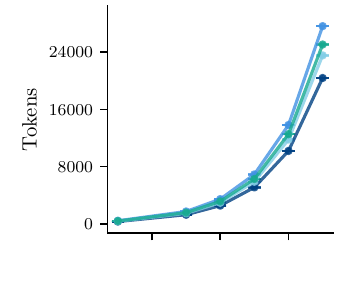}
    }\\
    \vspace{0.5cm}
    \setcounter{subfigure}{0}
    \subfloat[Majority Vote]{
        \includegraphics[width=0.31\linewidth,keepaspectratio, trim={0cm 0cm 0cm 0cm}, clip]{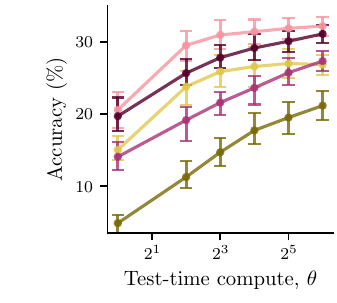}  }
    \subfloat[Best-of-n]{
        \includegraphics[width=0.31\linewidth,keepaspectratio, trim={0cm 0cm 0cm 0cm}, clip]{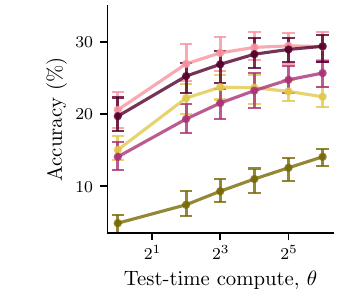}     }
    \subfloat[Number of Tokens]{
        \includegraphics[width=0.31\linewidth,keepaspectratio, trim={0cm 0cm 0cm 0cm}, clip]{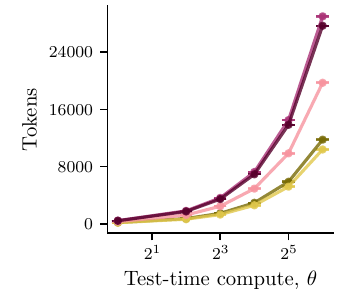}    }
    \caption{\xhdr{Accuracy of \texttt{Llama} and \texttt{Qwen} models on GPQA using majority voting and best-of-n}
        Panels (a) and (b) show the average accuracy of various LLMs from the \texttt{Llama} and \texttt{Qwen} families over questions from the GPQA dataset, where the responses of the models are obtained using majority voting or best-of-n, respectively. Panel (c) shows, as a function of the number of samples used to generate the response, the total number of tokens that the models generate to obtain the response to each question, averaged across questions. Here, we compute the accuracies for majority voting, and best-of-n are computed across the same outputs, and hence both majority voting and best-of-n generate the exact same number of average tokens.
        We show $95 \%$ confidence intervals obtained by bootstrapping $50$ times. Refer to Appendix~\ref{app:experimental-details} for further details regarding the evaluation of the models.
    }
    \label{fig:gpqa_all_unreasoning}
\end{figure}

\begin{figure}[h!]
    \centering
    \subfloat{
        \centering
        \includegraphics[width=0.9\linewidth]{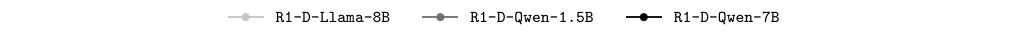}
    }\\
    \setcounter{subfigure}{0}
    \subfloat[Accuracy vs Reasoning Effort]{
        \includegraphics[width=0.48\linewidth, keepaspectratio, trim={0cm 0.5cm 0cm 0cm}, clip]{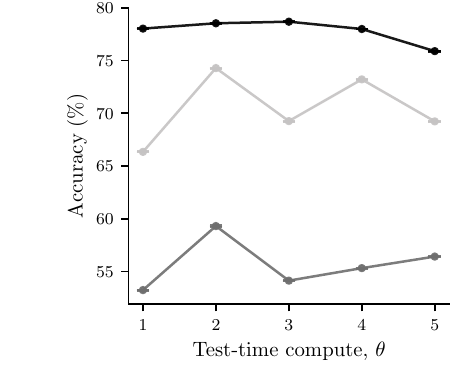}
    }
    \hfill 
    \subfloat[Tokens vs Reasoning Effort]{
        \includegraphics[width=0.48\linewidth, keepaspectratio, trim={0cm 0.5cm 0cm 0cm}, clip]{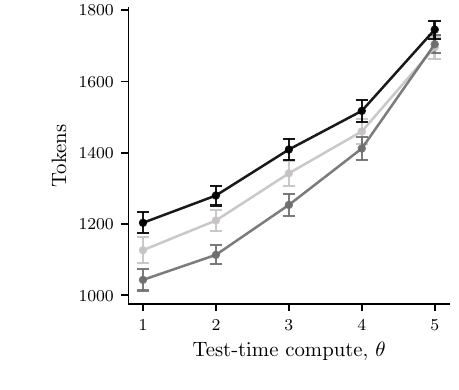}
    }
    
    \caption{\xhdr{Accuracy of reasoning models distilled from \texttt{DeepSeek-R1} on GPQA using chain-of-thought}
        Panel (a) shows the average accuracy of various reasoning models from the \texttt{Llama} and \texttt{Qwen} families distilled from \texttt{DeepSeek-R1} over questions from the GPQA dataset. Here, the reasoning effort is defined by binning the model outputs into quantiles based on the number of reasoning tokens (see Appendix~\ref{app:experimental-details}).
        Panel (c) shows, as a function of the reasoning effort, the total number of tokens (including reasoning and non-reasoning tokens) that the models generate as a response to each question, averaged across questions.
        We show $95 \%$ confidence intervals obtained by bootstrapping $50$ times. Refer to Appendix~\ref{app:experimental-details} for further details regarding the evaluation of the models.
        The reported accuracies are computed only over outputs in which the model generates at least $5$ reasoning tokens and does not exceed the maximum generation length of $2048$ tokens (see Appendix~\ref{app:experimental-details} for more details).
    }
    \label{fig:gpqa_reasoning_models}
\end{figure}

\clearpage
\newpage

\subsubsection{Model Evaluation on TruthfulQA}

\begin{figure}[h!]
    \centering
    \subfloat{
        \centering
        \includegraphics[width=0.9\linewidth]{figures/GPQA/dynamics/unreasoning/best-of-n/values/legend.pdf}}\\
    \vspace{-0.3cm}
    \subfloat{
        \includegraphics[width=0.31\linewidth,keepaspectratio, trim={0cm 0cm 0cm 0cm}, clip]{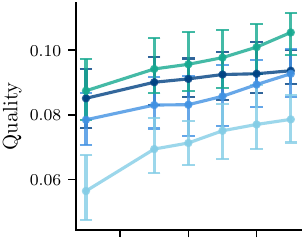}
    }
    \subfloat{
        \includegraphics[width=0.31\linewidth,keepaspectratio, trim={0cm 0cm 0cm 0cm}, clip]{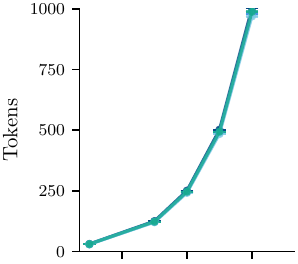}
    }\\
    \vspace{0.5cm}
    \setcounter{subfigure}{0}
    \subfloat[Best-of-n]{
        \includegraphics[width=0.31\linewidth,keepaspectratio, trim={0cm 0cm 0cm 0cm}, clip]{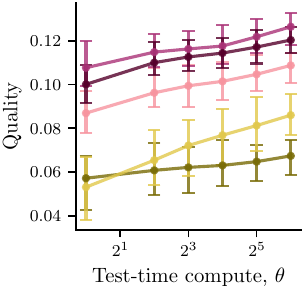}     }
    \subfloat[Number of Tokens]{
        \includegraphics[width=0.31\linewidth,keepaspectratio, trim={0cm 0cm 0cm 0cm}, clip]{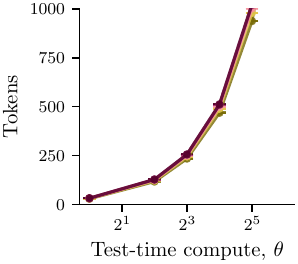}    }
    \caption{\xhdr{Quality of \texttt{Llama} and \texttt{Qwen} models on TruthfulQA using best-of-n}
        Panel (a) shows the average quality of various LLMs from the \texttt{Llama} and \texttt{Qwen} families over questions from the TruthfulQA dataset, where the responses of the models are obtained using best-of-n. To measure quality, we use a \texttt{BLEURT}-based score defined as the difference between the maximum similarity to a true reference answer and the maximum similarity to a false reference answer. Panel (b) shows, as a function of the number of samples used to generate the response, the total number of tokens that the models generate to obtain the response to each question, averaged across questions.
        We show $95 \%$ confidence intervals obtained by bootstrapping $50$ times. Refer to Appendix~\ref{app:experimental-details} for further details regarding the evaluation of the models.
    }
    \label{fig:truhthfullqa_all_unreasoning}
\end{figure}

\clearpage
\newpage

\subsection{Dynamics and Equilibria of Test-Time Compute Games}\label{app:dynamics-results}

Here, for each dataset (\texttt{GSM8K}, \texttt{AIME}, \texttt{GPQA}, \texttt{TruthfulQA}) and each test-time compute method (majority voting and best-of-n for non-reasoning models, and chain-of-thought for reasoning models), we report: (i) the value offered by providers as a function of their test time compute, (ii) the compute levels and market shares when providers better-respond to each other, as a function of the iteration $t$, (iii) the evolution of the potential $\Phi$ (Eq.~\ref{eq:potential}) when providers better-respond, and (iv), the inefficiency (PoA($\Gcal$)-1) at equilibrium for each game as a function of user's rationality $\beta$. We also report the inefficiency at each time step, which corresponds to $\max_{\thetab} \SW(\thetab)/\SW(\thetab^t)$ whenever $\thetab^t$ is not an equilibrium.

\subsubsection{Test-Time Compute Equilibria on GSM8K}\label{app:dynamics-results-gsm8k}

\begin{figure}[h!]
    \centering
    \subfloat{
        \centering
        \includegraphics[width=0.6\linewidth]{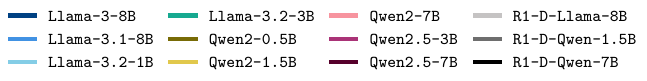}
    }\\
    \setcounter{subfigure}{0}
    \centering
    \begin{subfigure}{0.32\textwidth}
        \centering
        \includegraphics[width=\linewidth]{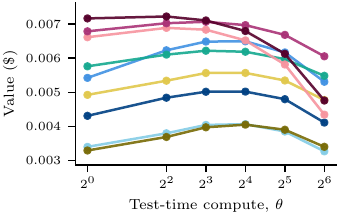}
        \caption{Majority voting}
        \label{fig:gsm8k:v_curves:majority}
    \end{subfigure}
    \hfill
    \begin{subfigure}{0.32\textwidth}
        \centering
        \includegraphics[width=\linewidth]{figures/GSM8K/dynamics/unreasoning/best-of-n/values/v_curves.pdf}
        \caption{Best-of-n}
        \label{fig:gsm8k:v_curves:bon}
    \end{subfigure}
    \hfill
    \begin{subfigure}{0.32\textwidth}
        \centering
        \includegraphics[width=\linewidth]{figures/GSM8K/dynamics/reasoning/relative_efforts/values/v_curves.pdf}
        \caption{Chain-of-thought}
        \label{fig:gsm8k:v_curves:reasoning}
    \end{subfigure}
    
    \caption{{\bf User values offered by providers in a test-time compute game on GSM8K.}
     The figure shows the user values $V_i(\theta)$ offered by providers in a test-time compute game $\Gcal$, as a function of their test-time compute $\theta$.
     Panel~(a) and (b) correspond to games with $N=9$ providers serving non-reasoning models from the \texttt{Llama} and \texttt{Qwen} families, where providers use, respectively, majority voting and best-of-n across $\theta$ samples. Panel (c) corresponds to a game with $N=3$ providers serving reasoning models distilled from \texttt{DeepSeek-R1}, where $\theta$ represents reasoning effort, defined by binning the model outputs into quantiles based on the number of reasoning tokens (see Appendix~\ref{app:experimental-details}).
     In both games, providers serve queries $Q$ from the \texttt{GSM8K} dataset, we set $\beta = 1000$ and consider that each (average) point of accuracy offers a value of $\$0.008$ to the users.
    }
    \label{fig:gsm8k:v_curves_comparison}
\end{figure}

\begin{figure}[h!]
    \centering
    \subfloat{
        \centering
        \includegraphics[width=0.7\linewidth]{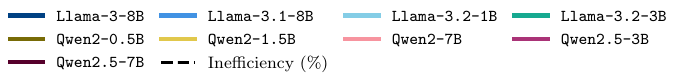}
    }

    \subfloat{
        \includegraphics[width=0.3\linewidth]{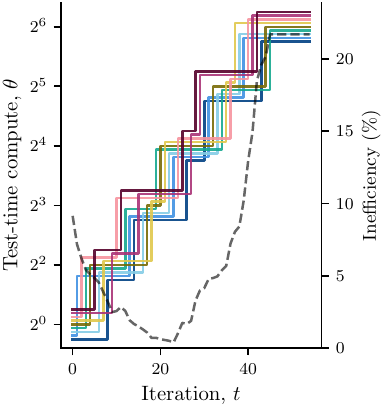}
    }
    \subfloat{
        \includegraphics[width=0.3\linewidth]{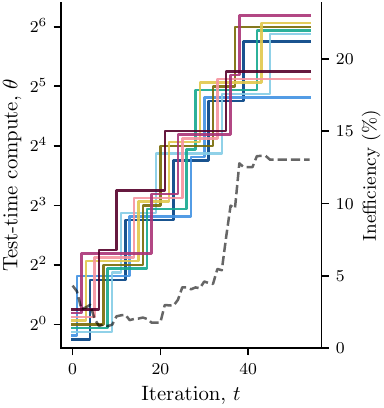}
    }
    \subfloat{
        \includegraphics[width=0.3\linewidth]{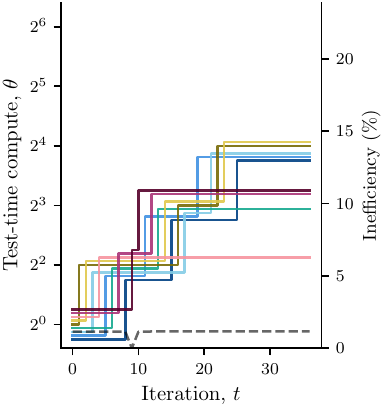}
    }\\
    \setcounter{subfigure}{0}
    \hspace{-5mm}
    \subfloat[$\beta=2\cdot10^2$]{
        \includegraphics[width=0.3\linewidth]{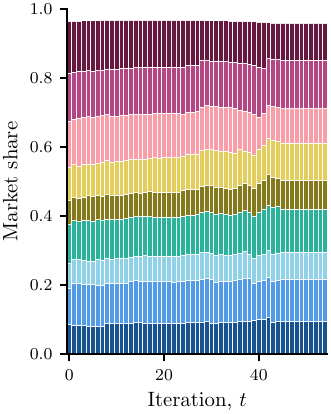}
        }
    \subfloat[$\beta=10^3$]{
        \includegraphics[width=0.3\linewidth]{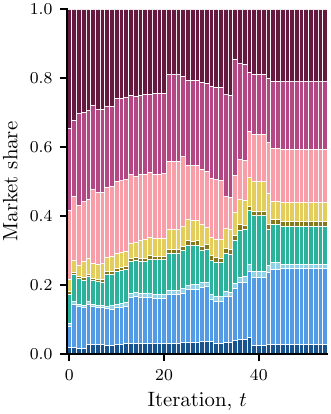}
    }
    \subfloat[$\beta=10^5$]{
        \includegraphics[width=0.3\linewidth]{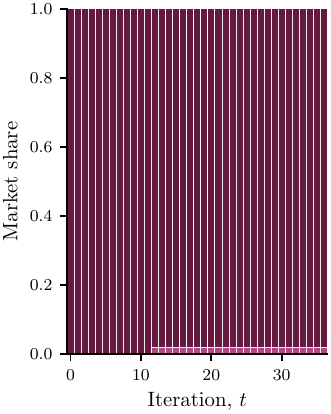}
    }
    \caption{{\bf Dynamics of a test-time compute game using majority-voting.}
     The figure shows, for different levels of user rationality $\beta$, the better-response dynamics of a test-time compute game $\Gcal$ where $N=9$ providers sequentially select a test-time compute level that increases their utility. The upper panels show the compute levels $\theta$ selected by each provider and the resulting market inefficiency $(\text{PoA}(\Gcal) - 1)$, and the lower panels show the market share of each provider.
     Here, all providers use majority-voting across $\theta$ samples as their test-time compute method to serve queries $Q$ from the \texttt{GSM8K} dataset. We consider that providers operate with a margin of $25\%$ between per-token price and per-token cost, and that each (average) point of accuracy offers a value of $\$0.008$ to the users.
}
    \label{fig:market-dynamics-gsm8k-unreasoning-maj}
\end{figure}

\begin{figure}[h!]
    \centering
    \subfloat[Potential ($\beta=2\cdot10^2$)]{
        \includegraphics[width=0.3\linewidth]{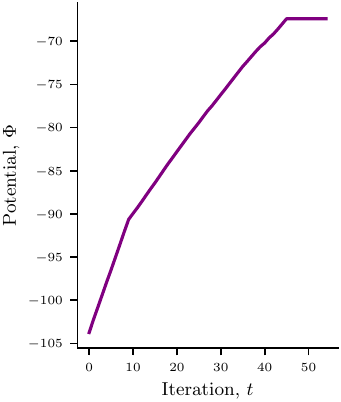}
    }
    \subfloat[Potential ($\beta=10^3$)]{
        \includegraphics[width=0.3\linewidth]{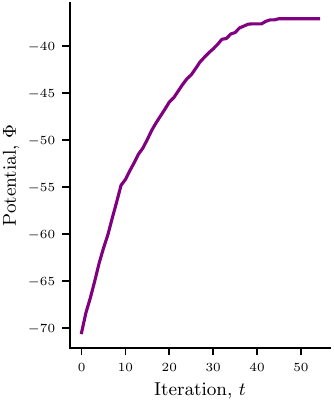}
    }
    \subfloat[Potential ($\beta=10^5$)]{
        \includegraphics[width=0.3\linewidth]{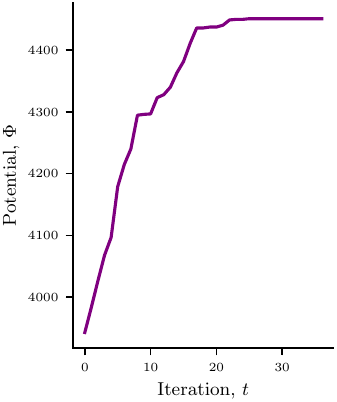}
    }
    \caption{{\bf Potential of a test-time compute game using majority-voting.}
     The figure shows, for different levels of user rationality $\beta$, the evolution of the potential $\Phi$ (see Eq.~\ref{eq:potential}) in the test-time compute games in Figure~\ref{fig:market-dynamics-gsm8k-unreasoning-maj} where $N=9$ providers sequentially select a test-time compute level that increases their utility.
     }
    \label{fig:potential-gsm8k-unreasoning-maj}
\end{figure}

\begin{figure}
    \centering
    \includegraphics[width=0.7\linewidth]{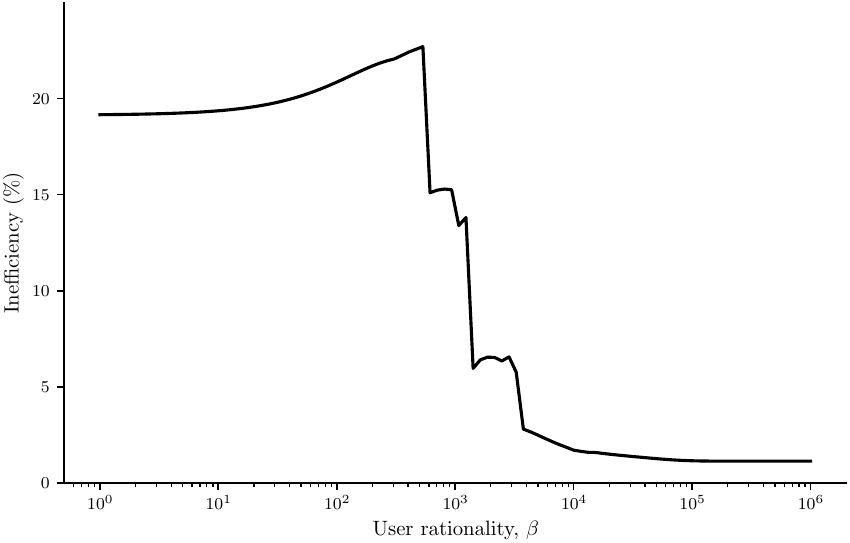}
    \caption{{\bf Inefficiency of a test-time compute game using majority-voting.}
     The figure shows, as a function of users' rationality $\beta$, the inefficiency ($\text{PoA}(\Gcal)-1$) of the test-time compute game in Figure~\ref{fig:market-dynamics-gsm8k-unreasoning-maj}, where $N=9$ providers sequentially select a test-time compute level that increases their utility.
     }
    \label{fig:poa-beta-gsm8k-unreasoning-maj}
\end{figure}

\begin{figure}[h!]
    \centering
    \subfloat{
        \centering
        \includegraphics[width=0.7\linewidth]{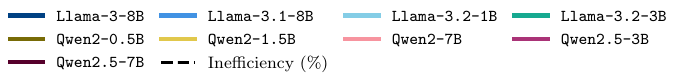}
    }

    \subfloat{
        \includegraphics[width=0.3\linewidth]{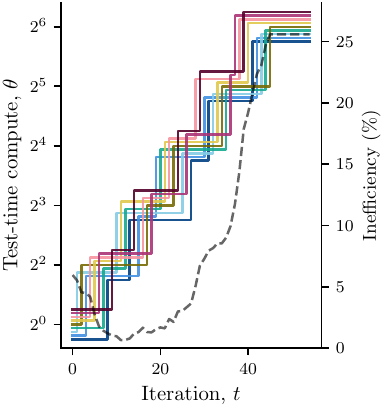}
    }
    \subfloat{
        \includegraphics[width=0.3\linewidth]{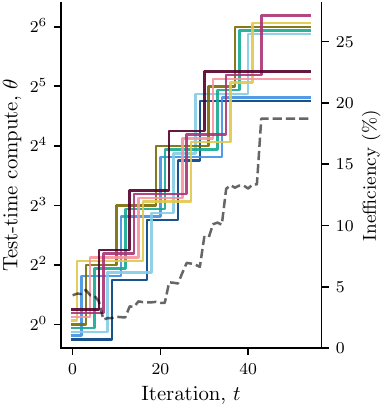}
    }
    \subfloat{
        \includegraphics[width=0.3\linewidth]{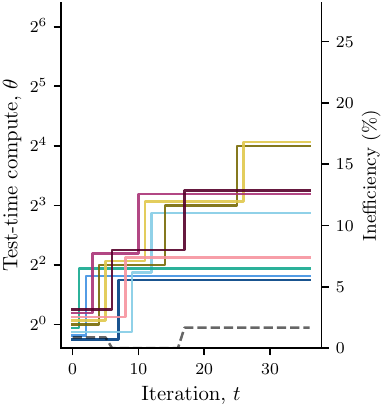}
    }\\
    \setcounter{subfigure}{0}
    \hspace{-5mm}
    \subfloat[$\beta=2\cdot10^2$]{
        \includegraphics[width=0.3\linewidth]{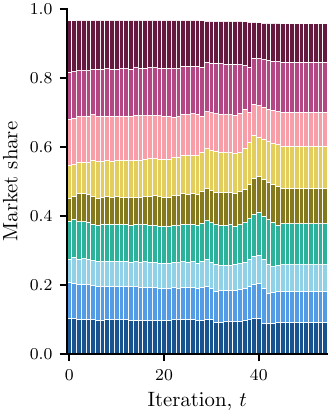}
        }
    \subfloat[$\beta=10^3$]{
        \includegraphics[width=0.3\linewidth]{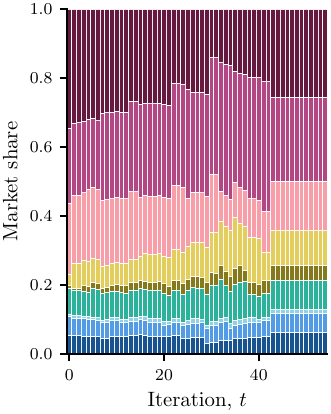}
    }
    \subfloat[$\beta=10^5$]{
        \includegraphics[width=0.3\linewidth]{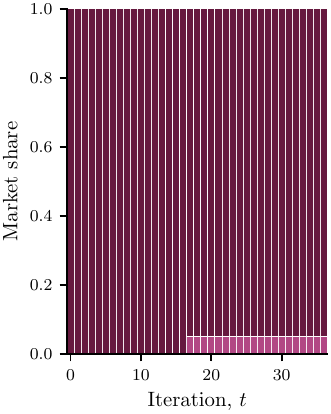}
    }
    \caption{\xhdr{Better-response dynamics of a test-time compute game using best-of-n} 
    The upper panels show, for varying levels of user rationality $\beta$, better-response dynamics when providers serving models from the \texttt{Llama} and \texttt{Qwen} families use best-of-n to serve queries from the GSM8K dataset.
    The solid colored lines (left $y$-axis) represent the test-time compute $\theta$ selected by each provider at each iteration, corresponding to the number of samples used for best-of-n. 
    The dashed black line (right $y$-axis) tracks the Market Inefficiency, defined as $(\text{PoA} - 1) \times 100$ (see Eq.~\ref{eq:PoA}).
    The lower panels show the evolution of the market share at each time step of the better-response dynamics.
    The initial compute level $\thetab^1$ is taken as the lowest possible compute.
    We apply a small vertical jitter to the strategy lines to distinguish overlapping providers and take a fixed profit margin of $25\%$.
}
    \label{fig:market-dynamics-gsm8k-unreasoning-bon}
\end{figure}

\begin{figure}[h!]
    \centering
    \subfloat[Potential ($\beta=2\cdot10^2$)]{
        \includegraphics[width=0.3\linewidth]{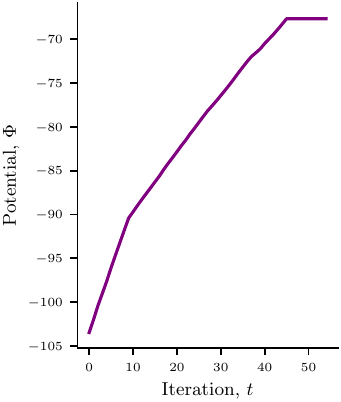}
    }
    \subfloat[Potential ($\beta=10^3$)]{
        \includegraphics[width=0.3\linewidth]{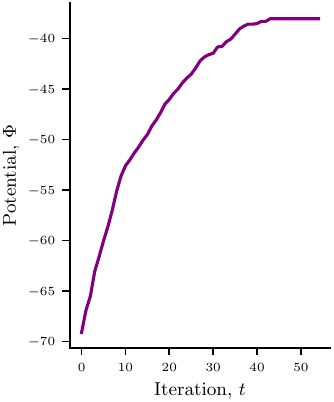}
    }
    \subfloat[Potential ($\beta=10^5$)]{
        \includegraphics[width=0.3\linewidth]{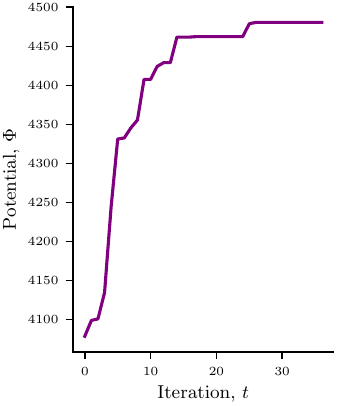}
    }
    \caption{{\bf Potential of a test-time compute game using best-of-n.}
     The figure shows, for different levels of user rationality $\beta$, the evolution of the potential $\Phi$ (see Eq.~\ref{eq:potential}) in the test-time compute games in Figure~\ref{fig:market-dynamics-gsm8k-unreasoning-bon} where $N=9$ providers sequentially select a test-time compute level that increases their utility.
     }
    \label{fig:potential-gsm8k-unreasoning-bon}
\end{figure}

\begin{figure}
    \centering
    \includegraphics[width=0.8\linewidth]{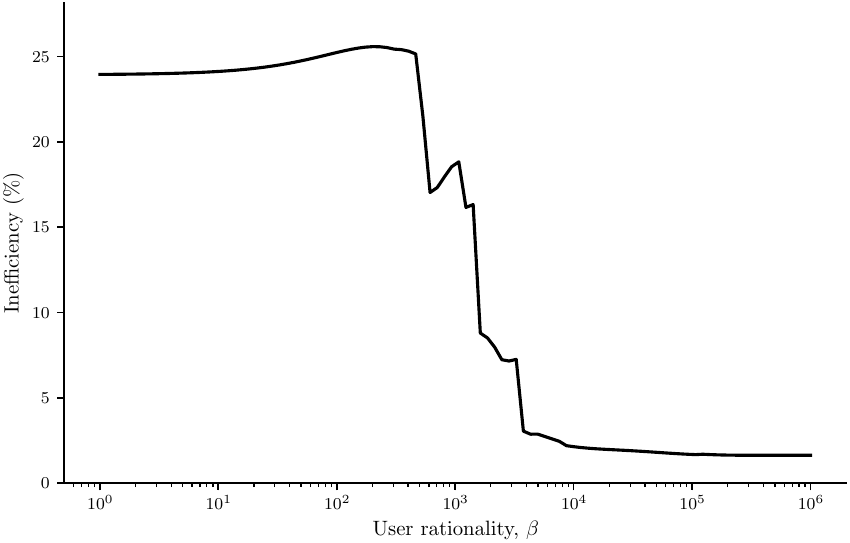}
    \caption{{\bf Inefficiency of a test-time compute game using best-of-n.}
     The figure shows, as a function of users' rationality $\beta$, the inefficiency ($\text{PoA}(\Gcal)-1$) of the test-time compute game in Figure~\ref{fig:market-dynamics-gsm8k-unreasoning-bon}, where $N=9$ providers sequentially select a test-time compute level that increases their utility.
     }
    \label{fig:poa-beta-gsm8k-unreasoning-bon}
\end{figure}

\begin{figure}[h!]
    \centering
    \subfloat{
        \centering
        \includegraphics[width=0.9\linewidth]{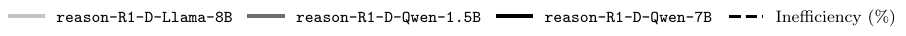}
    }

    \subfloat{
        \includegraphics[width=0.3\linewidth]{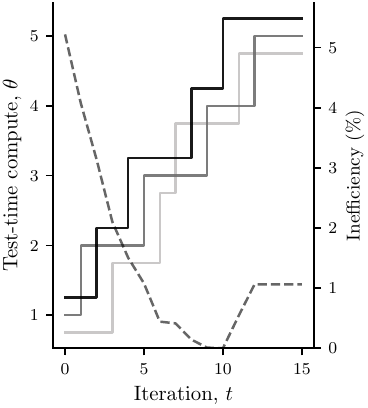}
    }
    \subfloat{
        \includegraphics[width=0.3\linewidth]{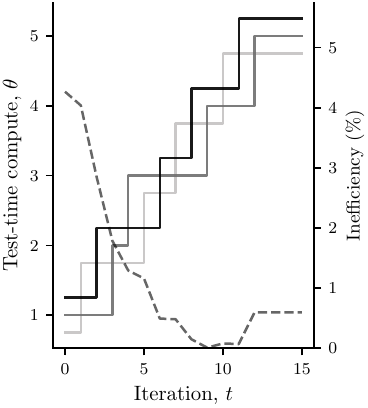}
    }
    \subfloat{
        \includegraphics[width=0.3\linewidth]{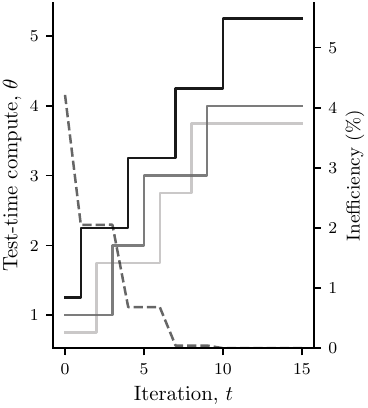}
    }\\
    \setcounter{subfigure}{0}
    \hspace{-5mm}
    \subfloat[$\beta=2\cdot10^2$]{
        \includegraphics[width=0.3\linewidth]{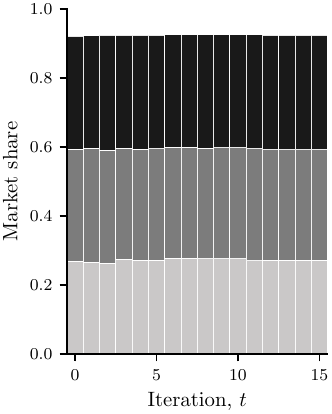}
        }
    \subfloat[$\beta=10^3$]{
        \includegraphics[width=0.3\linewidth]{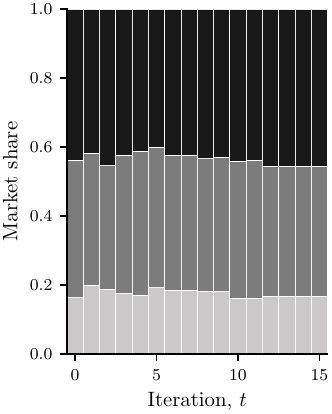}
    }
    \subfloat[$\beta=10^5$]{
        \includegraphics[width=0.3\linewidth]{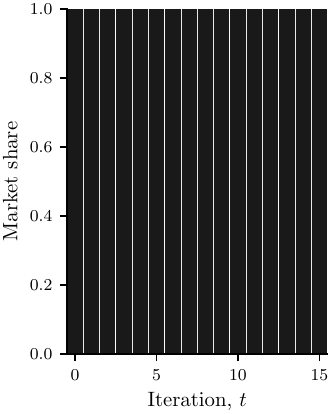}
    }
    \caption{\textbf{Better-response dynamics of a test-time compute game using CoT.} 
    The upper panels show, for varying levels of user rationality $\beta$, better-response dynamics when providers serving models from the \texttt{Llama} and \texttt{Qwen} families distilled from \texttt{DeepSeek-R1} use chain-of-thought to serve queries from the GSM8K dataset.
    The solid colored lines (left $y$-axis) represent the test-time compute $\theta$ selected by each provider at each iteration, corresponding to the reasoning effort used. 
    The dashed black line (right $y$-axis) tracks the Market Inefficiency, defined as $(\text{PoA} - 1) \times 100$ (see Eq.~\ref{eq:PoA}).
    The lower panels show the evolution of the market share at each time step of the better-response dynamics.
    The initial compute level $\thetab^1$ is taken as the lowest possible compute.
    We apply a small vertical jitter to the strategy lines to distinguish overlapping providers and take a fixed profit margin of $25\%$.
}
    \label{fig:market-dynamics-gsm8k-reasoning}
\end{figure}

\begin{figure}[h!]
    \centering
    \subfloat[Potential ($\beta=2\cdot10^2$)]{
        \includegraphics[width=0.3\linewidth]{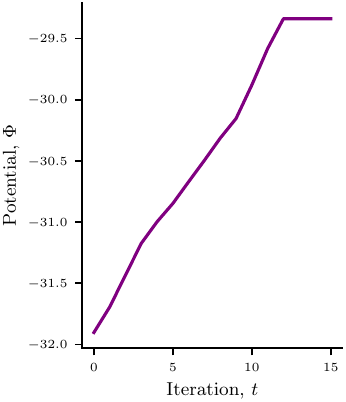}
    }
    \subfloat[Potential ($\beta=10^3$)]{
        \includegraphics[width=0.3\linewidth]{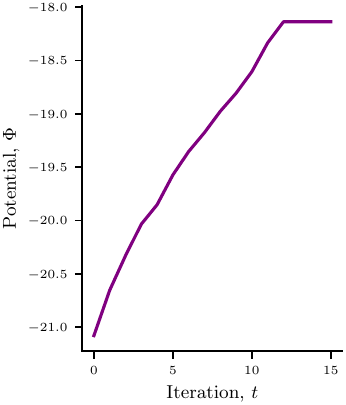}
    }
    \subfloat[Potential ($\beta=10^5$)]{
        \includegraphics[width=0.3\linewidth]{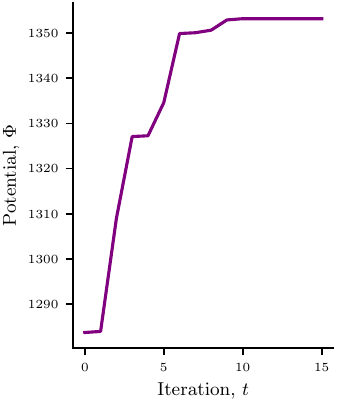}
    }
    
    \caption{{\bf Potential of a test-time compute game using CoT.}
     The figure shows, for different levels of user rationality $\beta$, the evolution of the potential $\Phi$ (see Eq.~\ref{eq:potential}) in the test-time compute games in Figure~\ref{fig:market-dynamics-gsm8k-reasoning} where $N=3$ providers sequentially select a test-time compute level that increases their utility.
     }
    \label{fig:potential-gsm8k-reasoning}
\end{figure}

\begin{figure}
    \centering
    \includegraphics[width=0.8\linewidth]{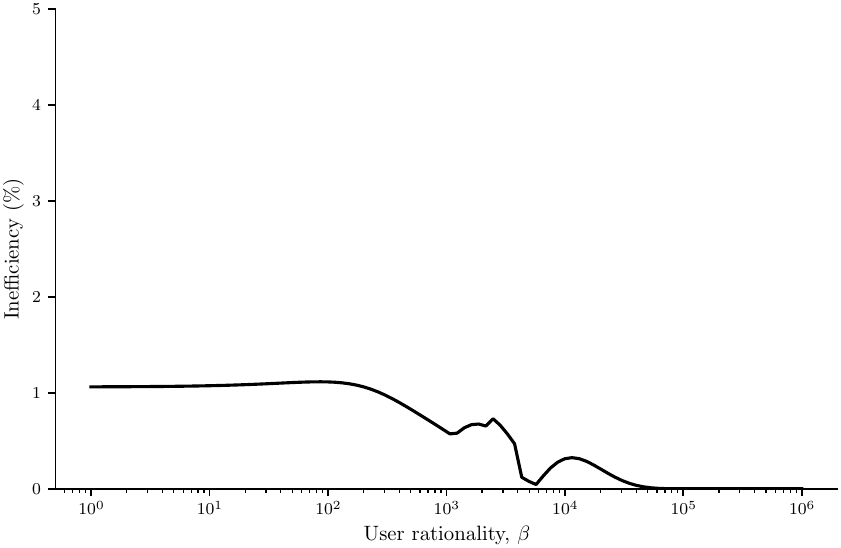}
    \caption{{\bf Inefficiency of a test-time compute game using CoT.}
     The figure shows, as a function of users' rationality $\beta$, the inefficiency ($\text{PoA}(\Gcal)-1$) of the test-time compute game in Figure~\ref{fig:market-dynamics-gsm8k-reasoning}, where $N=3$ providers sequentially select a test-time compute level that increases their utility.
     }
    \label{fig:poa-beta-gsm8k-reasoning}
\end{figure}

\clearpage
\newpage


\subsubsection{Test-Time Compute Equilibria on AIME}\label{app:dynamics-results-AIME}

\begin{figure}[h!]
    \centering
    \subfloat{
        \centering
        \includegraphics[width=0.6\linewidth]{figures/var/legend_all.pdf}
    }\\
    \setcounter{subfigure}{0}
    \centering
    \begin{subfigure}{0.32\textwidth}
        \centering
        \includegraphics[width=\linewidth]{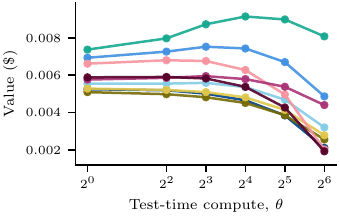}
        \caption{Majority voting}
        \label{fig:AIME:v_curves:majority}
    \end{subfigure}
    \hfill
    \begin{subfigure}{0.32\textwidth}
        \centering
        \includegraphics[width=\linewidth]{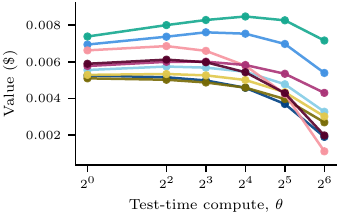}
        \caption{Best-of-n}
        \label{fig:AIME:v_curves:bon}
    \end{subfigure}
    \hfill
    \begin{subfigure}{0.32\textwidth}
        \centering
        \includegraphics[width=\linewidth]{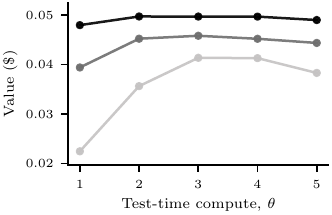}
        \caption{Chain-of-thought}
        \label{fig:AIME:v_curves:reasoning}
    \end{subfigure}
    
    \caption{{\bf User values offered by providers in a test-time compute game on AIME.}
     The figure shows the user values $V_i(\theta)$ offered by providers in a test-time compute game $\Gcal$, as a function of their test-time compute $\theta$.
     Panel~(a) and (b) correspond to games with $N=9$ providers serving non-reasoning models from the \texttt{Llama} and \texttt{Qwen} families, where providers use, respectively, majority voting and best-of-n across $\theta$ samples. Panel (c) corresponds to a game with $N=3$ providers serving reasoning models distilled from \texttt{DeepSeek-R1}, where $\theta$ represents reasoning effort, defined by binning the model outputs into quantiles based on the number of reasoning tokens (see Appendix~\ref{app:experimental-details}).
     In both games, providers serve queries $Q$ from the \texttt{AIME} dataset, we set $\beta = 1000$ and consider that each (average) point of accuracy offers a value of $\$0.05$ to the users.
    }
    \label{fig:AIME:v_curves_comparison}
\end{figure}

\begin{figure}[h!]
    \centering
    \subfloat{
        \centering
        \includegraphics[width=0.7\linewidth]{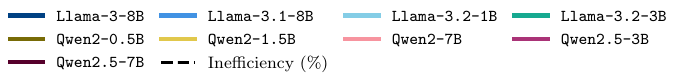}
    }

    \subfloat{
        \includegraphics[width=0.3\linewidth]{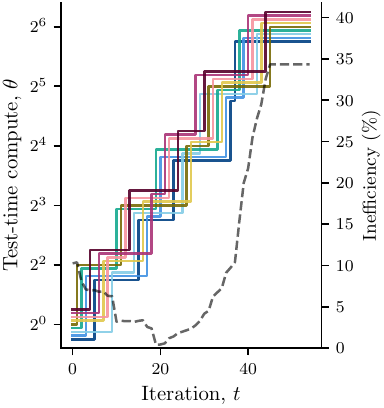}
    }
    \subfloat{
        \includegraphics[width=0.3\linewidth]{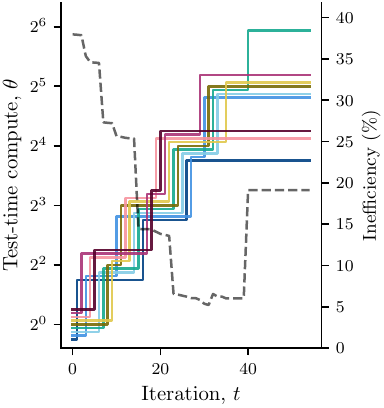}
    }
    \subfloat{
        \includegraphics[width=0.3\linewidth]{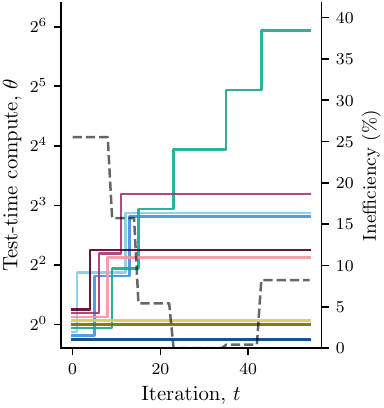}
    }\\
    \setcounter{subfigure}{0}
    \hspace{-5mm}
    \subfloat[$\beta=2\cdot10^2$]{
        \includegraphics[width=0.3\linewidth]{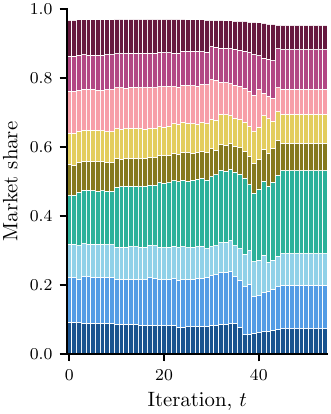}
        }
    \subfloat[$\beta=10^3$]{
        \includegraphics[width=0.3\linewidth]{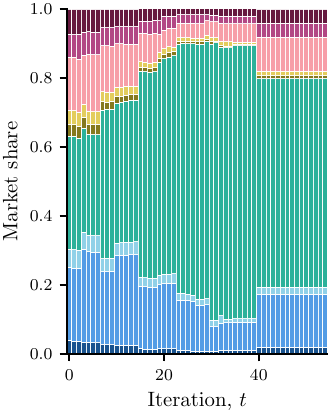}
    }
    \subfloat[$\beta=10^5$]{
        \includegraphics[width=0.3\linewidth]{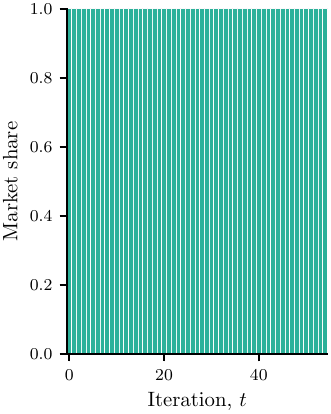}
    }
    \caption{{\bf Dynamics of a test-time compute game using majority-voting.}
     The figure shows, for different levels of user rationality $\beta$, the better-response dynamics of a test-time compute game $\Gcal$ where $N=9$ providers sequentially select a test-time compute level that increases their utility. The upper panels show the compute levels $\theta$ selected by each provider and the resulting market inefficiency $(\text{PoA}(\Gcal) - 1)$, and the lower panels show the market share of each provider.
     Here, all providers use majority-voting across $\theta$ samples as their test-time compute method to serve queries $Q$ from the \texttt{AIME} dataset. We consider that providers operate with a margin of $25\%$ between per-token price and per-token cost, and that each (average) point of accuracy offers a value of $\$0.008$ to the users.
}
    \label{fig:market-dynamics-AIME-unreasoning-maj}
\end{figure}

\begin{figure}[h!]
    \centering
    \subfloat[Potential ($\beta=2\cdot10^2$)]{
        \includegraphics[width=0.3\linewidth]{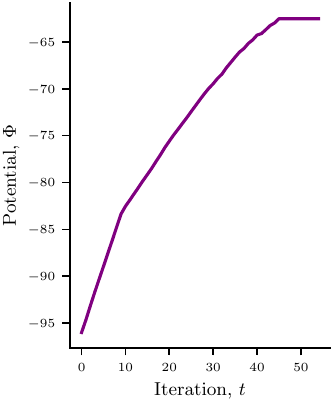}
    }
    \subfloat[Potential ($\beta=10^3$)]{
        \includegraphics[width=0.3\linewidth]{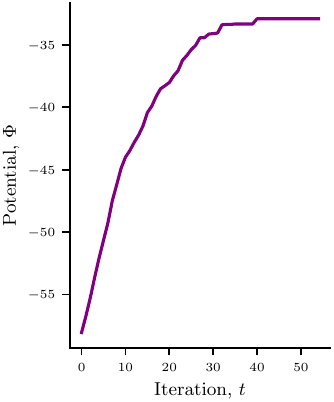}
    }
    \subfloat[Potential ($\beta=10^5$)]{
        \includegraphics[width=0.3\linewidth]{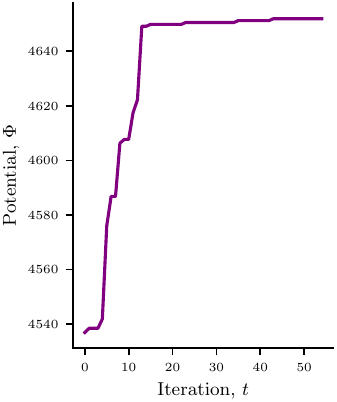}
    }
    \caption{{\bf Potential of a test-time compute game using majority-voting.}
     The figure shows, for different levels of user rationality $\beta$, the evolution of the potential $\Phi$ (see Eq.~\ref{eq:potential}) in the test-time compute games in Figure~\ref{fig:market-dynamics-AIME-unreasoning-maj} where $N=9$ providers sequentially select a test-time compute level that increases their utility.
     }
    \label{fig:potential-AIME-unreasoning-maj}
\end{figure}

\begin{figure}
    \centering
    \includegraphics[width=0.7\linewidth]{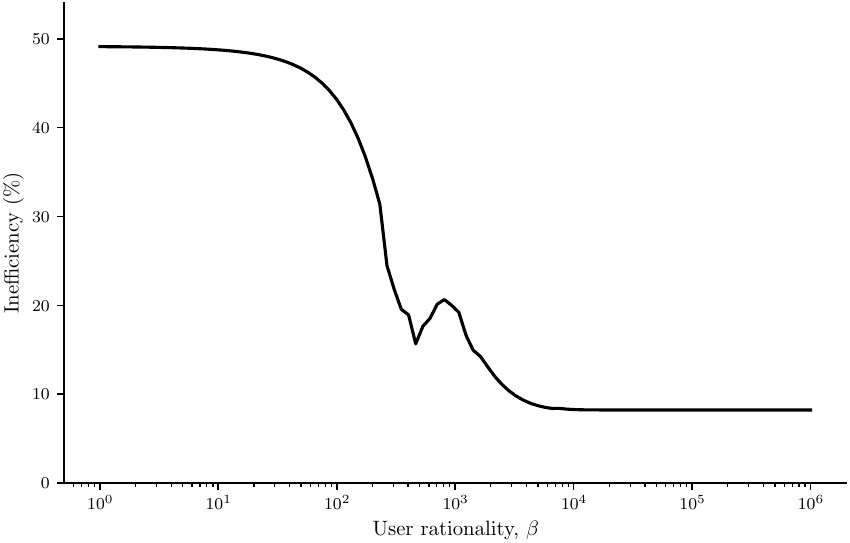}
    \caption{{\bf Inefficiency of a test-time compute game using majority-voting.}
     The figure shows, as a function of users' rationality $\beta$, the inefficiency ($\text{PoA}(\Gcal)-1$) of the test-time compute game in Figure~\ref{fig:market-dynamics-AIME-unreasoning-maj}, where $N=9$ providers sequentially select a test-time compute level that increases their utility.
     }
    \label{fig:poa-beta-AIME-unreasoning-maj}
\end{figure}

\begin{figure}[h!]
    \centering
    \subfloat{
        \centering
        \includegraphics[width=0.7\linewidth]{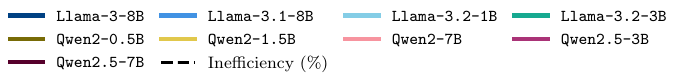}
    }

    \subfloat{
        \includegraphics[width=0.3\linewidth]{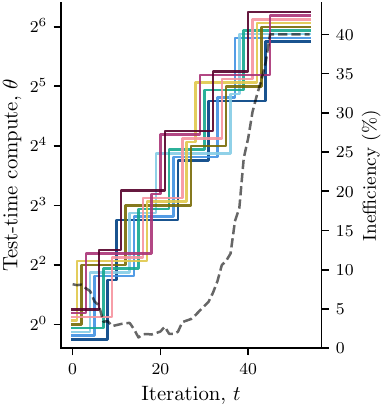}
    }
    \subfloat{
        \includegraphics[width=0.3\linewidth]{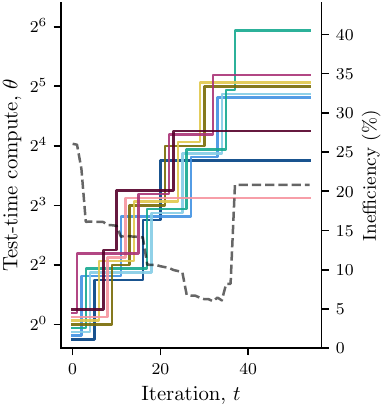}
    }
    \subfloat{
        \includegraphics[width=0.3\linewidth]{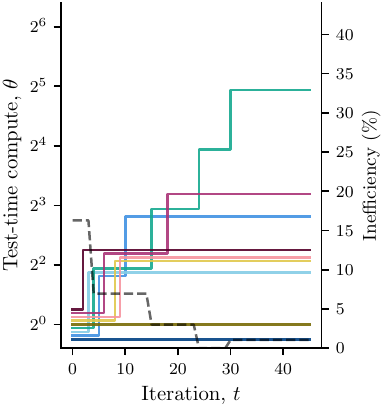}
    }\\
    \setcounter{subfigure}{0}
    \hspace{-5mm}
    \subfloat[$\beta=2\cdot10^2$]{
        \includegraphics[width=0.3\linewidth]{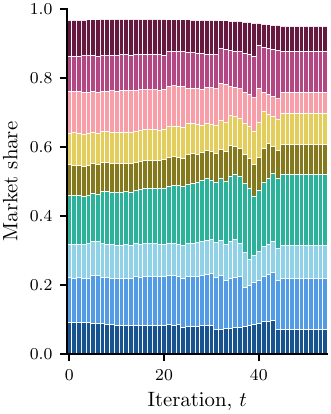}
        }
    \subfloat[$\beta=10^3$]{
        \includegraphics[width=0.3\linewidth]{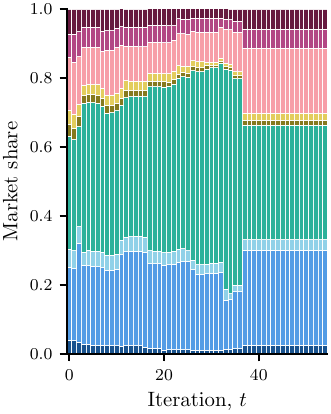}
    }
    \subfloat[$\beta=10^5$]{
        \includegraphics[width=0.3\linewidth]{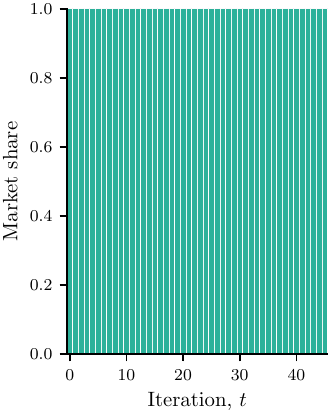}
    }
    \caption{\xhdr{Better-response dynamics of a test-time compute game using best-of-n} 
    The upper panels show, for varying levels of user rationality $\beta$, better-response dynamics when providers serving models from the \texttt{Llama} and \texttt{Qwen} families use best-of-n to serve queries from the AIME dataset.
    The solid colored lines (left $y$-axis) represent the test-time compute $\theta$ selected by each provider at each iteration, corresponding to the number of samples used for best-of-n. 
    The dashed black line (right $y$-axis) tracks the Market Inefficiency, defined as $(\text{PoA} - 1) \times 100$ (see Eq.~\ref{eq:PoA}).
    The lower panels show the evolution of the market share at each time step of the better-response dynamics.
    The initial compute level $\thetab^1$ is taken as the lowest possible compute.
    We apply a small vertical jitter to the strategy lines to distinguish overlapping providers and take a fixed profit margin of $25\%$.
}
    \label{fig:market-dynamics-AIME-unreasoning-bon}
\end{figure}

\begin{figure}[h!]
    \centering
    \subfloat[Potential ($\beta=2\cdot10^2$)]{
        \includegraphics[width=0.3\linewidth]{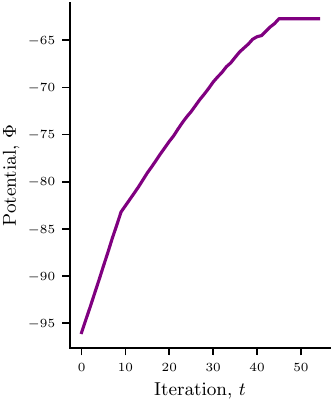}
    }
    \subfloat[Potential ($\beta=10^3$)]{
        \includegraphics[width=0.3\linewidth]{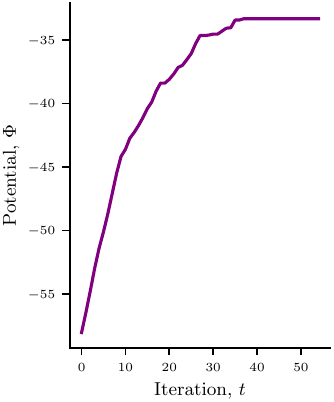}
    }
    \subfloat[Potential ($\beta=10^5$)]{
        \includegraphics[width=0.3\linewidth]{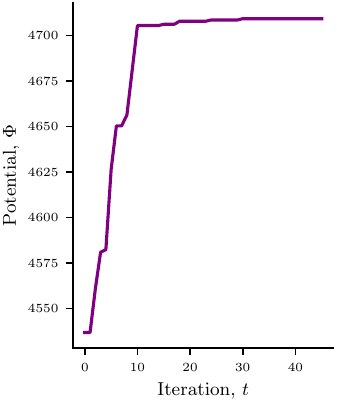}
    }
    \caption{{\bf Potential of a test-time compute game using best-of-n.}
     The figure shows, for different levels of user rationality $\beta$, the evolution of the potential $\Phi$ (see Eq.~\ref{eq:potential}) in the test-time compute games in Figure~\ref{fig:market-dynamics-AIME-unreasoning-bon} where $N=9$ providers sequentially select a test-time compute level that increases their utility.
     }
    \label{fig:potential-AIME-unreasoning-bon}
\end{figure}

\begin{figure}
    \centering
    \includegraphics[width=0.8\linewidth]{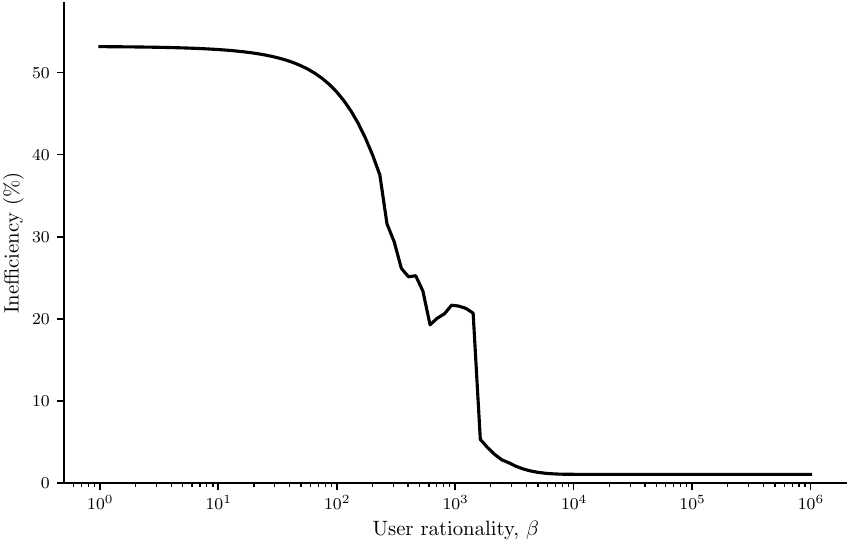}
    \caption{{\bf Inefficiency of a test-time compute game using best-of-n.}
     The figure shows, as a function of users' rationality $\beta$, the inefficiency ($\text{PoA}(\Gcal)-1$) of the test-time compute game in Figure~\ref{fig:market-dynamics-AIME-unreasoning-bon}, where $N=9$ providers sequentially select a test-time compute level that increases their utility.
     }
    \label{fig:poa-beta-AIME-unreasoning-bon}
\end{figure}

\begin{figure}[h!]
    \centering
    \subfloat{
        \centering
        \includegraphics[width=0.9\linewidth]{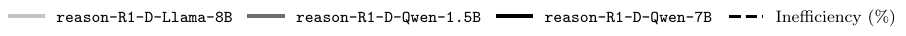}
    }

    \subfloat{
        \includegraphics[width=0.3\linewidth]{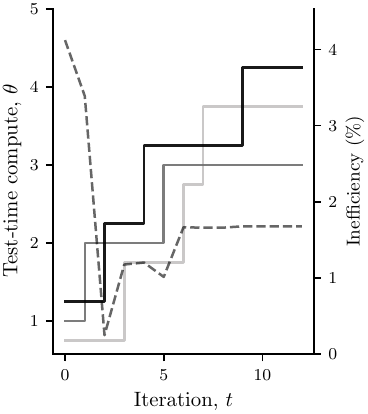}
    }
    \subfloat{
        \includegraphics[width=0.3\linewidth]{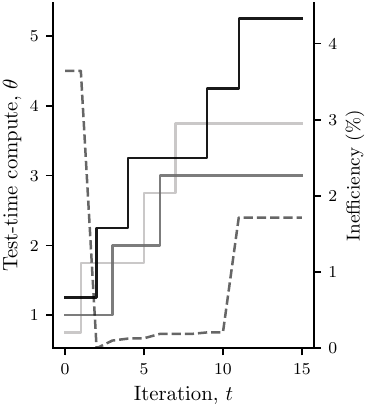}
    }
    \subfloat{
        \includegraphics[width=0.3\linewidth]{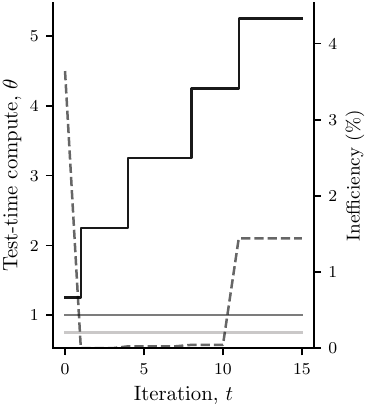}
    }\\
    \setcounter{subfigure}{0}
    \hspace{-5mm}
    \subfloat[$\beta=2\cdot10^2$]{
        \includegraphics[width=0.3\linewidth]{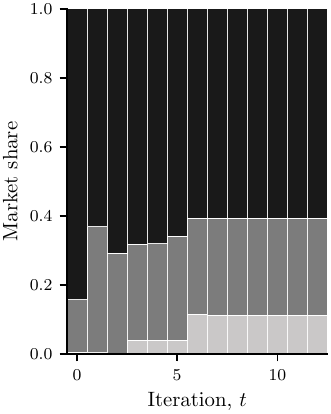}
        }
    \subfloat[$\beta=10^3$]{
        \includegraphics[width=0.3\linewidth]{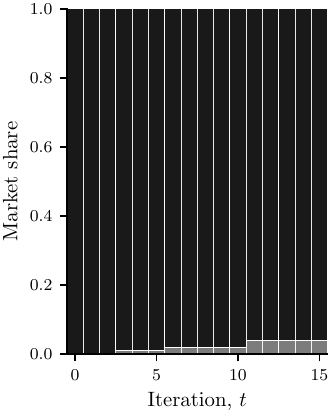}
    }
    \subfloat[$\beta=10^5$]{
        \includegraphics[width=0.3\linewidth]{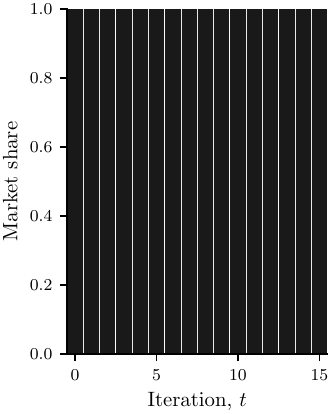}
    }
    \caption{\xhdr{Better-response dynamics of a test-time compute game using CoT} 
    The upper panels show, for varying levels of user rationality $\beta$, better-response dynamics when providers serving models from the \texttt{Llama} and \texttt{Qwen} families distilled from \texttt{DeepSeek-R1} use chain-of-thought to serve queries from the AIME dataset.
    The solid colored lines (left $y$-axis) represent the test-time compute $\theta$ selected by each provider at each iteration, corresponding to the reasoning effort used. 
    The dashed black line (right $y$-axis) tracks the Market Inefficiency, defined as $(\text{PoA} - 1) \times 100$ (see Eq.~\ref{eq:PoA}).
    The lower panels show the evolution of the market share at each time step of the better-response dynamics.
    The initial compute level $\thetab^1$ is taken as the lowest possible compute.
    We apply a small vertical jitter to the strategy lines to distinguish overlapping providers and take a fixed profit margin of $25\%$.
}
    \label{fig:market-dynamics-AIME-reasoning}
\end{figure}

\begin{figure}[h!]
    \centering
    \subfloat[Potential ($\beta=2\cdot10^2$)]{
        \includegraphics[width=0.3\linewidth]{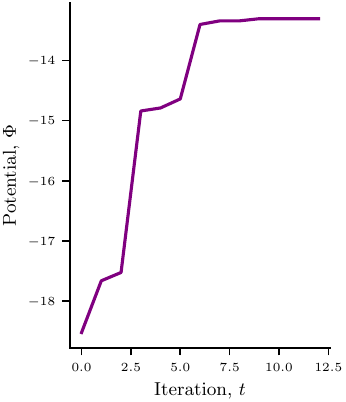}
    }
    \subfloat[Potential ($\beta=10^3$)]{
        \includegraphics[width=0.3\linewidth]{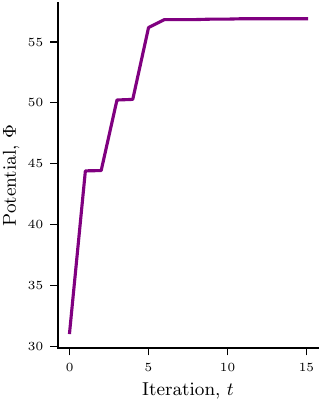}
    }
    \subfloat[Potential ($\beta=10^5$)]{
        \includegraphics[width=0.3\linewidth]{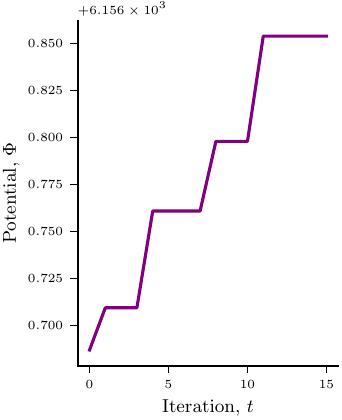}
    }
    
    \caption{{\bf Potential of a test-time compute game using CoT.}
     The figure shows, for different levels of user rationality $\beta$, the evolution of the potential $\Phi$ (see Eq.~\ref{eq:potential}) in the test-time compute games in Figure~\ref{fig:market-dynamics-AIME-reasoning} where $N=3$ providers sequentially select a test-time compute level that increases their utility.
     }
    \label{fig:potential-AIME-reasoning}
\end{figure}

\begin{figure}
    \centering
    \includegraphics[width=0.8\linewidth]{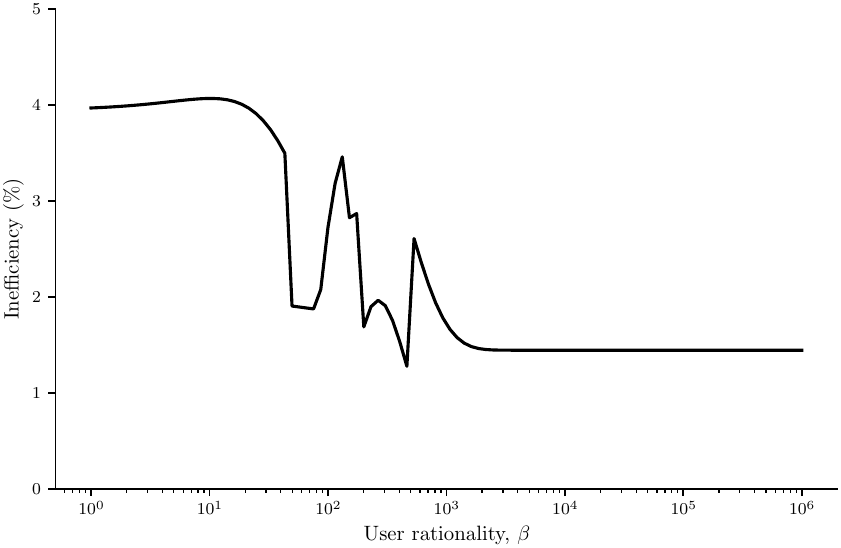}
    \caption{{\bf Inefficiency of a test-time compute game using CoT.}
     The figure shows, as a function of users' rationality $\beta$, the inefficiency ($\text{PoA}(\Gcal)-1$) of the test-time compute game in Figure~\ref{fig:market-dynamics-AIME-reasoning}, where $N=3$ providers sequentially select a test-time compute level that increases their utility.
     }
    \label{fig:poa-beta-AIME-reasoning}
\end{figure}

\clearpage
\newpage


\subsubsection{Test-Time Compute Equilibria on GPQA}\label{app:dynamics-results-GPQA}

\begin{figure}[h!]
    \centering
    \subfloat{
        \centering
        \includegraphics[width=0.6\linewidth]{figures/var/legend_all.pdf}
    }\\
    \setcounter{subfigure}{0}
    \centering
    \begin{subfigure}{0.32\textwidth}
        \centering
        \includegraphics[width=\linewidth]{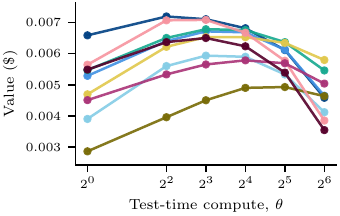}
        \caption{Majority voting}
        \label{fig:GPQA:v_curves:majority}
    \end{subfigure}
    \hfill
    \begin{subfigure}{0.32\textwidth}
        \centering
        \includegraphics[width=\linewidth]{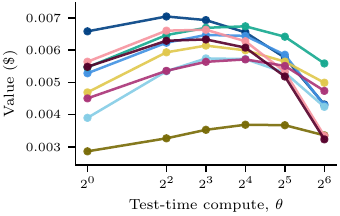}
        \caption{Best-of-n}
        \label{fig:GPQA:v_curves:bon}
    \end{subfigure}
    \hfill
    \begin{subfigure}{0.32\textwidth}
        \centering
        \includegraphics[width=\linewidth]{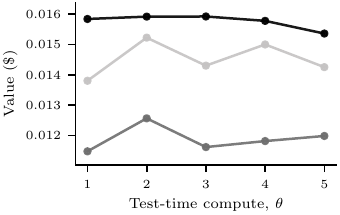}
        \caption{Chain-of-thought}
        \label{fig:GPQA:v_curves:reasoning}
    \end{subfigure}
    
    \caption{{\bf User values offered by providers in a test-time compute game on GPQA.}
     The figure shows the user values $V_i(\theta)$ offered by providers in a test-time compute game $\Gcal$, as a function of their test-time compute $\theta$.
     Panel~(a) and (b) correspond to games with $N=9$ providers serving non-reasoning models from the \texttt{Llama} and \texttt{Qwen} families, where providers use, respectively, majority voting and best-of-n across $\theta$ samples. Panel (c) corresponds to a game with $N=3$ providers serving reasoning models distilled from \texttt{DeepSeek-R1}, where $\theta$ represents reasoning effort, defined by binning the model outputs into quantiles based on the number of reasoning tokens (see Appendix~\ref{app:experimental-details}).
     In both games, providers serve queries $Q$ from the \texttt{GPQA} dataset, we set $\beta = 1000$ and consider that each (average) point of accuracy offers a value of $\$0.02$ to the users.
    }
    \label{fig:GPQA:v_curves_comparison}
\end{figure}

\begin{figure}[h!]
    \centering
    \subfloat{
        \centering
        \includegraphics[width=0.7\linewidth]{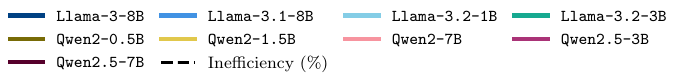}
    }

    \subfloat{
        \includegraphics[width=0.3\linewidth]{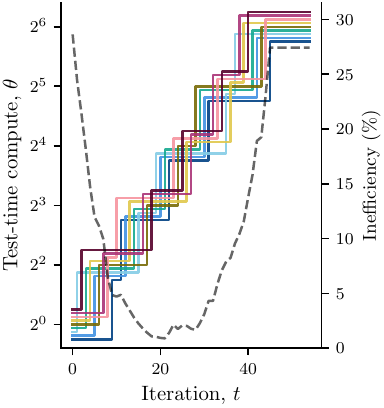}
    }
    \subfloat{
        \includegraphics[width=0.3\linewidth]{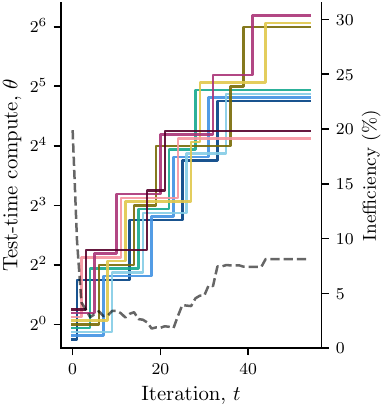}
    }
    \subfloat{
        \includegraphics[width=0.3\linewidth]{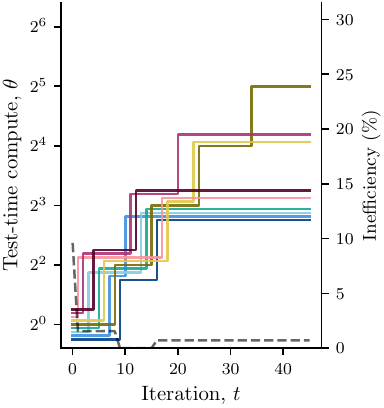}
    }\\
    \setcounter{subfigure}{0}
    \hspace{-5mm}
    \subfloat[$\beta=2\cdot10^2$]{
        \includegraphics[width=0.3\linewidth]{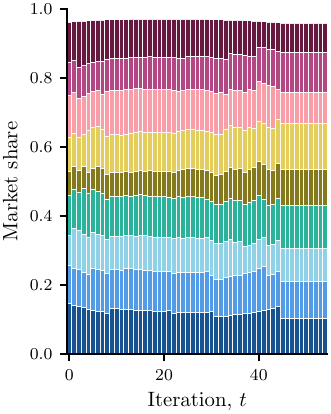}
        }
    \subfloat[$\beta=10^3$]{
        \includegraphics[width=0.3\linewidth]{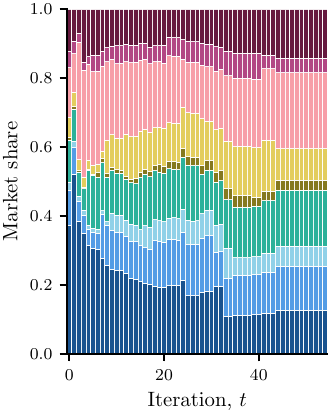}
    }
    \subfloat[$\beta=10^5$]{
        \includegraphics[width=0.3\linewidth]{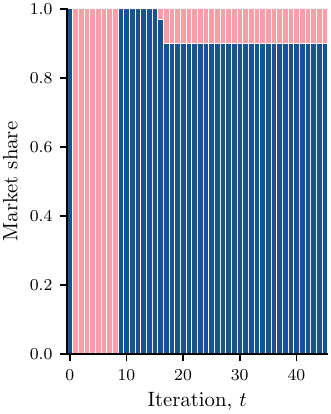}
    }
    \caption{{\bf Dynamics of a test-time compute game using majority-voting.}
     The figure shows, for different levels of user rationality $\beta$, the better-response dynamics of a test-time compute game $\Gcal$ where $N=9$ providers sequentially select a test-time compute level that increases their utility. The upper panels show the compute levels $\theta$ selected by each provider and the resulting market inefficiency $(\text{PoA}(\Gcal) - 1)$, and the lower panels show the market share of each provider.
     Here, all providers use majority-voting across $\theta$ samples as their test-time compute method to serve queries $Q$ from the \texttt{GPQA} dataset. We consider that providers operate with a margin of $25\%$ between per-token price and per-token cost, and that each (average) point of accuracy offers a value of $\$0.008$ to the users.
}
    \label{fig:market-dynamics-GPQA-unreasoning-maj}
\end{figure}

\begin{figure}[h!]
    \centering
    \subfloat[Potential ($\beta=2\cdot10^2$)]{
        \includegraphics[width=0.3\linewidth]{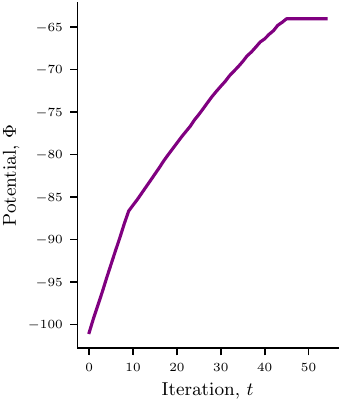}
    }
    \subfloat[Potential ($\beta=10^3$)]{
        \includegraphics[width=0.3\linewidth]{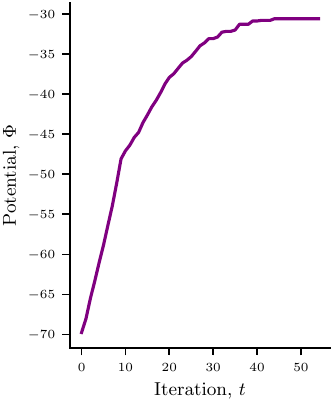}
    }
    \subfloat[Potential ($\beta=10^5$)]{
        \includegraphics[width=0.3\linewidth]{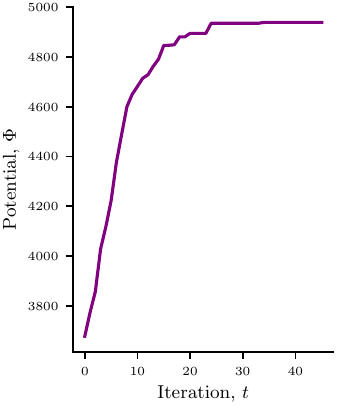}
    }
    \caption{{\bf Potential of a test-time compute game using majority-voting.}
     The figure shows, for different levels of user rationality $\beta$, the evolution of the potential $\Phi$ (see Eq.~\ref{eq:potential}) in the test-time compute games in Figure~\ref{fig:market-dynamics-GPQA-unreasoning-maj} where $N=9$ providers sequentially select a test-time compute level that increases their utility.
     }
    \label{fig:potential-GPQA-unreasoning-maj}
\end{figure}

\begin{figure}
    \centering
    \includegraphics[width=0.7\linewidth]{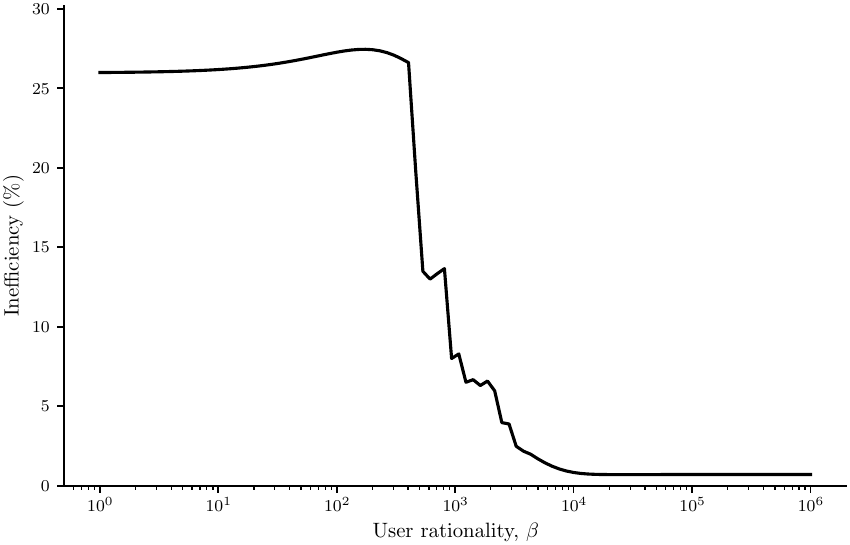}
    \caption{{\bf Inefficiency of a test-time compute game using majority-voting.}
     The figure shows, as a function of users' rationality $\beta$, the inefficiency ($\text{PoA}(\Gcal)-1$) of the test-time compute game in Figure~\ref{fig:market-dynamics-GPQA-unreasoning-maj}, where $N=9$ providers sequentially select a test-time compute level that increases their utility.
     }
    \label{fig:poa-beta-GPQA-unreasoning-maj}
\end{figure}

\begin{figure}[h!]
    \centering
    \subfloat{
        \centering
        \includegraphics[width=0.7\linewidth]{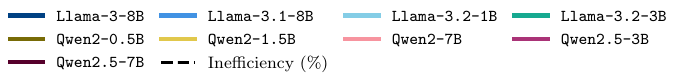}
    }

    \subfloat{
        \includegraphics[width=0.3\linewidth]{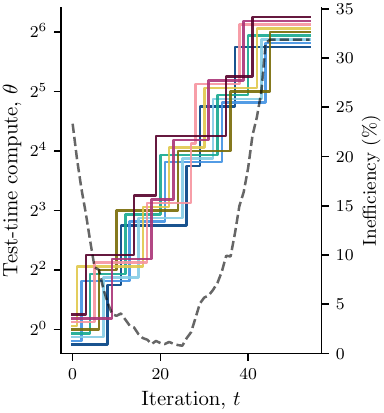}
    }
    \subfloat{
        \includegraphics[width=0.3\linewidth]{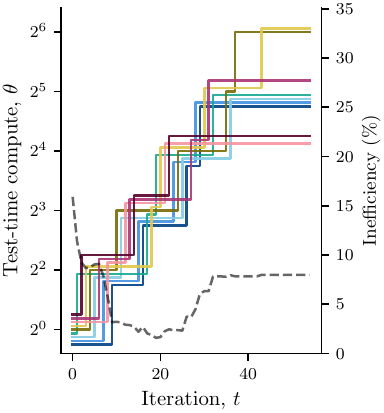}
    }
    \subfloat{
        \includegraphics[width=0.3\linewidth]{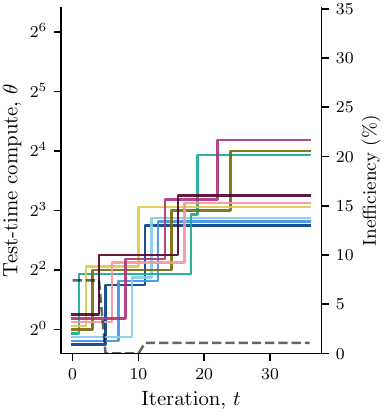}
    }\\
    \setcounter{subfigure}{0}
    \hspace{-5mm}
    \subfloat[$\beta=2\cdot10^2$]{
        \includegraphics[width=0.3\linewidth]{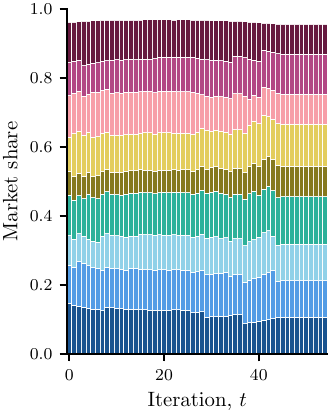}
        }
    \subfloat[$\beta=10^3$]{
        \includegraphics[width=0.3\linewidth]{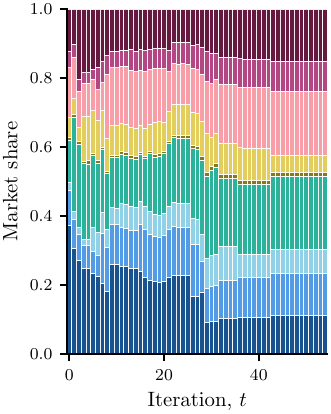}
    }
    \subfloat[$\beta=10^5$]{
        \includegraphics[width=0.3\linewidth]{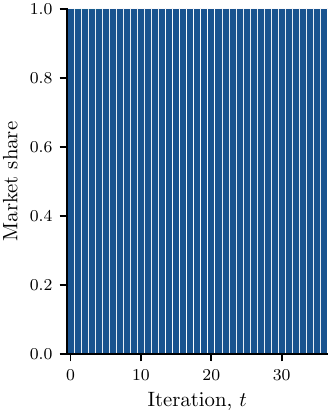}
    }
    \caption{\xhdr{Better-response dynamics of a test-time compute game using best-of-n} 
    The upper panels show, for varying levels of user rationality $\beta$, better-response dynamics when providers serving models from the \texttt{Llama} and \texttt{Qwen} families use best-of-n to serve queries from the GPQA dataset.
    The solid colored lines (left $y$-axis) represent the test-time compute $\theta$ selected by each provider at each iteration, corresponding to the number of samples used for best-of-n. 
    The dashed black line (right $y$-axis) tracks the Market Inefficiency, defined as $(\text{PoA} - 1) \times 100$ (see Eq.~\ref{eq:PoA}).
    The lower panels show the evolution of the market share at each time step of the better-response dynamics.
    The initial compute level $\thetab^1$ is taken as the lowest possible compute.
    We apply a small vertical jitter to the strategy lines to distinguish overlapping providers and take a fixed profit margin of $25\%$.
}
    \label{fig:market-dynamics-GPQA-unreasoning-bon}
\end{figure}

\begin{figure}[h!]
    \centering
    \subfloat[Potential ($\beta=2\cdot10^2$)]{
        \includegraphics[width=0.3\linewidth]{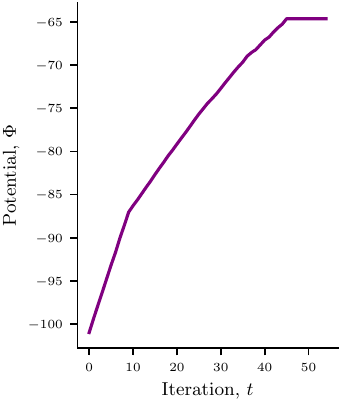}
    }
    \subfloat[Potential ($\beta=10^3$)]{
        \includegraphics[width=0.3\linewidth]{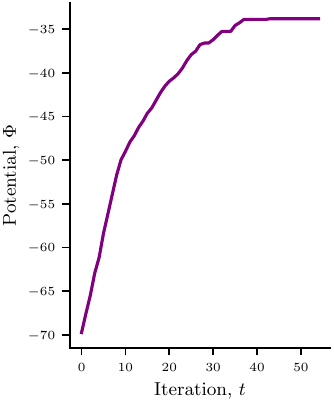}
    }
    \subfloat[Potential ($\beta=10^5$)]{
        \includegraphics[width=0.3\linewidth]{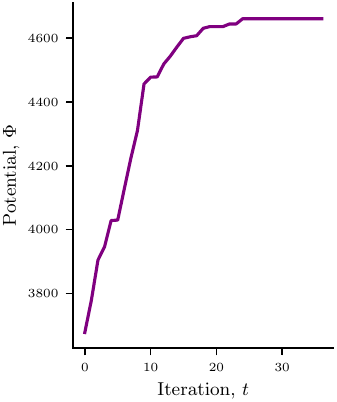}
    }
    \caption{{\bf Potential of a test-time compute game using best-of-n.}
     The figure shows, for different levels of user rationality $\beta$, the evolution of the potential $\Phi$ (see Eq.~\ref{eq:potential}) in the test-time compute games in Figure~\ref{fig:market-dynamics-GPQA-unreasoning-bon} where $N=9$ providers sequentially select a test-time compute level that increases their utility.
     }
    \label{fig:potential-GPQA-unreasoning-bon}
\end{figure}

\begin{figure}
    \centering
    \includegraphics[width=0.8\linewidth]{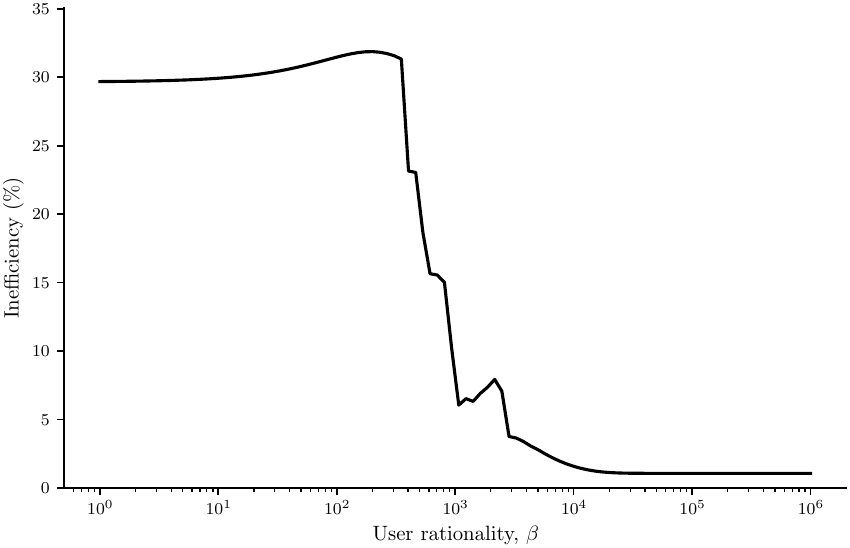}
    \caption{{\bf Inefficiency of a test-time compute game using best-of-n.}
     The figure shows, as a function of users' rationality $\beta$, the inefficiency ($\text{PoA}(\Gcal)-1$) of the test-time compute game in Figure~\ref{fig:market-dynamics-GPQA-unreasoning-bon}, where $N=9$ providers sequentially select a test-time compute level that increases their utility.
     }
    \label{fig:poa-beta-GPQA-unreasoning-bon}
\end{figure}

\begin{figure}[h!]
    \centering
    \subfloat{
        \centering
        \includegraphics[width=0.9\linewidth]{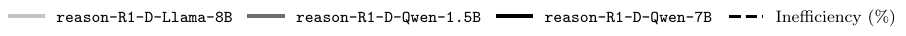}
    }

    \subfloat{
        \includegraphics[width=0.3\linewidth]{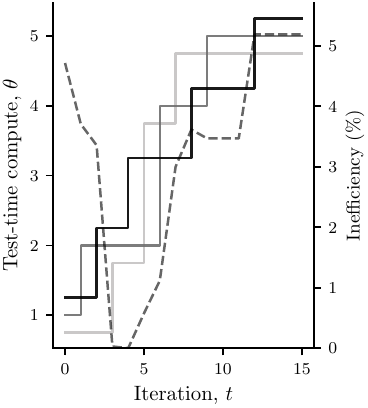}
    }
    \subfloat{
        \includegraphics[width=0.3\linewidth]{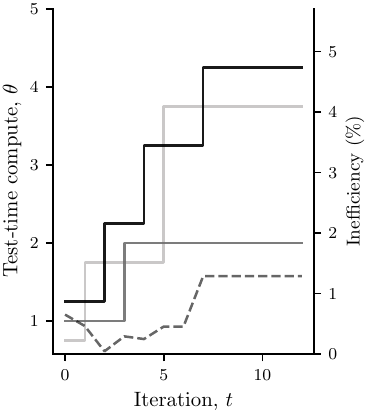}
    }
    \subfloat{
        \includegraphics[width=0.3\linewidth]{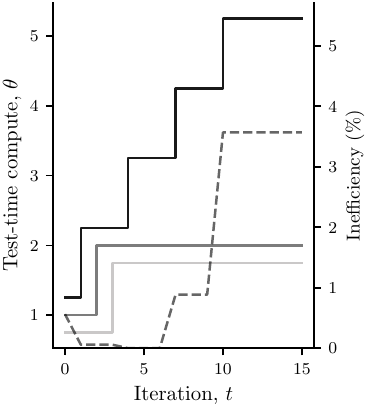}
    }\\
    \setcounter{subfigure}{0}
    \hspace{-5mm}
    \subfloat[$\beta=2\cdot10^2$]{
        \includegraphics[width=0.3\linewidth]{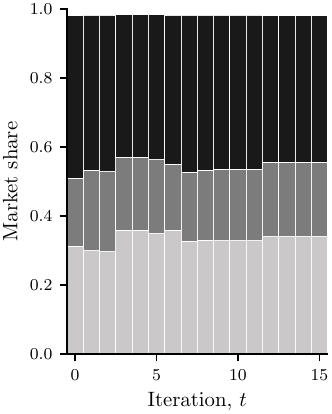}
        }
    \subfloat[$\beta=10^3$]{
        \includegraphics[width=0.3\linewidth]{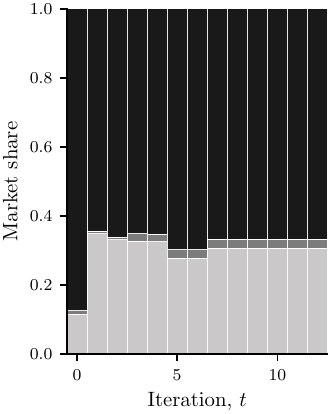}
    }
    \subfloat[$\beta=10^5$]{
        \includegraphics[width=0.3\linewidth]{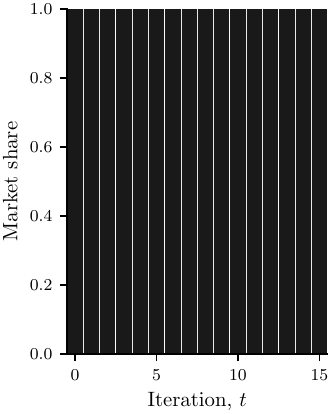}
    }
    \caption{\xhdr{Better-response dynamics of a test-time compute game using CoT} 
    The upper panels show, for varying levels of user rationality $\beta$, better-response dynamics when providers serving models from the \texttt{Llama} and \texttt{Qwen} families distilled from \texttt{DeepSeek-R1} use chain-of-thought to serve queries from the GPQA dataset.
    The solid colored lines (left $y$-axis) represent the test-time compute $\theta$ selected by each provider at each iteration, corresponding to the reasoning effort used. 
    The dashed black line (right $y$-axis) tracks the Market Inefficiency, defined as $(\text{PoA} - 1) \times 100$ (see Eq.~\ref{eq:PoA}).
    The lower panels show the evolution of the market share at each time step of the better-response dynamics.
    The initial compute level $\thetab^1$ is taken as the lowest possible compute.
    We apply a small vertical jitter to the strategy lines to distinguish overlapping providers and take a fixed profit margin of $25\%$.
}
    \label{fig:market-dynamics-GPQA-reasoning}
\end{figure}

\begin{figure}[h!]
    \centering
    \subfloat[Potential ($\beta=2\cdot10^2$)]{
        \includegraphics[width=0.3\linewidth]{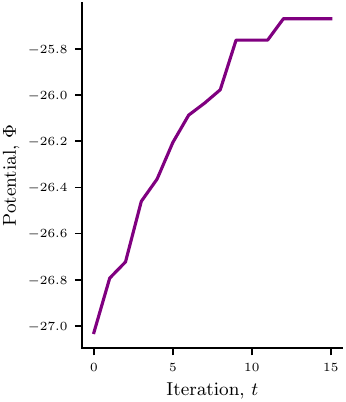}
    }
    \subfloat[Potential ($\beta=10^3$)]{
        \includegraphics[width=0.3\linewidth]{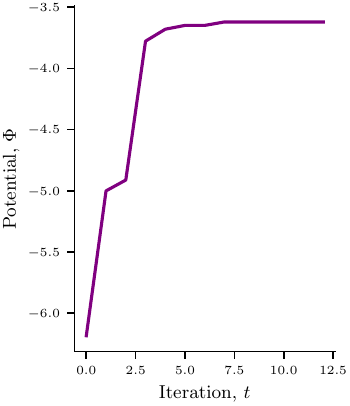}
    }
    \subfloat[Potential ($\beta=10^5$)]{
        \includegraphics[width=0.3\linewidth]{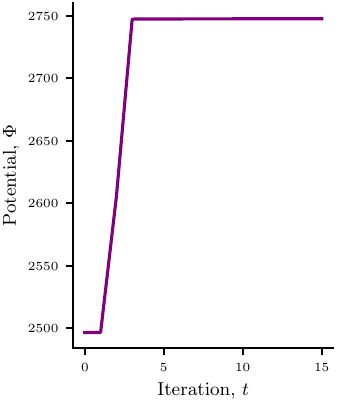}
    }
    
    \caption{{\bf Potential of a test-time compute game using CoT.}
     The figure shows, for different levels of user rationality $\beta$, the evolution of the potential $\Phi$ (see Eq.~\ref{eq:potential}) in the test-time compute games in Figure~\ref{fig:market-dynamics-GPQA-reasoning} where $N=3$ providers sequentially select a test-time compute level that increases their utility.
     }
    \label{fig:potential-GPQA-reasoning}
\end{figure}

\begin{figure}
    \centering
    \includegraphics[width=0.8\linewidth]{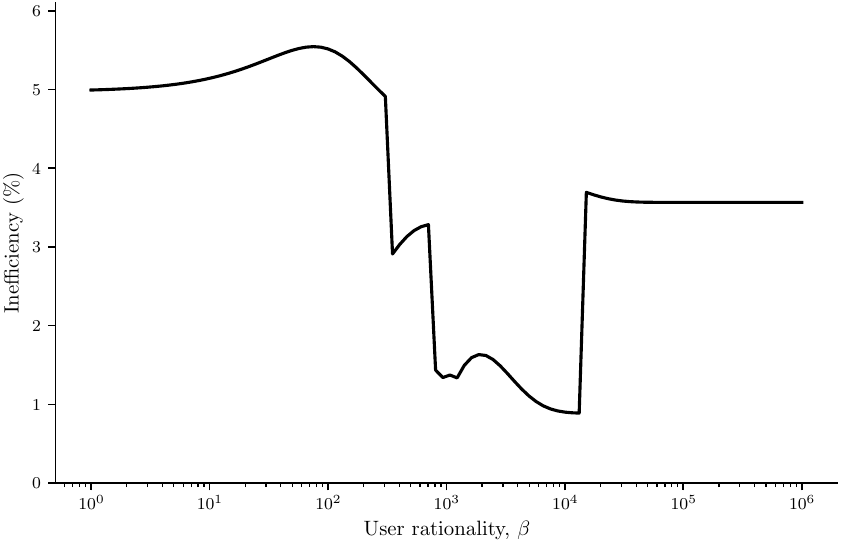}
    \caption{{\bf Inefficiency of a test-time compute game using CoT.}
     The figure shows, as a function of users' rationality $\beta$, the inefficiency ($\text{PoA}(\Gcal)-1$) of the test-time compute game in Figure~\ref{fig:market-dynamics-GPQA-reasoning}, where $N=3$ providers sequentially select a test-time compute level that increases their utility.
     }
    \label{fig:poa-beta-GPQA-reasoning}
\end{figure}

\clearpage
\newpage


\subsubsection{Test-Time Compute Equilibria on TruthfulQA}\label{app:dynamics-results-TruthfulQA}

\begin{figure}[h!]
    \centering
    \subfloat{
        \centering
        \includegraphics[width=0.6\linewidth]{figures/var/legend_all.pdf}
    }\\
    \setcounter{subfigure}{0}
    \centering
    \begin{subfigure}{0.32\textwidth}
        \centering
        \includegraphics[width=\linewidth]{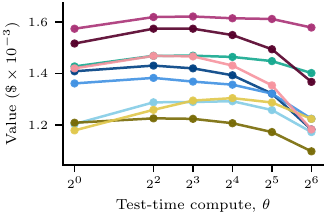}
        \caption{Best-of-n}
        \label{fig:TruthfulQA:v_curves:bon}
    \end{subfigure}
    
    \caption{{\bf User values offered by providers in a test-time compute game on TruthfulQA.}
     The figure shows the user values $V_i(\theta)$ offered by providers in a test-time compute game $\Gcal$, as a function of their test-time compute $\theta$.
     The panel corresponds to games with $N=9$ providers serving non-reasoning models from the \texttt{Llama} and \texttt{Qwen} families, where providers use best-of-n across $\theta$ samples.
     Providers serve queries $Q$ from the \texttt{TruthfulQA} dataset, we set $\beta = 1000$ and consider that each (average) point of quality (as reported in Figure~\ref{fig:truhthfullqa_all_unreasoning}) offers a value of $\$0.008$ to the users.
    }
    \label{fig:TruthfulQA:v_curves_comparison}
\end{figure}

\begin{figure}[h!]
    \centering
    \subfloat{
        \centering
        \includegraphics[width=0.7\linewidth]{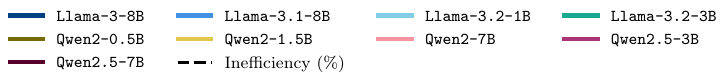}
    }

    \subfloat{
        \includegraphics[width=0.3\linewidth]{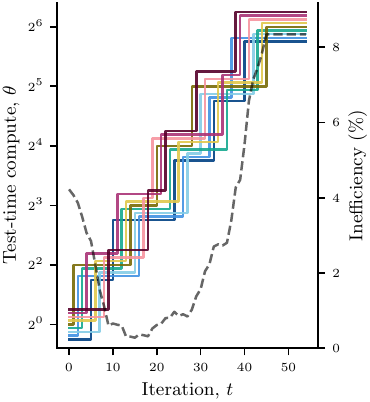}
    }
    \subfloat{
        \includegraphics[width=0.3\linewidth]{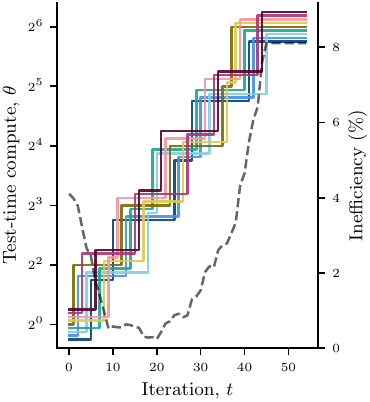}
    }
    \subfloat{
        \includegraphics[width=0.3\linewidth]{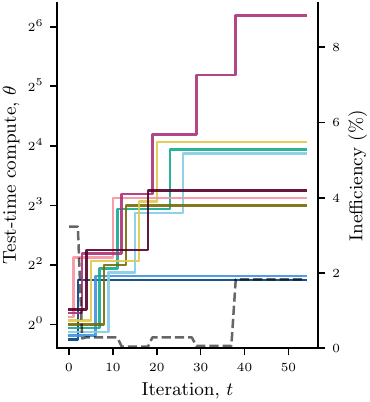}
    }\\
    \setcounter{subfigure}{0}
    \hspace{-5mm}
    \subfloat[$\beta=2\cdot10^2$]{
        \includegraphics[width=0.3\linewidth]{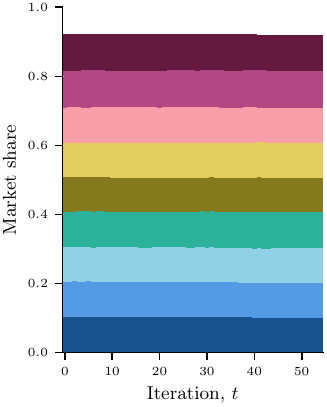}
        }
    \subfloat[$\beta=10^3$]{
        \includegraphics[width=0.3\linewidth]{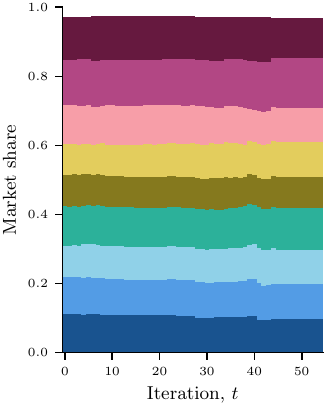}
    }
    \subfloat[$\beta=10^5$]{
        \includegraphics[width=0.3\linewidth]{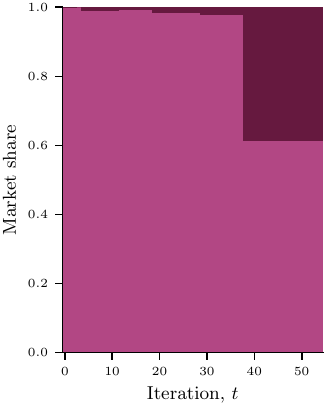}
    }
    \caption{\xhdr{Better-response dynamics of a test-time compute game using best-of-n} 
    The upper panels show, for varying levels of user rationality $\beta$, better-response dynamics when providers serving models from the \texttt{Llama} and \texttt{Qwen} families use best-of-n to serve queries from the TruthfulQA dataset.
    The solid colored lines (left $y$-axis) represent the test-time compute $\theta$ selected by each provider at each iteration, corresponding to the number of samples used for best-of-n. 
    The dashed black line (right $y$-axis) tracks the Market Inefficiency, defined as $(\text{PoA} - 1) \times 100$ (see Eq.~\ref{eq:PoA}).
    The lower panels show the evolution of the market share at each time step of the better-response dynamics.
    The initial compute level $\thetab^1$ is taken as the lowest possible compute.
    We apply a small vertical jitter to the strategy lines to distinguish overlapping providers and take a fixed profit margin of $25\%$.
}
    \label{fig:market-dynamics-TRUTHFULQA-unreasoning-bon}
\end{figure}

\begin{figure}[h!]
    \centering
    \subfloat[Potential ($\beta=2\cdot10^2$)]{
        \includegraphics[width=0.3\linewidth]{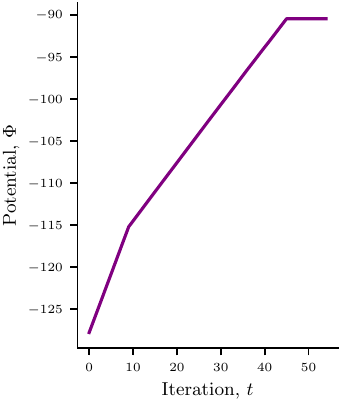}
    }
    \subfloat[Potential ($\beta=10^3$)]{
        \includegraphics[width=0.3\linewidth]{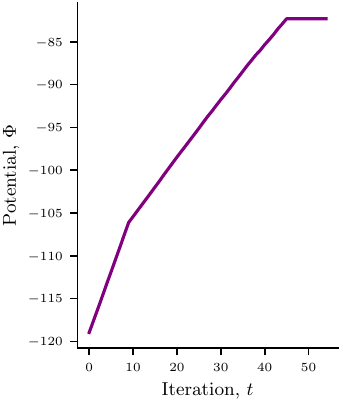}
    }
    \subfloat[Potential ($\beta=10^5$)]{
        \includegraphics[width=0.3\linewidth]{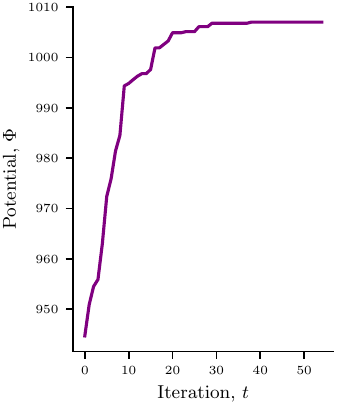}
    }
    \caption{{\bf Potential of a test-time compute game using best-of-n.}
     The figure shows, for different levels of user rationality $\beta$, the evolution of the potential $\Phi$ (see Eq.~\ref{eq:potential}) in the test-time compute games in Figure~\ref{fig:market-dynamics-TRUTHFULQA-unreasoning-bon} where $N=9$ providers sequentially select a test-time compute level that increases their utility.
     }
    \label{fig:potential-TRUTHFULQA-unreasoning-bon}
\end{figure}

\begin{figure}
    \centering
    \includegraphics[width=0.8\linewidth]{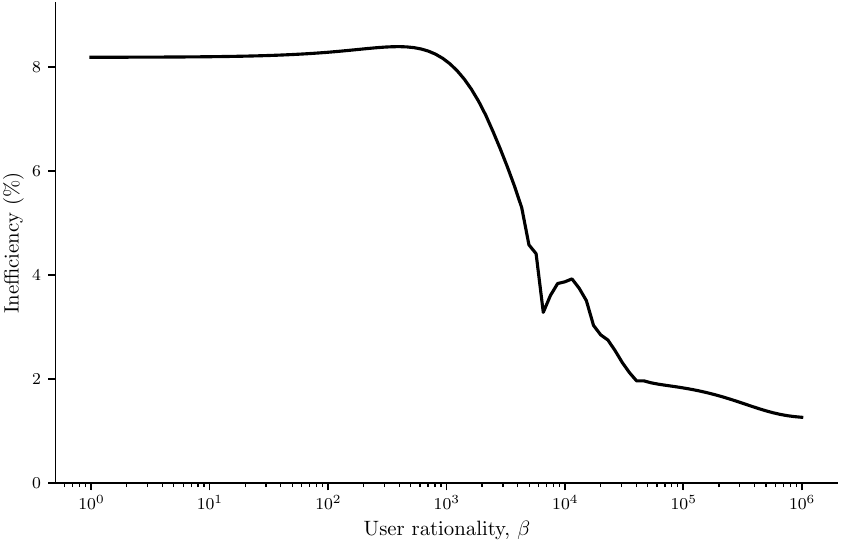}
    \caption{{\bf Inefficiency of a test-time compute game using best-of-n.}
     The figure shows, as a function of users' rationality $\beta$, the inefficiency ($\text{PoA}(\Gcal)-1$) of the test-time compute game in Figure~\ref{fig:market-dynamics-TRUTHFULQA-unreasoning-bon}, where $N=9$ providers sequentially select a test-time compute level that increases their utility.
     }
    \label{fig:poa-beta-TRUTHFULQA-unreasoning-bon}
\end{figure}

\clearpage

\subsection{Results of the auction mechanism}\label{app:auction-tables}

\subsubsection{Results of the auction mechanism on GSM8K}
\begin{table}[H]
\centering
\captionsetup{font=small}
\setlength{\tabcolsep}{4pt}
\renewcommand{\arraystretch}{0.5}
\caption{\textbf{Comparison of the equilibrium between $\Gcal$ and the auction $\widetilde{\Gcal}$ on GSM8K using majority voting.}
The table shows, for $\widetilde{\Gcal}$, the value received by the user (once the auction is conducted), the price they pay (according to Eq.~\ref{eq:second-price-payment}), the provider's utility, the social welfare, and the provider (LLM) that wins the auction. For $\Gcal$, the table shows the providers' average utility, the users' average value and price, all weighted by their equilibrium market shares, together with the social welfare. We vary the provider margin---used to derive the per-token cost from the per-token price---and the proportionality parameter $\alpha$, which linearly converts a percentage point of accuracy into a monetary value (see Appendix~\ref{app:experimental-details}). All quantities (except the compute) have units $(\$ \times 10^{-3})$.
}
\vspace{0.5cm}
\label{tab:auction-gsm8k-maj-all}

\begin{subtable}{\textwidth}
\centering
\begin{tabular}{lcccc}
\toprule
\multicolumn{1}{l}{
\smash{
\begin{tabular}[t]{@{}l@{}}
\small margin $=0.25$ \\
\small $\alpha=\$0.008$
\end{tabular}
}
}
 & \multicolumn{3}{c}{Game $\Gcal$} & Auction $\widetilde{\Gcal}$ \\
\cmidrule(lr){2-4} \cmidrule(lr){5-5}
{} & $\beta=2\cdot10^2$ & $\beta=10^3$ & $\beta=10^5$ & \texttt{Qwen2.5-7B} \\
\midrule
Compute ($\theta$)   & 32   & 16   & 4    & 2     \\
User value           & 4.5  & 5.8  & 7.1  & 7.1   \\
Price                & 1.6  & 1.2  & 0.34 & 0.30  \\
Provider(s') utility & 0.32 & 0.25 & 0.07 & 0.186 \\
Social welfare       & 4.8  & 6.0  & 7.2  & 7.3   \\
\bottomrule
\end{tabular}
\end{subtable}

\vspace{1em}

\begin{subtable}{\textwidth}
\centering
\begin{tabular}{lcccc}
\toprule
\multicolumn{1}{l}{
\smash{
\begin{tabular}[t]{@{}l@{}}
\small margin $=0.15$ \\
\small $\alpha=\$0.008$
\end{tabular}
}
}
 & \multicolumn{3}{c}{Game $\Gcal$} & Auction $\widetilde{\Gcal}$ \\
\cmidrule(lr){2-4} \cmidrule(lr){5-5}
{} & $\beta=2\cdot10^2$ & $\beta=10^3$ & $\beta=10^5$ & \texttt{Qwen2.5-7B} \\
\midrule
Compute ($\theta$)   & 32   & 16   & 4    & 2    \\
User value           & 4.57 & 5.80 & 7.11 & 7.09 \\
Price                & 1.62 & 1.24 & 0.34 & 0.31 \\
Provider(s') utility & 0.21 & 0.16 & 0.04 & 0.16 \\
Social welfare       & 4.78 & 5.96 & 7.15 & 7.25 \\
\bottomrule
\end{tabular}
\end{subtable}

\vspace{1em}

\begin{subtable}{\textwidth}
\centering
\begin{tabular}{lcccc}
\toprule
\multicolumn{1}{l}{
\smash{
\begin{tabular}[t]{@{}l@{}}
\small margin $=0.05$ \\
\small $\alpha=\$0.008$
\end{tabular}
}
}
 & \multicolumn{3}{c}{Game $\Gcal$} & Auction $\widetilde{\Gcal}$ \\
\cmidrule(lr){2-4} \cmidrule(lr){5-5}
{} & $\beta=2\cdot10^2$ & $\beta=10^3$ & $\beta=10^5$ & \texttt{Qwen2.5-7B} \\
\midrule
Compute ($\theta$)   & 32   & 16   & 4    & 2     \\
User value           & 4.57 & 5.80 & 7.11 & 7.08  \\
Price                & 1.62 & 1.24 & 0.34 & 0.32  \\
Provider(s') utility & 0.08 & 0.06 & 0.02 & 0.16  \\
Social welfare       & 4.64 & 5.86 & 7.13 & 7.23  \\
\bottomrule
\end{tabular}
\end{subtable}

\vspace{1em}

\begin{subtable}{\textwidth}
\centering
\begin{tabular}{lcccc}
\toprule
\multicolumn{1}{l}{
\smash{
\begin{tabular}[t]{@{}l@{}}
\small margin $=0.25$ \\
\small $\alpha=\$0.016$
\end{tabular}
}
}
 & \multicolumn{3}{c}{Game $\Gcal$} & Auction $\widetilde{\Gcal}$ \\
\cmidrule(lr){2-4} \cmidrule(lr){5-5}
{} & $\beta=2\cdot10^2$ & $\beta=10^3$ & $\beta=10^5$ & \texttt{Qwen2.5-7B} \\
\midrule
Compute ($\theta$)   & 32   & 16   & 4    & 2     \\
User value           & 11.41 & 13.15 & 14.56 & 14.33 \\
Price                & 1.75  & 1.40  & 0.34  & 0.46  \\
Provider(s') utility & 0.35  & 0.28  & 0.07  & 0.32  \\
Social welfare       & 11.75 & 13.43 & 14.63 & 14.66 \\
\bottomrule
\end{tabular}
\end{subtable}

\vspace{1em}

\begin{subtable}{\textwidth}
\centering
\begin{tabular}{lcccc}
\toprule
\multicolumn{1}{l}{
\smash{
\begin{tabular}[t]{@{}l@{}}
\small margin $=0.25$ \\
\small $\alpha=\$0.004$
\end{tabular}
}
}
 & \multicolumn{3}{c}{Game $\Gcal$} & Auction $\widetilde{\Gcal}$ \\
\cmidrule(lr){2-4} \cmidrule(lr){5-5}
{} & $\beta=2\cdot10^2$ & $\beta=10^3$ & $\beta=10^5$ & \texttt{Qwen2.5-7B} \\
\midrule
Compute ($\theta$)   & 32   & 16   & 2    & 1     \\
User value           & 1.40 & 2.16 & 3.53 & 3.49  \\
Price                & 1.53 & 1.15 & 0.17 & 0.12  \\
Provider(s') utility & 0.31 & 0.23 & 0.03 & 0.08  \\
Social welfare       & 1.71 & 2.39 & 3.56 & 3.57  \\
\bottomrule
\end{tabular}
\end{subtable}

\end{table}

\begin{table}[H]
\centering
\captionsetup{font=small}
\setlength{\tabcolsep}{4pt}
\renewcommand{\arraystretch}{0.5}
\caption{\textbf{Comparison of the equilibrium between $\Gcal$ and the auction $\widetilde{\Gcal}$ on GSM8K using best-of-n.}
The table shows, for $\widetilde{\Gcal}$, the value received by the user (once the auction is conducted), the price they pay (according to Eq.~\ref{eq:second-price-payment}), the provider's utility, the social welfare, and the provider (LLM) that wins the auction. For $\Gcal$, the table shows the providers' average utility, the users' average value and price, all weighted by their equilibrium market shares, together with the social welfare. We vary the provider margin---used to derive the per-token cost from the per-token price---and the proportionality parameter $\alpha$, which linearly converts a percentage point of accuracy into a monetary value (see Appendix~\ref{app:experimental-details}). All quantities (except the compute) have units $(\$ \times 10^{-3})$.
}
\vspace{0.5cm}
\label{tab:auction-gsm8k-bon}

\begin{subtable}{\textwidth}
\centering
\begin{tabular}{lcccc}
\toprule
\multicolumn{1}{l}{
\smash{
\begin{tabular}[t]{@{}l@{}}
\small margin $=0.25$ \\
\small $\alpha=\$0.008$
\end{tabular}
}
}
 & \multicolumn{3}{c}{Game $\Gcal$} & Auction $\widetilde{\Gcal}$ \\
\cmidrule(lr){2-4} \cmidrule(lr){5-5}
{} & $\beta=2\cdot10^2$ & $\beta=10^3$ & $\beta=10^5$ & \texttt{Qwen2.5-7B} \\
\midrule
Compute ($\theta$)   &  32  & 16 & 4 & 2    \\
User value           & 4.4 & 5.5 & 7.1 & 7.2 \\
Price                & 1.6 & 1.2 & 0.33 & 0.32    \\
Provider(s') utility & 0.32 & 0.25 & 0.07 & 0.18    \\
Social welfare       & 4.7 & 5.8 & 7.2 & 7.3    \\
\bottomrule
\end{tabular}
\end{subtable}

\vspace{1em}

\begin{subtable}{\textwidth}
\centering
\begin{tabular}{lcccc}
\toprule
\multicolumn{1}{l}{
\smash{
\begin{tabular}[t]{@{}l@{}}
\small margin $=0.15$ \\
\small $\alpha=\$0.008$
\end{tabular}
}
}
 & \multicolumn{3}{c}{Game $\Gcal$} & Auction $\widetilde{\Gcal}$ \\
\cmidrule(lr){2-4} \cmidrule(lr){5-5}
{} & $\beta=2\cdot10^2$ & $\beta=10^3$ & $\beta=10^5$ & \texttt{Qwen2.5-7B} \\
\midrule
Compute ($\theta$)   & 32   & 16   & 4    & 2     \\
User value           & 4.41 & 5.58 & 7.17 & 7.17  \\
Price                & 1.60 & 1.26 & 0.33 & 0.33  \\
Provider(s') utility & 0.21 & 0.16 & 0.04 & 0.18  \\
Social welfare       & 4.62 & 5.75 & 7.22 & 7.35  \\
\bottomrule
\end{tabular}
\end{subtable}

\vspace{1em}

\begin{subtable}{\textwidth}
\centering
\begin{tabular}{lcccc}
\toprule
\multicolumn{1}{l}{
\smash{
\begin{tabular}[t]{@{}l@{}}
\small margin $=0.05$ \\
\small $\alpha=\$0.008$
\end{tabular}
}
}
 & \multicolumn{3}{c}{Game $\Gcal$} & Auction $\widetilde{\Gcal}$ \\
\cmidrule(lr){2-4} \cmidrule(lr){5-5}
{} & $\beta=2\cdot10^2$ & $\beta=10^3$ & $\beta=10^5$ & \texttt{Qwen2.5-7B} \\
\midrule
Compute ($\theta$)   & 32   & 16   & 4    & 2     \\
User value           & 4.41 & 5.58 & 7.17 & 7.15  \\
Price                & 1.60 & 1.26 & 0.33 & 0.34  \\
Provider(s') utility & 0.08 & 0.06 & 0.02 & 0.18  \\
Social welfare       & 4.49 & 5.64 & 7.19 & 7.33  \\
\bottomrule
\end{tabular}
\end{subtable}

\vspace{1em}

\begin{subtable}{\textwidth}
\centering
\begin{tabular}{lcccc}
\toprule
\multicolumn{1}{l}{
\smash{
\begin{tabular}[t]{@{}l@{}}
\small margin $=0.25$ \\
\small $\alpha=\$0.016$
\end{tabular}
}
}
 & \multicolumn{3}{c}{Game $\Gcal$} & Auction $\widetilde{\Gcal}$ \\
\cmidrule(lr){2-4} \cmidrule(lr){5-5}
{} & $\beta=2\cdot10^2$ & $\beta=10^3$ & $\beta=10^5$ & \texttt{Qwen2.5-7B} \\
\midrule
Compute ($\theta$)   & 32   & 16   & 4    & 2     \\
User value           & 11.05 & 13.15 & 14.69 & 14.48 \\
Price                & 1.71  & 1.33  & 0.34  & 0.51  \\
Provider(s') utility & 0.34  & 0.27  & 0.07  & 0.38  \\
Social welfare       & 11.40 & 13.41 & 14.76 & 14.86 \\
\bottomrule
\end{tabular}
\end{subtable}

\vspace{1em}

\begin{subtable}{\textwidth}
\centering
\begin{tabular}{lcccc}
\toprule
\multicolumn{1}{l}{
\smash{
\begin{tabular}[t]{@{}l@{}}
\small margin $=0.25$ \\
\small $\alpha=\$0.004$
\end{tabular}
}
}
 & \multicolumn{3}{c}{Game $\Gcal$} & Auction $\widetilde{\Gcal}$ \\
\cmidrule(lr){2-4} \cmidrule(lr){5-5}
{} & $\beta=2\cdot10^2$ & $\beta=10^3$ & $\beta=10^5$ & \texttt{Qwen2.5-7B} \\
\midrule
Compute ($\theta$)   & 32   & 32   & 2    & 1     \\
User value           & 1.34 & 2.02 & 3.58 & 3.55  \\
Price                & 1.52 & 1.18 & 0.17 & 0.11  \\
Provider(s') utility & 0.30 & 0.24 & 0.03 & 0.08  \\
Social welfare       & 1.64 & 2.25 & 3.61 & 3.63  \\
\bottomrule
\end{tabular}
\end{subtable}

\end{table}


\begin{table}[H]
\centering
\captionsetup{font=small}
\setlength{\tabcolsep}{4pt}
\renewcommand{\arraystretch}{0.5}
\caption{\textbf{Comparison of the equilibrium between $\Gcal$ and the auction $\widetilde{\Gcal}$ on GSM8K using chain-of-thought.}
The table shows, for $\widetilde{\Gcal}$, the value received by the user (once the auction is conducted), the price they pay (according to Eq.~\ref{eq:second-price-payment}), the provider's utility, the social welfare, and the provider (LLM) that wins the auction. For $\Gcal$, the table shows the providers' average utility, the users' average value and price, all weighted by their equilibrium market shares, together with the social welfare. We vary the provider margin---used to derive the per-token cost from the per-token price---and the proportionality parameter $\alpha$, which linearly converts a percentage point of accuracy into a monetary value (see Appendix~\ref{app:experimental-details}). All quantities (except the compute) have units $(\$ \times 10^{-3})$.
}
\label{tab:auction-gsm8k-cot}
\vspace{0.5cm}
\begin{subtable}{\textwidth}
\centering
\begin{tabular}{lcccc}
\toprule
\multicolumn{1}{l}{
\smash{
\begin{tabular}[t]{@{}l@{}}
\small margin $=0.25$ \\
\small $\alpha=\$0.008$
\end{tabular}
}
}
 & \multicolumn{3}{c}{Game $\Gcal$} & Auction $\widetilde{\Gcal}$ \\
\cmidrule(lr){2-4} \cmidrule(lr){5-5}
{} & $\beta=2\cdot10^2$ & $\beta=10^3$ & $\beta=10^5$ & \texttt{R1-D-Qwen-7B} \\
\midrule
Compute ($\theta$)   &  4  & 4 & 4 & 4    \\
User value           & 6.5 & 7.1 & 7.4 & 7.3 \\
Price                & 0.16 & 0.17 & 0.23 & 0.30    \\
Provider(s') utility & 0.03 & 0.04 & 0.05 & 0.12    \\
Social welfare       & 6.5 & 7.1 & 7.4 & 7.4    \\
\bottomrule
\end{tabular}
\end{subtable}

\vspace{1em}

\begin{subtable}{\textwidth}
\centering
\begin{tabular}{lcccc}
\toprule
\multicolumn{1}{l}{
\smash{
\begin{tabular}[t]{@{}l@{}}
\small margin $=0.15$ \\
\small $\alpha=\$0.008$
\end{tabular}
}
}
 & \multicolumn{3}{c}{Game $\Gcal$} & Auction $\widetilde{\Gcal}$ \\
\cmidrule(lr){2-4} \cmidrule(lr){5-5}
{} & $\beta=2\cdot10^2$ & $\beta=10^3$ & $\beta=10^5$ & \texttt{R1-D-Qwen-7B} \\
\midrule
Compute ($\theta$)   & 4    & 4    & 4    & 4     \\
User value           & 6.51 & 7.16 & 7.40 & 7.33  \\
Price                & 0.16 & 0.18 & 0.23 & 0.30  \\
Provider(s') utility & 0.03 & 0.04 & 0.05 & 0.12  \\
Social welfare       & 6.54 & 7.19 & 7.45 & 7.45  \\
\bottomrule
\end{tabular}
\end{subtable}

\vspace{1em}

\begin{subtable}{\textwidth}
\centering
\begin{tabular}{lcccc}
\toprule
\multicolumn{1}{l}{
\smash{
\begin{tabular}[t]{@{}l@{}}
\small margin $=0.05$ \\
\small $\alpha=\$0.008$
\end{tabular}
}
}
 & \multicolumn{3}{c}{Game $\Gcal$} & Auction $\widetilde{\Gcal}$ \\
\cmidrule(lr){2-4} \cmidrule(lr){5-5}
{} & $\beta=2\cdot10^2$ & $\beta=10^3$ & $\beta=10^5$ & \texttt{R1-D-Qwen-7B} \\
\midrule
Compute ($\theta$)   & 4    & 4    & 4    & 4     \\
User value           & 6.51 & 7.16 & 7.40 & 7.33  \\
Price                & 0.16 & 0.18 & 0.23 & 0.30  \\
Provider(s') utility & 0.03 & 0.04 & 0.05 & 0.12  \\
Social welfare       & 6.54 & 7.19 & 7.45 & 7.45  \\
\bottomrule
\end{tabular}
\end{subtable}

\vspace{1em}

\begin{subtable}{\textwidth}
\centering
\begin{tabular}{lcccc}
\toprule
\multicolumn{1}{l}{
\smash{
\begin{tabular}[t]{@{}l@{}}
\small margin $=0.25$ \\
\small $\alpha=\$0.016$
\end{tabular}
}
}
 & \multicolumn{3}{c}{Game $\Gcal$} & Auction $\widetilde{\Gcal}$ \\
\cmidrule(lr){2-4} \cmidrule(lr){5-5}
{} & $\beta=2\cdot10^2$ & $\beta=10^3$ & $\beta=10^5$ & \texttt{R1-D-Qwen-7B} \\
\midrule
Compute ($\theta$)   & 4    & 4    & 4    & 4     \\
User value           & 14.06 & 14.72 & 15.03 & 14.75 \\
Price                & 0.17  & 0.20  & 0.23  & 0.52  \\
Provider(s') utility & 0.03  & 0.04  & 0.05  & 0.33  \\
Social welfare       & 14.09 & 14.76 & 15.08 & 15.08 \\
\bottomrule
\end{tabular}
\end{subtable}

\vspace{1em}

\begin{subtable}{\textwidth}
\centering
\begin{tabular}{lcccc}
\toprule
\multicolumn{1}{l}{
\smash{
\begin{tabular}[t]{@{}l@{}}
\small margin $=0.25$ \\
\small $\alpha=\$0.004$
\end{tabular}
}
}
 & \multicolumn{3}{c}{Game $\Gcal$} & Auction $\widetilde{\Gcal}$ \\
\cmidrule(lr){2-4} \cmidrule(lr){5-5}
{} & $\beta=2\cdot10^2$ & $\beta=10^3$ & $\beta=10^5$ & \texttt{R1-D-Qwen-7B} \\
\midrule
Compute ($\theta$)   & 4    & 4    & 3    & 3     \\
User value           & 2.93 & 3.42 & 3.61 & 3.62  \\
Price                & 0.15 & 0.17 & 0.17 & 0.17  \\
Provider(s') utility & 0.03 & 0.03 & 0.03 & 0.03  \\
Social welfare       & 2.96 & 3.45 & 3.65 & 3.65  \\
\bottomrule
\end{tabular}
\end{subtable}

\end{table}


\subsubsection{Results of the auction mechanism on AIME}

\begin{table}[H]
\centering
\captionsetup{font=small}
\setlength{\tabcolsep}{4pt}
\renewcommand{\arraystretch}{0.5}
\caption{\textbf{Comparison of the equilibrium between $\Gcal$ and the auction $\widetilde{\Gcal}$ on AIME using majority voting.}
The table shows, for $\widetilde{\Gcal}$, the value received by the user (once the auction is conducted), the price they pay (according to Eq.~\ref{eq:second-price-payment}), the provider's utility, the social welfare, and the provider (LLM) that wins the auction. For $\Gcal$, the table shows the providers' average utility, the users' average value and price, all weighted by their equilibrium market shares, together with the social welfare. We vary the provider margin---used to derive the per-token cost from the per-token price---and the proportionality parameter $\alpha$, which linearly converts a percentage point of accuracy into a monetary value (see Appendix~\ref{app:experimental-details}). All quantities (except the compute) have units $(\$ \times 10^{-3})$.
}
\label{tab:auction-aime-maj-all}
\vspace{0.5cm}
\begin{subtable}{\textwidth}
\centering
\begin{tabular}{lcccc}
\toprule
\multicolumn{1}{l}{
\smash{
\begin{tabular}[t]{@{}l@{}}
\small margin $=0.25$ \\
\small $\alpha=\$0.05$
\end{tabular}
}
}
 & \multicolumn{3}{c}{Game $\Gcal$} & Auction $\widetilde{\Gcal}$ \\
\cmidrule(lr){2-4} \cmidrule(lr){5-5}
{} & $\beta=2\cdot10^2$ & $\beta=10^3$ & $\beta=10^5$ & \texttt{Llama-3.2-3B} \\
\midrule
Compute ($\theta$)   & 32   & 32   & 32   & 8     \\
User value           & 4.24 & 7.25 & 8.08 & 7.63  \\
Price                & 3.09 & 2.08 & 2.40 & 2.11  \\
Provider(s') utility & 0.62 & 0.42 & 0.48 & 1.63  \\
Social welfare       & 4.85 & 7.66 & 8.56 & 9.26  \\
\bottomrule
\end{tabular}
\end{subtable}

\vspace{1em}

\begin{subtable}{\textwidth}
\centering
\begin{tabular}{lcccc}
\toprule
\multicolumn{1}{l}{
\smash{
\begin{tabular}[t]{@{}l@{}}
\small margin $=0.15$ \\
\small $\alpha=\$0.05$
\end{tabular}
}
}
 & \multicolumn{3}{c}{Game $\Gcal$} & Auction $\widetilde{\Gcal}$ \\
\cmidrule(lr){2-4} \cmidrule(lr){5-5}
{} & $\beta=2\cdot10^2$ & $\beta=10^3$ & $\beta=10^5$ & \texttt{Llama-3.2-3B} \\
\midrule
Compute ($\theta$)   & 32   & 32   & 32   & 8     \\
User value           & 4.24 & 7.25 & 8.08 & 7.59  \\
Price                & 3.09 & 2.08 & 2.40 & 2.15  \\
Provider(s') utility & 0.40 & 0.27 & 0.31 & 1.63  \\
Social welfare       & 4.64 & 7.52 & 8.39 & 9.22  \\
\bottomrule
\end{tabular}
\end{subtable}

\vspace{1em}

\begin{subtable}{\textwidth}
\centering
\begin{tabular}{lcccc}
\toprule
\multicolumn{1}{l}{
\smash{
\begin{tabular}[t]{@{}l@{}}
\small margin $=0.05$ \\
\small $\alpha=\$0.05$
\end{tabular}
}
}
 & \multicolumn{3}{c}{Game $\Gcal$} & Auction $\widetilde{\Gcal}$ \\
\cmidrule(lr){2-4} \cmidrule(lr){5-5}
{} & $\beta=2\cdot10^2$ & $\beta=10^3$ & $\beta=10^5$ & \texttt{Llama-3.2-3B} \\
\midrule
Compute ($\theta$)   & 32   & 32   & 32   & 8     \\
User value           & 4.24 & 7.25 & 8.08 & 7.55  \\
Price                & 3.09 & 2.08 & 2.40 & 2.19  \\
Provider(s') utility & 0.15 & 0.10 & 0.11 & 1.62  \\
Social welfare       & 4.38 & 7.35 & 8.19 & 9.17  \\
\bottomrule
\end{tabular}
\end{subtable}

\vspace{1em}

\begin{subtable}{\textwidth}
\centering
\begin{tabular}{lcccc}
\toprule
\multicolumn{1}{l}{
\smash{
\begin{tabular}[t]{@{}l@{}}
\small margin $=0.25$ \\
\small $\alpha=\$0.1$
\end{tabular}
}
}
 & \multicolumn{3}{c}{Game $\Gcal$} & Auction $\widetilde{\Gcal}$ \\
\cmidrule(lr){2-4} \cmidrule(lr){5-5}
{} & $\beta=2\cdot10^2$ & $\beta=10^3$ & $\beta=10^5$ & \texttt{Llama-3.2-3B} \\
\midrule
Compute ($\theta$)   & 32   & 32   & 32   & 16    \\
User value           & 13.63 & 18.36 & 18.56 & 16.06 \\
Price                & 3.15  & 2.37  & 2.40  & 4.31  \\
Provider(s') utility & 0.63  & 0.47  & 0.48  & 3.35  \\
Social welfare       & 14.26 & 18.83 & 19.04 & 19.41 \\
\bottomrule
\end{tabular}
\end{subtable}

\vspace{1em}

\begin{subtable}{\textwidth}
\centering
\begin{tabular}{lcccc}
\toprule
\multicolumn{1}{l}{
\smash{
\begin{tabular}[t]{@{}l@{}}
\small margin $=0.25$ \\
\small $\alpha=\$0.025$
\end{tabular}
}
}
 & \multicolumn{3}{c}{Game $\Gcal$} & Auction $\widetilde{\Gcal}$ \\
\cmidrule(lr){2-4} \cmidrule(lr){5-5}
{} & $\beta=2\cdot10^2$ & $\beta=10^3$ & $\beta=10^5$ & \texttt{Llama-3.2-3B} \\
\midrule
Compute ($\theta$)   & 32   & 16   & 16   & 8     \\
User value           & 0.38 & 2.98 & 3.89 & 3.61  \\
Price                & 2.93 & 1.17 & 1.20 & 1.26  \\
Provider(s') utility & 0.59 & 0.23 & 0.24 & 0.78  \\
Social welfare       & 0.96 & 3.22 & 4.13 & 4.39  \\
\bottomrule
\end{tabular}
\end{subtable}

\end{table}


\begin{table}[H]
\centering
\captionsetup{font=small}
\setlength{\tabcolsep}{4pt}
\renewcommand{\arraystretch}{0.5}
\caption{\textbf{Comparison of the equilibrium between $\Gcal$ and the auction $\widetilde{\Gcal}$ on AIME using best-of-n.}
The table shows, for $\widetilde{\Gcal}$, the value received by the user (once the auction is conducted), the price they pay (according to Eq.~\ref{eq:second-price-payment}), the provider's utility, the social welfare, and the provider (LLM) that wins the auction. For $\Gcal$, the table shows the providers' average utility, the users' average value and price, all weighted by their equilibrium market shares, together with the social welfare. We vary the provider margin---used to derive the per-token cost from the per-token price---and the proportionality parameter $\alpha$, which linearly converts a percentage point of accuracy into a monetary value (see Appendix~\ref{app:experimental-details}). All quantities (except the compute) have units $(\$ \times 10^{-3})$.
}
\label{tab:auction-aime-bon}
\vspace{0.5cm}
\begin{subtable}{\textwidth}
\centering
\begin{tabular}{lcccc}
\toprule
\multicolumn{1}{l}{
\smash{
\begin{tabular}[t]{@{}l@{}}
\small margin $=0.25$ \\
\small $\alpha=\$0.05$
\end{tabular}
}
}
 & \multicolumn{3}{c}{Game $\Gcal$} & Auction $\widetilde{\Gcal}$ \\
\cmidrule(lr){2-4} \cmidrule(lr){5-5}
{} & $\beta=2\cdot10^2$ & $\beta=10^3$ & $\beta=10^5$ & \texttt{Llama-3.2-3B} \\
\midrule
Compute ($\theta$)   & 32   & 32   & 16   & 8     \\
User value           & 3.95 & 6.55 & 8.25 & 7.72  \\
Price                & 3.09 & 1.73 & 1.20 & 1.34  \\
Provider(s') utility & 0.62 & 0.35 & 0.24 & 0.86  \\
Social welfare       & 4.57 & 6.90 & 8.49 & 8.58  \\
\bottomrule
\end{tabular}
\end{subtable}

\vspace{1em}

\begin{subtable}{\textwidth}
\centering
\begin{tabular}{lcccc}
\toprule
\multicolumn{1}{l}{
\smash{
\begin{tabular}[t]{@{}l@{}}
\small margin $=0.15$ \\
\small $\alpha=\$0.05$
\end{tabular}
}
}
 & \multicolumn{3}{c}{Game $\Gcal$} & Auction $\widetilde{\Gcal}$ \\
\cmidrule(lr){2-4} \cmidrule(lr){5-5}
{} & $\beta=2\cdot10^2$ & $\beta=10^3$ & $\beta=10^5$ & \texttt{Llama-3.2-3B} \\
\midrule
Compute ($\theta$)   & 32   & 32   & 16   & 8     \\
User value           & 3.95 & 6.55 & 8.25 & 7.67  \\
Price                & 3.09 & 1.73 & 1.20 & 1.40  \\
Provider(s') utility & 0.40 & 0.23 & 0.16 & 0.87  \\
Social welfare       & 4.36 & 6.78 & 8.41 & 8.54  \\
\bottomrule
\end{tabular}
\end{subtable}

\vspace{1em}

\begin{subtable}{\textwidth}
\centering
\begin{tabular}{lcccc}
\toprule
\multicolumn{1}{l}{
\smash{
\begin{tabular}[t]{@{}l@{}}
\small margin $=0.05$ \\
\small $\alpha=\$0.05$
\end{tabular}
}
}
 & \multicolumn{3}{c}{Game $\Gcal$} & Auction $\widetilde{\Gcal}$ \\
\cmidrule(lr){2-4} \cmidrule(lr){5-5}
{} & $\beta=2\cdot10^2$ & $\beta=10^3$ & $\beta=10^5$ & \texttt{Llama-3.2-3B} \\
\midrule
Compute ($\theta$)   & 32   & 32   & 16   & 8     \\
User value           & 3.95 & 6.55 & 8.25 & 7.62  \\
Price                & 3.09 & 1.73 & 1.20 & 1.44  \\
Provider(s') utility & 0.15 & 0.08 & 0.06 & 0.87  \\
Social welfare       & 4.10 & 6.64 & 8.31 & 8.49  \\
\bottomrule
\end{tabular}
\end{subtable}

\vspace{1em}

\begin{subtable}{\textwidth}
\centering
\begin{tabular}{lcccc}
\toprule
\multicolumn{1}{l}{
\smash{
\begin{tabular}[t]{@{}l@{}}
\small margin $=0.25$ \\
\small $\alpha=\$0.1$
\end{tabular}
}
}
 & \multicolumn{3}{c}{Game $\Gcal$} & Auction $\widetilde{\Gcal}$ \\
\cmidrule(lr){2-4} \cmidrule(lr){5-5}
{} & $\beta=2\cdot10^2$ & $\beta=10^3$ & $\beta=10^5$ & \texttt{Llama-3.2-3B} \\
\midrule
Compute ($\theta$)   & 32   & 32   & 32   & 16    \\
User value           & 12.56 & 16.28 & 16.71 & 16.32 \\
Price                & 3.21  & 2.19  & 2.40  & 2.58  \\
Provider(s') utility & 0.64  & 0.44  & 0.48  & 1.62  \\
Social welfare       & 13.20 & 16.72 & 17.19 & 17.94 \\
\bottomrule
\end{tabular}
\end{subtable}

\vspace{1em}

\begin{subtable}{\textwidth}
\centering
\begin{tabular}{lcccc}
\toprule
\multicolumn{1}{l}{
\smash{
\begin{tabular}[t]{@{}l@{}}
\small margin $=0.25$ \\
\small $\alpha=\$0.025$
\end{tabular}
}
}
 & \multicolumn{3}{c}{Game $\Gcal$} & Auction $\widetilde{\Gcal}$ \\
\cmidrule(lr){2-4} \cmidrule(lr){5-5}
{} & $\beta=2\cdot10^2$ & $\beta=10^3$ & $\beta=10^5$ & \texttt{Llama-3.2-3B} \\
\midrule
Compute ($\theta$)   & 32   & 16   & 8    & 8     \\
User value           & 0.30 & 2.56 & 3.93 & 3.65  \\
Price                & 2.93 & 1.32 & 0.60 & 0.88  \\
Provider(s') utility & 0.59 & 0.26 & 0.12 & 0.40  \\
Social welfare       & 0.88 & 2.82 & 4.05 & 4.05  \\
\bottomrule
\end{tabular}
\end{subtable}

\end{table}
\begin{table}[H]
\centering
\captionsetup{font=small}
\setlength{\tabcolsep}{4pt}
\renewcommand{\arraystretch}{0.5}
\caption{\textbf{Comparison of the equilibrium between $\Gcal$ and the auction $\widetilde{\Gcal}$ on AIME using chain-of-thought.}
The table shows, for $\widetilde{\Gcal}$, the value received by the user (once the auction is conducted), the price they pay (according to Eq.~\ref{eq:second-price-payment}), the provider's utility, the social welfare, and the provider (LLM) that wins the auction. For $\Gcal$, the table shows the providers' average utility, the users' average value and price, all weighted by their equilibrium market shares, together with the social welfare. We vary the provider margin---used to derive the per-token cost from the per-token price---and the proportionality parameter $\alpha$, which linearly converts a percentage point of accuracy into a monetary value (see Appendix~\ref{app:experimental-details}). All quantities (except the compute) have units $(\$ \times 10^{-3})$.
}
\label{tab:auction-aime-cot}
\vspace{0.5cm}
\begin{subtable}{\textwidth}
\centering
\begin{tabular}{lcccc}
\toprule
\multicolumn{1}{l}{
\smash{
\begin{tabular}[t]{@{}l@{}}
\small margin $=0.25$ \\
\small $\alpha=\$0.05$
\end{tabular}
}
}
 & \multicolumn{3}{c}{Game $\Gcal$} & Auction $\widetilde{\Gcal}$ \\
\cmidrule(lr){2-4} \cmidrule(lr){5-5}
{} & $\beta=2\cdot10^2$ & $\beta=10^3$ & $\beta=10^5$ & \texttt{R1-D-Qwen-7B} \\
\midrule
Compute ($\theta$)   & 3    & 4    & 4    & 1     \\
User value           & 47.64 & 48.86 & 48.99 & 45.84 \\
Price                & 0.26  & 0.33  & 0.34  & 4.16  \\
Provider(s') utility & 0.05  & 0.07  & 0.07  & 3.92  \\
Social welfare       & 47.70 & 48.92 & 49.06 & 49.76 \\
\bottomrule
\end{tabular}
\end{subtable}

\vspace{1em}

\begin{subtable}{\textwidth}
\centering
\begin{tabular}{lcccc}
\toprule
\multicolumn{1}{l}{
\smash{
\begin{tabular}[t]{@{}l@{}}
\small margin $=0.15$ \\
\small $\alpha=\$0.05$
\end{tabular}
}
}
 & \multicolumn{3}{c}{Game $\Gcal$} & Auction $\widetilde{\Gcal}$ \\
\cmidrule(lr){2-4} \cmidrule(lr){5-5}
{} & $\beta=2\cdot10^2$ & $\beta=10^3$ & $\beta=10^5$ & \texttt{R1-D-Qwen-7B} \\
\midrule
Compute ($\theta$)   & 3    & 4    & 4    & 1     \\
User value           & 47.64 & 48.86 & 48.99 & 45.84 \\
Price                & 0.26  & 0.33  & 0.34  & 4.16  \\
Provider(s') utility & 0.05  & 0.07  & 0.07  & 3.92  \\
Social welfare       & 47.70 & 48.92 & 49.06 & 49.76 \\
\bottomrule
\end{tabular}
\end{subtable}

\vspace{1em}

\begin{subtable}{\textwidth}
\centering
\begin{tabular}{lcccc}
\toprule
\multicolumn{1}{l}{
\smash{
\begin{tabular}[t]{@{}l@{}}
\small margin $=0.05$ \\
\small $\alpha=\$0.05$
\end{tabular}
}
}
 & \multicolumn{3}{c}{Game $\Gcal$} & Auction $\widetilde{\Gcal}$ \\
\cmidrule(lr){2-4} \cmidrule(lr){5-5}
{} & $\beta=2\cdot10^2$ & $\beta=10^3$ & $\beta=10^5$ & \texttt{R1-D-Qwen-7B} \\
\midrule
Compute ($\theta$)   & 3    & 4    & 4    & 1     \\
User value           & 47.64 & 48.86 & 48.99 & 45.84 \\
Price                & 0.26  & 0.33  & 0.34  & 4.16  \\
Provider(s') utility & 0.05  & 0.07  & 0.07  & 3.92  \\
Social welfare       & 47.70 & 48.92 & 49.06 & 49.76 \\
\bottomrule
\end{tabular}
\end{subtable}

\vspace{1em}

\begin{subtable}{\textwidth}
\centering
\begin{tabular}{lcccc}
\toprule
\multicolumn{1}{l}{
\smash{
\begin{tabular}[t]{@{}l@{}}
\small margin $=0.25$ \\
\small $\alpha=\$0.1$
\end{tabular}
}
}
 & \multicolumn{3}{c}{Game $\Gcal$} & Auction $\widetilde{\Gcal}$ \\
\cmidrule(lr){2-4} \cmidrule(lr){5-5}
{} & $\beta=2\cdot10^2$ & $\beta=10^3$ & $\beta=10^5$ & \texttt{R1-D-Qwen-7B} \\
\midrule
Compute ($\theta$)   & 3    & 4    & 4    & 1     \\
User value           & 97.90 & 98.31 & 98.32 & 91.82 \\
Price                & 0.29  & 0.34  & 0.34  & 8.18  \\
Provider(s') utility & 0.06  & 0.07  & 0.07  & 7.95  \\
Social welfare       & 97.96 & 98.38 & 98.38 & 99.76 \\
\bottomrule
\end{tabular}
\end{subtable}

\vspace{1em}

\begin{subtable}{\textwidth}
\centering
\begin{tabular}{lcccc}
\toprule
\multicolumn{1}{l}{
\smash{
\begin{tabular}[t]{@{}l@{}}
\small margin $=0.25$ \\
\small $\alpha=\$0.025$
\end{tabular}
}
}
 & \multicolumn{3}{c}{Game $\Gcal$} & Auction $\widetilde{\Gcal}$ \\
\cmidrule(lr){2-4} \cmidrule(lr){5-5}
{} & $\beta=2\cdot10^2$ & $\beta=10^3$ & $\beta=10^5$ & \texttt{R1-D-Qwen-7B} \\
\midrule
Compute ($\theta$)   & 4    & 3    & 4    & 1     \\
User value           & 22.72 & 24.37 & 24.32 & 22.86 \\
Price                & 0.26  & 0.30  & 0.34  & 2.14  \\
Provider(s') utility & 0.05  & 0.06  & 0.07  & 1.91  \\
Social welfare       & 22.77 & 24.43 & 24.39 & 24.76 \\
\bottomrule
\end{tabular}
\end{subtable}

\end{table}


\subsubsection{Results of the auction mechanism on GPQA}

\begin{table}[H]
\centering
\captionsetup{font=small}
\setlength{\tabcolsep}{4pt}
\renewcommand{\arraystretch}{0.5}
\caption{\textbf{Comparison of the equilibrium between $\Gcal$ and the auction $\widetilde{\Gcal}$ on GPQA using majority voting.}
The table shows, for $\widetilde{\Gcal}$, the value received by the user (once the auction is conducted), the price they pay (according to Eq.~\ref{eq:second-price-payment}), the provider's utility, the social welfare, and the provider (LLM) that wins the auction. For $\Gcal$, the table shows the providers' average utility, the users' average value and price, all weighted by their equilibrium market shares, together with the social welfare. We vary the provider margin---used to derive the per-token cost from the per-token price---and the proportionality parameter $\alpha$, which linearly converts a percentage point of accuracy into a monetary value (see Appendix~\ref{app:experimental-details}). All quantities (except the compute) have units $(\$ \times 10^{-3})$.
}
\label{tab:auction-GPQA-maj}
\vspace{0.5cm}
\begin{subtable}{\textwidth}
\centering
\begin{tabular}{lcccc}
\toprule
\multicolumn{1}{l}{
\smash{
\begin{tabular}[t]{@{}l@{}}
\small margin $=0.25$ \\
\small $\alpha=\$0.02$
\end{tabular}
}
}
 & \multicolumn{3}{c}{Game $\Gcal$} & Auction $\widetilde{\Gcal}$ \\
\cmidrule(lr){2-4} \cmidrule(lr){5-5}
{} & $\beta=2\cdot10^2$ & $\beta=10^3$ & $\beta=10^5$ & \texttt{Llama-3-8B} \\
\midrule
Compute ($\theta$)   & 32   & 8    & 4    & 2     \\
User value           & 4.53 & 6.13 & 7.10 & 7.18  \\
Price                & 2.33 & 1.21 & 0.38 & 0.20  \\
Provider(s') utility & 0.47 & 0.24 & 0.08 & 0.05  \\
Social welfare       & 4.99 & 6.37 & 7.18 & 7.23  \\
\bottomrule
\end{tabular}
\end{subtable}

\vspace{1em}

\begin{subtable}{\textwidth}
\centering
\begin{tabular}{lcccc}
\toprule
\multicolumn{1}{l}{
\smash{
\begin{tabular}[t]{@{}l@{}}
\small margin $=0.15$ \\
\small $\alpha=\$0.02$
\end{tabular}
}
}
 & \multicolumn{3}{c}{Game $\Gcal$} & Auction $\widetilde{\Gcal}$ \\
\cmidrule(lr){2-4} \cmidrule(lr){5-5}
{} & $\beta=2\cdot10^2$ & $\beta=10^3$ & $\beta=10^5$ & \texttt{Llama-3-8B} \\
\midrule
Compute ($\theta$)   & 32   & 8    & 4    & 2     \\
User value           & 4.53 & 6.13 & 7.10 & 7.15  \\
Price                & 2.33 & 1.21 & 0.38 & 0.23  \\
Provider(s') utility & 0.30 & 0.16 & 0.05 & 0.07  \\
Social welfare       & 4.83 & 6.29 & 7.15 & 7.22  \\
\bottomrule
\end{tabular}
\end{subtable}

\vspace{1em}

\begin{subtable}{\textwidth}
\centering
\begin{tabular}{lcccc}
\toprule
\multicolumn{1}{l}{
\smash{
\begin{tabular}[t]{@{}l@{}}
\small margin $=0.05$ \\
\small $\alpha=\$0.02$
\end{tabular}
}
}
 & \multicolumn{3}{c}{Game $\Gcal$} & Auction $\widetilde{\Gcal}$ \\
\cmidrule(lr){2-4} \cmidrule(lr){5-5}
{} & $\beta=2\cdot10^2$ & $\beta=10^3$ & $\beta=10^5$ & \texttt{Llama-3-8B} \\
\midrule
Compute ($\theta$)   & 32   & 8    & 4    & 2     \\
User value           & 4.53 & 6.13 & 7.10 & 7.11  \\
Price                & 2.33 & 1.21 & 0.38 & 0.27  \\
Provider(s') utility & 0.11 & 0.06 & 0.02 & 0.10  \\
Social welfare       & 4.64 & 6.19 & 7.12 & 7.20  \\
\bottomrule
\end{tabular}
\end{subtable}

\vspace{1em}

\begin{subtable}{\textwidth}
\centering
\begin{tabular}{lcccc}
\toprule
\multicolumn{1}{l}{
\smash{
\begin{tabular}[t]{@{}l@{}}
\small margin $=0.25$ \\
\small $\alpha=\$0.04$
\end{tabular}
}
}
 & \multicolumn{3}{c}{Game $\Gcal$} & Auction $\widetilde{\Gcal}$ \\
\cmidrule(lr){2-4} \cmidrule(lr){5-5}
{} & $\beta=2\cdot10^2$ & $\beta=10^3$ & $\beta=10^5$ & \texttt{Qwen2-7B} \\
\midrule
Compute ($\theta$)   & 32   & 16   & 4    & 4     \\
User value           & 11.82 & 13.47 & 14.66 & 14.65 \\
Price                & 2.51  & 1.54  & 0.49  & 0.50  \\
Provider(s') utility & 0.50  & 0.31  & 0.10  & 0.10  \\
Social welfare       & 12.32 & 13.78 & 14.75 & 14.75 \\
\bottomrule
\end{tabular}
\end{subtable}

\vspace{1em}

\begin{subtable}{\textwidth}
\centering
\begin{tabular}{lcccc}
\toprule
\multicolumn{1}{l}{
\smash{
\begin{tabular}[t]{@{}l@{}}
\small margin $=0.25$ \\
\small $\alpha=\$0.01$
\end{tabular}
}
}
 & \multicolumn{3}{c}{Game $\Gcal$} & Auction $\widetilde{\Gcal}$ \\
\cmidrule(lr){2-4} \cmidrule(lr){5-5}
{} & $\beta=2\cdot10^2$ & $\beta=10^3$ & $\beta=10^5$ & \texttt{Llama-3-8B} \\
\midrule
Compute ($\theta$)   & 32   & 8    & 2    & 2     \\
User value           & 1.08 & 2.43 & 3.50 & 3.46  \\
Price                & 2.18 & 1.13 & 0.19 & 0.23  \\
Provider(s') utility & 0.44 & 0.23 & 0.04 & 0.08  \\
Social welfare       & 1.52 & 2.65 & 3.54 & 3.54  \\
\bottomrule
\end{tabular}
\end{subtable}

\end{table}


\begin{table}[H]
\centering
\captionsetup{font=small}
\setlength{\tabcolsep}{4pt}
\renewcommand{\arraystretch}{0.5}
\caption{\textbf{Comparison of the equilibrium between $\Gcal$ and the auction $\widetilde{\Gcal}$ on GPQA using best-of-n.}
The table shows, for $\widetilde{\Gcal}$, the value received by the user (once the auction is conducted), the price they pay (according to Eq.~\ref{eq:second-price-payment}), the provider's utility, the social welfare, and the provider (LLM) that wins the auction. For $\Gcal$, the table shows the providers' average utility, the users' average value and price, all weighted by their equilibrium market shares, together with the social welfare. We vary the provider margin---used to derive the per-token cost from the per-token price---and the proportionality parameter $\alpha$, which linearly converts a percentage point of accuracy into a monetary value (see Appendix~\ref{app:experimental-details}). All quantities (except the compute) have units $(\$ \times 10^{-3})$.
}
\label{tab:auction-GPQA-bon}
\vspace{0.5cm}
\begin{subtable}{\textwidth}
\centering
\begin{tabular}{lcccc}
\toprule
\multicolumn{1}{l}{
\smash{
\begin{tabular}[t]{@{}l@{}}
\small margin $=0.25$ \\
\small $\alpha=\$0.02$
\end{tabular}
}
}
 & \multicolumn{3}{c}{Game $\Gcal$} & Auction $\widetilde{\Gcal}$ \\
\cmidrule(lr){2-4} \cmidrule(lr){5-5}
{} & $\beta=2\cdot10^2$ & $\beta=10^3$ & $\beta=10^5$ & \texttt{Llama-3-8B} \\
\midrule
Compute ($\theta$)   & 32   & 16   & 4    & 2     \\
User value           & 4.15 & 5.94 & 6.93 & 6.84  \\
Price                & 2.35 & 1.15 & 0.37 & 0.39  \\
Provider(s') utility & 0.47 & 0.23 & 0.07 & 0.24  \\
Social welfare       & 4.62 & 6.17 & 7.01 & 7.08  \\
\bottomrule
\end{tabular}
\end{subtable}

\vspace{1em}

\begin{subtable}{\textwidth}
\centering
\begin{tabular}{lcccc}
\toprule
\multicolumn{1}{l}{
\smash{
\begin{tabular}[t]{@{}l@{}}
\small margin $=0.15$ \\
\small $\alpha=\$0.02$
\end{tabular}
}
}
 & \multicolumn{3}{c}{Game $\Gcal$} & Auction $\widetilde{\Gcal}$ \\
\cmidrule(lr){2-4} \cmidrule(lr){5-5}
{} & $\beta=2\cdot10^2$ & $\beta=10^3$ & $\beta=10^5$ & \texttt{Llama-3-8B} \\
\midrule
Compute ($\theta$)   & 32   & 16   & 4    & 2     \\
User value           & 4.15 & 5.94 & 6.93 & 6.81  \\
Price                & 2.35 & 1.15 & 0.37 & 0.43  \\
Provider(s') utility & 0.31 & 0.15 & 0.05 & 0.26  \\
Social welfare       & 4.46 & 6.09 & 6.98 & 7.07  \\
\bottomrule
\end{tabular}
\end{subtable}

\vspace{1em}

\begin{subtable}{\textwidth}
\centering
\begin{tabular}{lcccc}
\toprule
\multicolumn{1}{l}{
\smash{
\begin{tabular}[t]{@{}l@{}}
\small margin $=0.05$ \\
\small $\alpha=\$0.02$
\end{tabular}
}
}
 & \multicolumn{3}{c}{Game $\Gcal$} & Auction $\widetilde{\Gcal}$ \\
\cmidrule(lr){2-4} \cmidrule(lr){5-5}
{} & $\beta=2\cdot10^2$ & $\beta=10^3$ & $\beta=10^5$ & \texttt{Llama-3-8B} \\
\midrule
Compute ($\theta$)   & 32   & 16   & 4    & 2     \\
User value           & 4.15 & 5.94 & 6.93 & 6.77  \\
Price                & 2.35 & 1.15 & 0.37 & 0.47  \\
Provider(s') utility & 0.11 & 0.05 & 0.02 & 0.29  \\
Social welfare       & 4.27 & 5.99 & 6.95 & 7.06  \\
\bottomrule
\end{tabular}
\end{subtable}

\vspace{1em}

\begin{subtable}{\textwidth}
\centering
\begin{tabular}{lcccc}
\toprule
\multicolumn{1}{l}{
\smash{
\begin{tabular}[t]{@{}l@{}}
\small margin $=0.25$ \\
\small $\alpha=\$0.04$
\end{tabular}
}
}
 & \multicolumn{3}{c}{Game $\Gcal$} & Auction $\widetilde{\Gcal}$ \\
\cmidrule(lr){2-4} \cmidrule(lr){5-5}
{} & $\beta=2\cdot10^2$ & $\beta=10^3$ & $\beta=10^5$ & \texttt{Llama-3-8B} \\
\midrule
Compute ($\theta$)   & 32   & 8    & 4    & 2     \\
User value           & 11.20 & 13.07 & 14.24 & 14.09 \\
Price                & 2.54  & 1.50  & 0.37  & 0.38  \\
Provider(s') utility & 0.51  & 0.30  & 0.07  & 0.23  \\
Social welfare       & 11.71 & 13.37 & 14.31 & 14.32 \\
\bottomrule
\end{tabular}
\end{subtable}

\vspace{1em}

\begin{subtable}{\textwidth}
\centering
\begin{tabular}{lcccc}
\toprule
\multicolumn{1}{l}{
\smash{
\begin{tabular}[t]{@{}l@{}}
\small margin $=0.25$ \\
\small $\alpha=\$0.01$
\end{tabular}
}
}
 & \multicolumn{3}{c}{Game $\Gcal$} & Auction $\widetilde{\Gcal}$ \\
\cmidrule(lr){2-4} \cmidrule(lr){5-5}
{} & $\beta=2\cdot10^2$ & $\beta=10^3$ & $\beta=10^5$ & \texttt{Llama-3-8B} \\
\midrule
Compute ($\theta$)   & 32   & 16   & 4    & 2     \\
User value           & 0.89 & 2.31 & 3.28 & 3.27  \\
Price                & 2.19 & 1.12 & 0.37 & 0.34  \\
Provider(s') utility & 0.44 & 0.22 & 0.07 & 0.20  \\
Social welfare       & 1.32 & 2.53 & 3.36 & 3.47  \\
\bottomrule
\end{tabular}
\end{subtable}

\end{table}


\begin{table}[H]
\centering
\captionsetup{font=small}
\setlength{\tabcolsep}{4pt}
\renewcommand{\arraystretch}{0.5}
\caption{\textbf{Comparison of the equilibrium between $\Gcal$ and the auction $\widetilde{\Gcal}$ on GPQA using chain-of-thought.}
The table shows, for $\widetilde{\Gcal}$, the value received by the user (once the auction is conducted), the price they pay (according to Eq.~\ref{eq:second-price-payment}), the provider's utility, the social welfare, and the provider (LLM) that wins the auction. For $\Gcal$, the table shows the providers' average utility, the users' average value and price, all weighted by their equilibrium market shares, together with the social welfare. We vary the provider margin---used to derive the per-token cost from the per-token price---and the proportionality parameter $\alpha$, which linearly converts a percentage point of accuracy into a monetary value (see Appendix~\ref{app:experimental-details}). All quantities (except the compute) have units $(\$ \times 10^{-3})$.
}
\label{tab:auction-GPQA-cot}
\vspace{0.5cm}
\begin{subtable}{\textwidth}
\centering
\begin{tabular}{lcccc}
\toprule
\multicolumn{1}{l}{
\smash{
\begin{tabular}[t]{@{}l@{}}
\small margin $=0.25$ \\
\small $\alpha=\$0.02$
\end{tabular}
}
}
 & \multicolumn{3}{c}{Game $\Gcal$} & Auction $\widetilde{\Gcal}$ \\
\cmidrule(lr){2-4} \cmidrule(lr){5-5}
{} & $\beta=2\cdot10^2$ & $\beta=10^3$ & $\beta=10^5$ & \texttt{R1-D-Qwen-7B} \\
\midrule
Compute ($\theta$)   & 4    & 3    & 4    & 2     \\
User value           & 13.95 & 15.45 & 15.35 & 15.25 \\
Price                & 0.24  & 0.24  & 0.31  & 0.91  \\
Provider(s') utility & 0.05  & 0.05  & 0.06  & 0.72  \\
Social welfare       & 14.00 & 15.50 & 15.42 & 15.97 \\
\bottomrule
\end{tabular}
\end{subtable}

\vspace{1em}

\begin{subtable}{\textwidth}
\centering
\begin{tabular}{lcccc}
\toprule
\multicolumn{1}{l}{
\smash{
\begin{tabular}[t]{@{}l@{}}
\small margin $=0.15$ \\
\small $\alpha=\$0.02$
\end{tabular}
}
}
 & \multicolumn{3}{c}{Game $\Gcal$} & Auction $\widetilde{\Gcal}$ \\
\cmidrule(lr){2-4} \cmidrule(lr){5-5}
{} & $\beta=2\cdot10^2$ & $\beta=10^3$ & $\beta=10^5$ & \texttt{R1-D-Qwen-7B} \\
\midrule
Compute ($\theta$)   & 4    & 3    & 4    & 2     \\
User value           & 13.95 & 15.45 & 15.35 & 15.25 \\
Price                & 0.24  & 0.24  & 0.31  & 0.91  \\
Provider(s') utility & 0.05  & 0.05  & 0.06  & 0.72  \\
Social welfare       & 14.00 & 15.50 & 15.42 & 15.97 \\
\bottomrule
\end{tabular}
\end{subtable}

\vspace{1em}

\begin{subtable}{\textwidth}
\centering
\begin{tabular}{lcccc}
\toprule
\multicolumn{1}{l}{
\smash{
\begin{tabular}[t]{@{}l@{}}
\small margin $=0.05$ \\
\small $\alpha=\$0.02$
\end{tabular}
}
}
 & \multicolumn{3}{c}{Game $\Gcal$} & Auction $\widetilde{\Gcal}$ \\
\cmidrule(lr){2-4} \cmidrule(lr){5-5}
{} & $\beta=2\cdot10^2$ & $\beta=10^3$ & $\beta=10^5$ & \texttt{R1-D-Qwen-7B} \\
\midrule
Compute ($\theta$)   & 4    & 3    & 4    & 2     \\
User value           & 13.95 & 15.45 & 15.35 & 15.25 \\
Price                & 0.24  & 0.24  & 0.31  & 0.91  \\
Provider(s') utility & 0.05  & 0.05  & 0.06  & 0.72  \\
Social welfare       & 14.00 & 15.50 & 15.42 & 15.97 \\
\bottomrule
\end{tabular}
\end{subtable}

\vspace{1em}

\begin{subtable}{\textwidth}
\centering
\begin{tabular}{lcccc}
\toprule
\multicolumn{1}{l}{
\smash{
\begin{tabular}[t]{@{}l@{}}
\small margin $=0.25$ \\
\small $\alpha=\$0.04$
\end{tabular}
}
}
 & \multicolumn{3}{c}{Game $\Gcal$} & Auction $\widetilde{\Gcal}$ \\
\cmidrule(lr){2-4} \cmidrule(lr){5-5}
{} & $\beta=2\cdot10^2$ & $\beta=10^3$ & $\beta=10^5$ & \texttt{R1-D-Qwen-7B} \\
\midrule
Compute ($\theta$)   & 4    & 3    & 4    & 2     \\
User value           & 29.82 & 31.53 & 31.02 & 30.62 \\
Price                & 0.24  & 0.24  & 0.31  & 1.70  \\
Provider(s') utility & 0.05  & 0.05  & 0.06  & 1.51  \\
Social welfare       & 29.87 & 31.57 & 31.08 & 32.13 \\
\bottomrule
\end{tabular}
\end{subtable}

\vspace{1em}

\begin{subtable}{\textwidth}
\centering
\begin{tabular}{lcccc}
\toprule
\multicolumn{1}{l}{
\smash{
\begin{tabular}[t]{@{}l@{}}
\small margin $=0.25$ \\
\small $\alpha=\$0.01$
\end{tabular}
}
}
 & \multicolumn{3}{c}{Game $\Gcal$} & Auction $\widetilde{\Gcal}$ \\
\cmidrule(lr){2-4} \cmidrule(lr){5-5}
{} & $\beta=2\cdot10^2$ & $\beta=10^3$ & $\beta=10^5$ & \texttt{R1-D-Qwen-7B} \\
\midrule
Compute ($\theta$)   & 4    & 4    & 3    & 1     \\
User value           & 6.37 & 7.32 & 7.75 & 7.56  \\
Price                & 0.22 & 0.24 & 0.27 & 0.50  \\
Provider(s') utility & 0.04 & 0.05 & 0.05 & 0.32  \\
Social welfare       & 6.41 & 7.37 & 7.81 & 7.89  \\
\bottomrule
\end{tabular}
\end{subtable}

\end{table}


\clearpage
\newpage

\subsubsection{Results of the auction mechanism on TruthfulQA}

\begin{table}[H]
\centering
\captionsetup{font=small}
\setlength{\tabcolsep}{4pt}
\renewcommand{\arraystretch}{0.5}
\caption{\textbf{Comparison of the equilibrium between $\Gcal$ and the auction $\widetilde{\Gcal}$ on TruthfulQA using best-of-n.}
The table shows, for $\widetilde{\Gcal}$, the value received by the user (once the auction is conducted), the price they pay (according to Eq.~\ref{eq:second-price-payment}), the provider's utility, the social welfare, and the provider (LLM) that wins the auction. For $\Gcal$, the table shows the providers' average utility, the users' average value and price, all weighted by their equilibrium market shares, together with the social welfare. We vary the provider margin---used to derive the per-token cost from the per-token price---and the proportionality parameter $\alpha$, which linearly converts a percentage point of accuracy into a monetary value (see Appendix~\ref{app:experimental-details}). All quantities (except the compute) have units $(\$ \times 10^{-3})$.
}
\label{tab:auction-TruthfulQA-bon}
\vspace{0.5cm}
\begin{subtable}{\textwidth}
\centering
\begin{tabular}{lcccc}
\toprule
\multicolumn{1}{l}{
\smash{
\begin{tabular}[t]{@{}l@{}}
\small margin $=0.25$ \\
\small $\alpha=\$0.008$
\end{tabular}
}
}
 & \multicolumn{3}{c}{Game $\Gcal$} & Auction $\widetilde{\Gcal}$ \\
\cmidrule(lr){2-4} \cmidrule(lr){5-5}
{} & $\beta=2\cdot10^2$ & $\beta=10^3$ & $\beta=10^5$ & \texttt{Qwen2.5-3B} \\
\midrule
Compute ($\theta$)   & 32   & 32   & 32   & 16    \\
User value           & 1.17 & 1.25 & 1.58 & 1.58  \\
Price                & 0.21 & 0.22 & 0.10 & 0.10  \\
Provider(s') utility & 0.04 & 0.04 & 0.02 & 0.04  \\
Social welfare       & 1.22 & 1.30 & 1.60 & 1.62  \\
\bottomrule
\end{tabular}
\end{subtable}

\vspace{1em}

\begin{subtable}{\textwidth}
\centering
\begin{tabular}{lcccc}
\toprule
\multicolumn{1}{l}{
\smash{
\begin{tabular}[t]{@{}l@{}}
\small margin $=0.15$ \\
\small $\alpha=\$0.008$
\end{tabular}
}
}
 & \multicolumn{3}{c}{Game $\Gcal$} & Auction $\widetilde{\Gcal}$ \\
\cmidrule(lr){2-4} \cmidrule(lr){5-5}
{} & $\beta=2\cdot10^2$ & $\beta=10^3$ & $\beta=10^5$ & \texttt{Qwen2.5-3B} \\
\midrule
Compute ($\theta$)   & 32   & 32   & 32   & 4     \\
User value           & 1.17 & 1.25 & 1.58 & 1.58  \\
Price                & 0.21 & 0.22 & 0.10 & 0.06  \\
Provider(s') utility & 0.03 & 0.03 & 0.01 & 0.04  \\
Social welfare       & 1.20 & 1.28 & 1.59 & 1.62  \\
\bottomrule
\end{tabular}
\end{subtable}

\vspace{1em}

\begin{subtable}{\textwidth}
\centering
\begin{tabular}{lcccc}
\toprule
\multicolumn{1}{l}{
\smash{
\begin{tabular}[t]{@{}l@{}}
\small margin $=0.05$ \\
\small $\alpha=\$0.008$
\end{tabular}
}
}
 & \multicolumn{3}{c}{Game $\Gcal$} & Auction $\widetilde{\Gcal}$ \\
\cmidrule(lr){2-4} \cmidrule(lr){5-5}
{} & $\beta=2\cdot10^2$ & $\beta=10^3$ & $\beta=10^5$ & \texttt{Qwen2.5-3B} \\
\midrule
Compute ($\theta$)   & 32   & 32   & 32   & 4     \\
User value           & 1.17 & 1.25 & 1.58 & 1.58  \\
Price                & 0.21 & 0.22 & 0.10 & 0.06  \\
Provider(s') utility & 0.01 & 0.01 & 0.00 & 0.05  \\
Social welfare       & 1.18 & 1.26 & 1.58 & 1.62  \\
\bottomrule
\end{tabular}
\end{subtable}

\vspace{1em}

\begin{subtable}{\textwidth}
\centering
\begin{tabular}{lcccc}
\toprule
\multicolumn{1}{l}{
\smash{
\begin{tabular}[t]{@{}l@{}}
\small margin $=0.25$ \\
\small $\alpha=\$0.016$
\end{tabular}
}
}
 & \multicolumn{3}{c}{Game $\Gcal$} & Auction $\widetilde{\Gcal}$ \\
\cmidrule(lr){2-4} \cmidrule(lr){5-5}
{} & $\beta=2\cdot10^2$ & $\beta=10^3$ & $\beta=10^5$ & \texttt{Qwen2.5-3B} \\
\midrule
Compute ($\theta$)   & 32   & 32   & 32   & 32    \\
User value           & 2.62 & 2.83 & 3.29 & 3.19  \\
Price                & 0.22 & 0.23 & 0.13 & 0.23  \\
Provider(s') utility & 0.04 & 0.05 & 0.03 & 0.12  \\
Social welfare       & 2.66 & 2.87 & 3.32 & 3.32  \\
\bottomrule
\end{tabular}
\end{subtable}

\vspace{1em}

\begin{subtable}{\textwidth}
\centering
\begin{tabular}{lcccc}
\toprule
\multicolumn{1}{l}{
\smash{
\begin{tabular}[t]{@{}l@{}}
\small margin $=0.25$ \\
\small $\alpha=\$0.004$
\end{tabular}
}
}
 & \multicolumn{3}{c}{Game $\Gcal$} & Auction $\widetilde{\Gcal}$ \\
\cmidrule(lr){2-4} \cmidrule(lr){5-5}
{} & $\beta=2\cdot10^2$ & $\beta=10^3$ & $\beta=10^5$ & \texttt{Qwen2.5-3B} \\
\midrule
Compute ($\theta$)   & 32   & 32   & 8    & 2     \\
User value           & 0.47 & 0.49 & 0.79 & 0.78  \\
Price                & 0.21 & 0.21 & 0.03 & 0.03  \\
Provider(s') utility & 0.04 & 0.04 & 0.01 & 0.03  \\
Social welfare       & 0.51 & 0.54 & 0.79 & 0.81  \\
\bottomrule
\end{tabular}
\end{subtable}

\end{table}



\clearpage
\newpage

\end{document}